\documentclass[letterpaper,11pt,UKenglish]{article}

\usepackage[margin=1in]{geometry}
\usepackage[ttscale=.9]{libertine}
\usepackage[T1]{fontenc}
\usepackage{booktabs}

\usepackage{graphicx}
\usepackage{amsmath,amsthm,amssymb}
\usepackage{xspace}
\usepackage[colorlinks=true, allcolors=blue]{hyperref}
\usepackage{thm-restate}
\usepackage[table]{xcolor}
\usepackage{array}

\newcommand{\newbound}[1]{%
  \begingroup
  \setlength{\fboxsep}{2pt}%
  \colorbox{yellow!45}{\ensuremath{\displaystyle #1}}%
  \endgroup
}

\theoremstyle{plain}

\newtheorem{theorem}{Theorem}
\newtheorem{lemma}[theorem]{Lemma}
\newtheorem{corollary}[theorem]{Corollary}

\newtheorem*{claim*}{Claim}

\theoremstyle{definition}

\newcommand{\bigO}[1]{\mathcal{O}{\left(#1\right)}}
\newcommand{\bigOm}[1]{\Omega{\left(#1\right)}}
\newcommand{\bigTh}[1]{\Theta{\left(#1\right)}}

\newcommand{\etal}{\emph{et~al.}\xspace}
\newcommand{\eps}{\varepsilon}
\newcommand{\cD}{\mathcal{D}}
\newcommand{\cN}{\mathcal{N}}

\newcommand{\tD}{\widetilde{D}}
\newcommand{\Hyp}{\mathbb{H}}
\newcommand{\Euc}{\mathbb{E}}
\newcommand{\Reals}{\mathbb{R}}

\renewcommand{\angle}{\sphericalangle}
\newcommand\poly{\ensuremath{\mathrm{poly}}}

\DeclareMathOperator\arsinh{arsinh}

\newcommand{\inter}{\mathrm{int}}

\usepackage{todonotes}

\title{Fast Thick-Thin Decomposition for Sparse Spanners on Hyperbolic Surfaces}
\date{}
\author{Sándor Kisfaludi-Bak\thanks{Department of Computer Science, Aalto University, Espoo, Finland, \textsf{sandor.kisfaludi-bak@aalto.fi}.
This work was supported by the Research Council of Finland, Grant 363444.}
\and
Geert van Wordragen\thanks{Department of Computer Science, Aalto
    University, Espoo, Finland, \textsf{geert.vanwordragen@aalto.fi}}
}

\begin{document}

\maketitle

\begin{abstract}
    We consider spanners for point sets lying in the hyperbolic plane or on a closed hyperbolic surface with the restriction that spanner edges are not allowed to cross.
    This is a natural generalization of non-crossing Euclidean spanners. Thus, the resulting spanner graphs are embedded in the hyperbolic plane or on the hyperbolic surface.
    As our main contribution, we show that there are sparse $(1+\varepsilon)$-spanners for these problems when we are allowed to use Steiner points:
    \begin{itemize}
        \item on the hyperbolic plane we get a non-crossing Steiner $(1+\varepsilon)$-spanner with $\mathcal{O}(n / \varepsilon^2)$ edges,
        \item on hyperbolic surfaces of genus $g$ we get a Steiner $(1+\varepsilon)$-spanner
        \begin{itemize}
            \item with $\mathcal{O}(n / \varepsilon^{3/2} + g/\varepsilon^2)$ non-crossing edges, or
            \item with $\mathcal{O}(n / \sqrt{\varepsilon} + g/\varepsilon)$ edges that are allowed to cross.
        \end{itemize}
    \end{itemize}
    In particular, our spanners on surfaces have sparsity with linear dependence on $g$, rather than the easier-to-attain exponential dependence, and the terms $n/\varepsilon^{3/2}$ and $n/\sqrt{\varepsilon}$ match the current best Euclidean results for Steiner spanners with and without crossings.

    As a corollary of our non-crossing spanner and techniques from the existing literature on light spanners and minor-free TSP, we get an EPTAS for TSP on hyperbolic surfaces.

    Our surface constructions rely on the thick--thin decomposition, a standard tool for studying hyperbolic surfaces. For convex hyperbolic polygons, we introduce an analogous neck decomposition. We give algorithms that compute the thick--thin decomposition of a genus-$g$ surface in $\mathcal{O}(g^4\log g)$ time and the neck decomposition of an $n$-vertex polygon in $\mathcal{O}(n)$ time.

    
\end{abstract}
\thispagestyle{empty}
\clearpage

\setcounter{page}{1}
\section{Introduction}

Spanners are sparse graphs that preserve distances in a metric space up to a small multiplicative error.
In Euclidean and doubling metrics, such graphs are a central tool in geometric algorithms, and a rich theory is known even under additional topological restrictions such as planarity or non-crossing edges.
This paper asks to what extent this theory survives on hyperbolic surfaces: can one construct sparse Steiner $(1+\eps)$-spanners whose edges are embedded on the surface, and can this be done with dependence on the genus comparable to what one would hope for in bounded-genus graph algorithms?

An immediate obstacle is that hyperbolic surfaces combine finite area with potentially unbounded diameter.
A genus-$g$ closed hyperbolic surface has area $4\pi(g-1)$, but arbitrarily thin collars may force shortest paths to travel through long tube-like regions.
Thus local Euclidean spanner constructions cannot simply be patched together using a bounded-size global net.

The broader spanner literature provides a good starting point for what should be possible.
Classical constructions range from sparse weighted-graph spanners and Euclidean spanners to plane geometric spanners~\cite{Althofer93,Chew89,AryaDMS95}; spanners also underlie algorithmic distance structures such as well-separated pair decompositions and approximate distance oracles~\cite{CallahanK95,ThorupZ05}.
Sparse spanners have been developed extensively for metrics of bounded doubling dimension, using hierarchical nets and related decompositions~\cite{HarPeledM06}, with later work obtaining light spanners~\cite{Gottlieb15,BorradaileLW19}.
While the study of spanners and related sketching structures remains very active~\cite{LeS23,LeMS23,FiltserGN24,BhoreKKLLPT25,BodwinF25,BhoreCF26}, most of this work concerns doubling spaces, geometric graph classes, or graph metrics, with only a handful of papers considering non-doubling curved spaces~\cite{KrauthgamerL06,additivespanner,steinerspanner,Kisfaludi-BakW25} and only one that considers hyperbolic surfaces~\cite{Kisfaludi-BakW25}.

We next recall why the Steiner spanner setting is the right one for non-crossing $(1+\eps)$-spanners.
In the Euclidean plane, there has been a lot of work on \emph{plane spanners}, sometimes also called \emph{non-crossing} spanners, which are spanners whose edges do not cross, i.e., they form embedded planar graphs (henceforth called plane graphs).
The survey by Bose and Smid~\cite{planespannersurvey} covers much of what is known about plane spanners.
Plane spanners preserve the topology of the point set. They have $\bigO{n}$ edges and balanced separators of size $\bigO{\sqrt n}$~\cite{LiptonT79}, but it is a heavy restriction: there is a stretch lower bound of $1.4308$~\cite{planelowerbound} for Euclidean plane spanners.
In the hyperbolic plane, there is already a stretch lower bound of $2$ for spanners with a subquadratic number of edges, even if the edges are allowed to intersect~\cite{steinerspanner}.

To avoid these stretch lower bounds, one must add a set $S$ of \emph{Steiner points} to the initial point set $P$, and construct a geometric graph on $P\cup S$ such that, for every $p,q\in P$, the graph distance between $p$ and $q$ is at most $t$ times their geometric distance.
Clearly we can now make any (Steiner) spanner non-crossing by adding Steiner points at the edge intersections, but this will typically give a large edge count.
In particular, doing this to the complete graph yields a non-crossing Steiner $1$-spanner with $\bigO{n^4}$ Steiner points and edges in any planar metric.
With a more careful construction, one can make a plane Steiner $(1+\eps)$-spanner with $\bigO{n / \eps^2}$ edges for $L_1$ distances and a $(1+\eps)$-spanner for Euclidean distances with $\bigO{n / \eps^4}$ edges~\cite{planesteiner} (also explained in~\cite{planespannersurvey, zeh2002efficient}).
When all angles in the Delaunay triangulation are at least~$\alpha$, Borradaile and Eppstein~\cite{wellspaced} give a Euclidean plane Steiner $(1+\eps)$-spanner that only has $\bigO{\frac{n}{\alpha^2 \eps^3} \log^2 \frac{1}{\alpha\eps}}$ edges (while also focusing on \emph{lightness}).
Recently, these results were improved to $\bigO{n / \eps^{3/2}}$ edges and this was shown to be almost tight:
there is a lower bound $\bigOm{n / \eps^{3/2 - \mu}}$ for any fixed $\mu > 0$~\cite{planesteinernew}.
Thus the Euclidean plane gives the natural target bounds: $\bigO{n/\sqrt{\eps}}$ for Steiner spanners with crossings and $\bigO{n/\eps^{3/2}}$ for non-crossing Steiner spanners.
Our goal is to obtain linear-size spanners with similar dependencies on $\eps$, and small (preferably polynomial) dependence on the genus.

We now turn to the surface setting.
By the uniformisation theorem, all closed orientable surfaces other than the sphere and the torus can be endowed with a hyperbolic metric, turning them into hyperbolic surfaces.
This makes them an important object of study in (computational) topology.
Hyperbolic metrics provide a standard geometric model for surfaces of negative Euler characteristic, allowing one to think of topological objects such as curves, decompositions, and cut systems as metric objects built from geodesics, collars, and injectivity radii~\cite{thurston97three,Buser10}.
This geometric viewpoint is also central in computational topology, where many algorithmic questions concern graphs, curves, homotopy, and optimization problems on surfaces~\cite{Colin17}.


\begin{figure}
    \centering
    \includegraphics{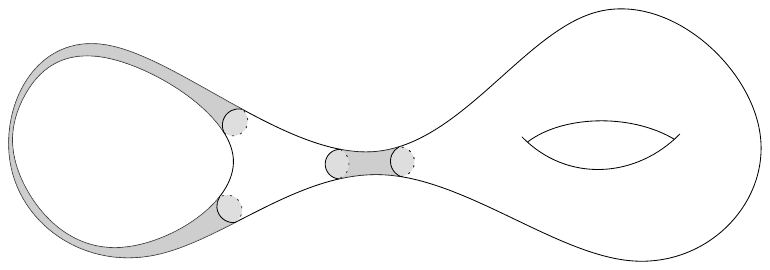}
    \caption{Illustration of the thick-thin decomposition of a genus-$2$ surface into two thin parts and two thick parts. The two thin parts (grey) are homeomorphic to cylinders.}
    \label{fig:thickthin}
\end{figure}

The standard geometric tool for handling the thin-collar phenomenon is the \emph{thick-thin decomposition}.
Due to the Gauss-Bonnet theorem, a hyperbolic surface of genus $g$ has area $(g-1) \cdot 4\pi$, but parts of it can be extremely thin, leaving us without a bound on the diameter.
The thick-thin decomposition splits the surface into thin tube-like parts and thick parts where we can now (for example) bound the diameter. See Figure~\ref{fig:thickthin} for an illustration.
For spanners, this suggests a natural strategy: use local Euclidean-like constructions in bounded-diameter thick regions, and handle the long thin collars separately.
The difficulty is to turn this geometric intuition into an efficient algorithm.

The size of an $\eps$-net for a hyperbolic surface is not bounded by any function of $g$ and $\eps$ alone~\cite{epsnet}, while when we restrict to the thick parts of the surface its size is $\bigO{g/\eps^2}$.
Lanuel~\cite{lanuelthesis} gives an algorithm to compute $\eps$-nets of the thick parts (called a \emph{pseudo-$\eps$-net}) and uses them to obtain the thick--thin decomposition, but does not analyse its running time dependence on surface-specific values such as $g$.
Based on our analysis, both their pseudo $\eps$-net algorithm and the algorithm to cover the entire surface~\cite[Theorem~III.4]{lanuelthesis} have an exponential dependence on $g$.
Notably, this algorithm is practical for small $g$ and has been successfully implemented~\cite{DespreLPT25}.
The thick-thin decomposition has been used in~\cite{Despre26} in an algorithmic fashion, while~\cite{EbbensPV23} uses its existence in its algorithm design.
To date, all such algorithmic approaches have exponential dependence on the genus $g$, due to the fact that in order to cover shortest paths on the surface, one needs to look in the covering space $\Hyp^2$.
Despr\'e, Kolbe, and Teillaud show that shortest paths on the surface can intersect up to $\bigO{g}$ copies of the Dirichlet domain~\cite{DespreKT24}.
Since a Dirichlet domain has $\bigO{g}$ sides, this leads naturally to exploring $g^{\bigO{g}}$ domain copies in algorithms that must cover all such lifted shortest paths or edges.
A central algorithmic challenge is therefore to avoid this universal-cover explosion while still finding enough of the surface geometry to build sparse spanners.

The state of the art for Steiner spanners on hyperbolic surfaces leaves exactly this gap.
To the best of our knowledge, no previous work studies non-crossing (Steiner) spanners on a curved surface.
When crossings are allowed, there are Steiner-spanner constructions in the hyperbolic plane by Krauthgamer and Lee~\cite{KrauthgamerL06}, by Kisfaludi-Bak and Van Wordragen~\cite{steinerspanner,Kisfaludi-BakW25}, and by Park and Vigneron~\cite{additivespanner}.
For hyperbolic surfaces, the only previous spanner result is by Kisfaludi-Bak and Van Wordragen~\cite{Kisfaludi-BakW25}, which has $\frac{n}{\sqrt\eps} \log\frac1\eps \cdot g^{\bigO g}$ edges.\footnote{An edge count of $\bigO{\frac{n}{\sqrt\eps} \log\frac1\eps}$ was attained in~\cite{Kisfaludi-BakW25} for the sphere and flat tori, taking care of the case of non-hyperbolic constant-curvature orientable surfaces.}
None of these lead to results for non-crossing (Steiner) spanners.
Table~\ref{tab:steiner-spanner-sparsity} summarizes the relevant sparsity bounds.

\begin{table}[t]
\centering
\footnotesize
\setlength{\tabcolsep}{4pt}
\renewcommand{\arraystretch}{1.25}
\begin{tabular}{@{}
  >{\centering\arraybackslash}m{0.13\linewidth}
  >{\centering\arraybackslash}m{0.25\linewidth}
  >{\centering\arraybackslash}m{0.25\linewidth}
  >{\centering\arraybackslash}m{0.29\linewidth}
@{}}
\toprule
\textbf{Bound} &
\textbf{Euclidean plane} &
\textbf{Hyperbolic plane} &
\shortstack{\textbf{Hyperbolic surface}\\\textbf{of genus $g$}} \\
\midrule

\multicolumn{4}{@{}c@{}}{\small\bfseries Steiner $(1+\eps)$-spanners, crossings allowed} \\
\cmidrule(lr){1-4}
\addlinespace[2pt]

Upper
& $\bigO{n/\sqrt{\eps}}$~\cite{planesteinernew,LeS25,ChangC0MST24}
& $\bigO{n\log(1/\eps)/\sqrt{\eps}}$~\cite{Kisfaludi-BakW25}
& \begin{tabular}[c]{@{}c@{}}
  $\bigO{n g^{\bigO{g}}\log(1/\eps)/\sqrt{\eps}}$~\cite{Kisfaludi-BakW25} \\
  $\widetilde{\mathcal{O}}(ng/\eps)$, existential~\cite{BhoreKK0LPT25} \\
  \newbound{\bigO{n/\sqrt{\eps}+g/\eps}}~Thm.~\ref{thm:main}
  \end{tabular}
\\

Lower
& $\bigOm{n/\sqrt{\eps}}$~\cite{BhoreT22}
& $\bigOm{n/\sqrt{\eps}}$~\cite{BhoreT22}
& $\bigOm{n/\sqrt{\eps}}$~\cite{BhoreT22}
\\

\midrule

\multicolumn{4}{@{}c@{}}{\small\bfseries Non-crossing Steiner $(1+\eps)$-spanners} \\
\cmidrule(lr){1-4}
\addlinespace[2pt]

Upper
& $\bigO{n/\eps^{3/2}}$~\cite{planesteinernew}
& \begin{tabular}[c]{@{}c@{}}
  no previous bound \\
  \newbound{\bigO{n/\eps^2}}~Thm.~\ref{thm:SteinerPlaneSpanner}
  \end{tabular}
& \begin{tabular}[c]{@{}c@{}}
  no previous bound \\
  \newbound{\bigO{n/\eps^{3/2}+g/\eps^2}}~Thm.~\ref{thm:main}
  \end{tabular}
\\

Lower
& $\bigOm{n/\eps^{3/2-\mu}}$~\cite{planesteinernew}
& $\bigOm{n/\eps^{3/2-\mu}}$~\cite{planesteinernew}
& $\bigOm{n/\eps^{3/2-\mu}}$~\cite{planesteinernew}
\\

\bottomrule
\end{tabular}
\caption{Edge counts for Steiner $(1+\eps)$-spanners. The lower bounds involving $\mu$ hold for every fixed $\mu>0$. The lower bounds in the hyperbolic plane and on hyperbolic surfaces follow from the Euclidean lower-bound constructions placed in a sufficiently small embedded disk. The existential polyhedral-surface upper bound transfers to
hyperbolic surfaces by approximation, but this transfer is non-algorithmic.}
\label{tab:steiner-spanner-sparsity}
\end{table}

Finally, we mention one related point of comparison regarding the size of our non-crossing spanners. In a different line of work, Kapoor and Li~\cite{KapoorL09} and Bhore \etal~\cite{BhoreKK0LPT25} considered polyhedral surfaces. One can approximate hyperbolic surfaces with polyhedral surfaces by first finding a triangulation of the surface with triangles of tiny hyperbolic diameter $\delta>0$, then creating a corresponding polyhedral surface with Euclidean triangles of the same side lengths; as $\delta$ goes to $0$, so does the distortion between a Euclidean and hyperbolic triangle of the same side lengths.
Thus the existence of spanners on polyhedral surfaces also implies spanners of the same edge count and lightness in the hyperbolic setting, but we emphasize that algorithms to construct spanners do not transfer easily from (Euclidean) polyhedral surfaces to hyperbolic surfaces.

\subsection{Our contribution}
Our first contribution is an algorithm to compute thick-thin decompositions in polynomial time rather than in exponential time. We believe that this is interesting in its own right and will be used in future algorithms on hyperbolic surfaces. The algorithm requires that the surface is defined via a Dirichlet domain, which is a convex polygon with $\bigO g$ vertices in the hyperbolic plane. This is a reasonable input model, as one can compute a Dirichlet domain from any given fundamental polygon \cite{DespreKPT25}. See Section~\ref{sec:prelim} for more details on our input model.

The thick-thin decomposition is based on the \emph{injectivity radius}:
for a given point $p$, this value $r_p$ is the supremum of the values $r$ for which the set of points within distance $r$ from~$p$ forms a topological disk.
Equivalently, $2r_p$ is the length of the shortest non-contractible geodesic loop through~$p$.
The \emph{$\mu$-thick-thin decomposition} now assigns points from a hyperbolic surface to either the thick part or the thin part depending on whether that point's injectivity radius is above or below a given value $\mu$.
If $\mu$ is at most the Margulis constant $\mu_M = 2\arsinh\sqrt{\frac{2\cos(2\pi/7) - 1}{8\cos(\pi/7) + 7}} \approx 0.2629$ of the hyperbolic plane~\cite{margulis}, then each connected thin component is homeomorphic (and in fact diffeomorphic) to a cylinder \cite[Theorem 4.5.6]{thurston97three}; see Figure~\ref{fig:thickthin}.
Algorithmically, we consider the thick-thin decomposition of a given fundamental domain polygon $\tD$ to be a partition of the domain into thick and thin regions. We show the following theorem.

\begin{restatable}{theorem}{thmthickthin}\label{thm:thickthin}
    For any value $\mu \leq \mu_M$, we can construct the $\mu$-thick-thin decomposition of a hyperbolic surface of genus $g$ given by any Dirichlet domain $\tD$ in time $\bigO{g^4 \log g}$.
\end{restatable}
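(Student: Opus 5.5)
The plan is to reduce the thick--thin decomposition to finding the short closed geodesics of the surface. For $\mu\le\mu_M$, every thin component is a collar around a simple closed geodesic $\gamma$ of length $\ell(\gamma)<2\mu$. Distinct such geodesics are pairwise disjoint (by the collar lemma / Margulis), so there are at most $3g-3$ of them. Moreover, a point $p$ lies in the thin part around $\gamma$ exactly when the shortest non-contractible loop through $p$ freely homotopic to $\gamma$ has length below $2\mu$. In the hyperbolic plane this is a hyperbolic trigonometric condition on the distance from a lift $\tilde p$ to the axis $\tilde\gamma$: $\sinh(\mu)=\sinh(\ell/2)\cosh(d)$ defines the boundary equidistant curve. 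Hence each thin component, restricted to $\tD$, is the intersection of $\tD$ with the equidistant neighbourhoods (bananas) of those lifts of $\gamma$ that come close to $\tD$. The algorithm therefore has two steps: (i) find all primitive hyperbolic deck transformations of translation length $<2\mu$, up to conjugacy, together with their relevant lifted axes near $\tD$; (ii) clip the corresponding banana regions against $\tD$.

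For step (i), I would use the following observation. If $p\in\tD$ is thin, then the short loop through $p$ lifts to a segment from $\tilde p\in\tD$ to $h(\tilde p)$ of length $<2\mu$ for some side-pairing word $h$. Such a short segment starting in $\tD$ only crosses copies of $\tD$ adjacent to it. The reason is that in a Dirichlet domain centred at $x_0$, a segment of length $<2\mu$ cannot traverse a whole neighbouring copy. To argue this, I would bound how many tiles such a segment can meet by using that it stays in a small ball, where all tiles share a vertex cycle or an edge. So the candidate elements $h$ are the side-pairings and the products around vertex cycles, giving $O(g)$ sides times $O(g)$ vertex-cycle words. For each candidate I compute the translation length from the trace ($2\cosh(\ell/2)=|\mathrm{tr}|$) and the axis, in $O(1)$ arithmetic per word of bounded length. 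Vertex cycles can have length $O(g)$, though, and their prefixes can number $O(g^2)$, which is where I expect the $g^2$ factor to come from.

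For step (ii), each short geodesic $\gamma$ has lifts crossing $\tD$. I would trace $\gamma$ through $\tD$: starting from the axis of $h$, clip it to $\tD$, exit through a side, apply that side's pairing, and continue until the lift closes up. By Despré--Kolbe--Teillaud, a shortest path meets $O(g)$ domain copies, so each trace produces $O(g)$ segments, each found in $O(\log g)$ time by binary search over the convex polygon's sides. Around each segment the banana of the computed width is intersected with $\tD$, and the thin region is the union of these pieces. Different short geodesics have disjoint collars, and each $\gamma$ gives a connected cylinder, which lets me deduplicate conjugate candidates by tracing. The total cost is $O(g^2)$ candidates times $O(g)$ trace steps times polygon operations of $O(g\log g)$, giving $O(g^4\log g)$.

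I expect the main obstacle to be the completeness claim in step (i). Every thin point must have its short loop represented by a candidate word; this needs a careful geometric argument bounding which translates of $\tD$ a short geodesic loop through a point of $\tD$ can reach, possibly near vertices where many copies meet. I would first try to show that the $\mu$-ball around any point of $\tD$ is covered by the copies of $\tD$ sharing a vertex or edge with the copy containing it, using the Margulis property that the ball is embedded modulo a cyclic group.
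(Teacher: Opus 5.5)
\textbf{Main gap: completeness of step (i).} The lemma you propose to close it is false. Near a thin part the copies of $\tD$ become arbitrarily narrow. Suppose $\tilde p\in\tD$ lies on the axis of $h\in\Gamma$, with translation length $\ell\ll\mu$. Then the $\mu$-ball around $\tilde p$ contains the points $h^j\tilde p$ for all $|j|\le\mu/\ell$, so it meets that many distinct copies $h^j\tD$. This is unbounded as $\ell\to 0$, whereas only $\bigO{g^2}$ copies share a vertex or an edge with $\tD$. The Margulis property says the ball embeds modulo a cyclic group; that is exactly what lets it wrap around many times, and it bounds nothing about tiles.

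Your narrower claim about the single segment from $\tilde p$ to $h\tilde p$ also fails. The only general bound is the one from Lemma~\ref{lem:dirichletcurve}, which allows $\bigO{g}$ side crossings, and one period of a collar can be tiled by several copies. For instance, suppose $x$ reaches the collar along two routes of equal length. Then, for a suitable lift of the collar, the lifts of $x$ nearest to its axis form two interleaved $h$-orbits at the same distance from the axis, one of them containing $\tilde x$. Every point of the bisector of $\tilde x$ and $h\tilde x$ (the perpendicular to the axis through the midpoint of their projections) is strictly closer to the interleaved lift. Hence $\tD\cap h\tD=\emptyset$, and $h$ is neither a side pairing nor a prefix of a vertex-cycle product. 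You give no argument that some conjugate of $h$ lands in your candidate set. So step (i) can miss short geodesics, and the bounded word length behind your running-time count is unfounded.

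\textbf{How the paper proceeds instead.} The paper does not bound word length at all. It explores copies $\gamma\tD$ outward from $\tD$ in order of distance, using a priority queue on boundary edges with distances from Lemma~\ref{lem:computedist}. It tests each visited $\gamma$ by its trace (Lemma~\ref{lem:translationlength}), stores every banana of translation length at most $2\mu_M$ in a point-location structure, and never expands edges covered by stored bananas. Two counting lemmas make this polynomial, and you would need some analogue of them:
\begin{itemize}
\item the $2\mu_M$-neighbourhood of the lifted thick part meets $\bigO{g^3}$ copies (Lemma~\ref{lem:thicktersect});
\item the union of $r_p$-balls over a thin part meets $\bigO{g}$ copies (Lemma~\ref{lem:thintersect}), so exploration inside each banana is cut off early.
\end{itemize}
Then $\bigO{g^3}$ copies times $\bigO{g}$ edges times $\bigO{\log g}$ gives the bound.

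\textbf{A second, smaller hole in step (ii).} Tracing the core geodesic only finds lifts whose axes cross $\tD$. But a banana can meet $\tD$ without its axis doing so. This happens when $\tD$'s piece of the collar enters from one side and is cut off by neighbouring copies before reaching the core, for example if the second route above is slightly longer than the first. Those pieces of the thin part would be missed. The paper keeps every banana meeting $\tD$ (up to $\bigO{g}$ per thin part). You would at least have to trace the collar's boundary hypercycles as well.
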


The polynomial dependence on $g$ is surprising; intuitively, it means that the algorithm must avoid looking at the $g^{\bigO g}$ copies of the domain required to cover all shortest paths on the surface.

As our main contribution, we use our thick-thin decomposition to construct sparse Steiner spanners for points on hyperbolic surfaces, with and without crossings.

\begin{restatable}[Main theorem]{theorem}{thmsurfspanner}\label{thm:main}
    Let $S$ be a closed hyperbolic surface of genus $g$.
    For any set $P$ of $n$ points on~$S$ and $\eps\in (0,1/2)$, there is a non-crossing Steiner $(1+\eps)$-spanner with $m = \bigO{n / \eps^{3/2} + g/\eps^2}$ edges and a Steiner $(1+\eps)$-spanner (with crossings) with $m = \bigO{n / \sqrt{\eps} + g/\eps}$ edges.
    Given $S$ as a Dirichlet domain, we can construct these in $\bigO{(g^4 + g^2 \log\frac1\eps) \log(g\log\frac1\eps) + m \log n}$ time.
\end{restatable}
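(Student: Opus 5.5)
We build the spanner on top of the thick--thin decomposition from Theorem~\ref{thm:thickthin} with $\mu$ a small constant below $\mu_M$, and treat the thick and thin parts separately. First, in $\bigO{g^4\log g}$ time we compute the thick--thin decomposition of $\tD$. This yields $\bigO{g}$ thin components, each a cylinder (collar) around a short closed geodesic, and a thick part of area at most $4\pi(g-1)$ in which every point has injectivity radius at least $\mu$.

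\textbf{Thick part.} The thick part can be covered by $\bigO{g}$ disks of radius $\mu/2$, and each such disk is isometric to a disk in $\Hyp^2$. Within such a disk, the hyperbolic metric is $(1+\bigO{\mu})$-close to Euclidean after rescaling, and at scales below $\sqrt{\eps}$ it is $(1+\eps)$-close. Distances between points in different disks are realized by paths through a bounded number of disks per unit length. To route these, we place a net of Steiner points in the thick part, with $\bigO{g/\eps^2}$ points in the non-crossing case and $\bigO{g/\eps}$ in the crossing case, and connect nearby net points. The input points inside each disk are then handled by the Euclidean constructions~\cite{planesteinernew} (or~\cite{LeS25,ChangC0MST24} with crossings), applied in the Klein model of the local disk. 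In that model geodesics are straight, so Euclidean non-crossing segments remain non-crossing geodesics. The Klein model distorts lengths by a bounded factor, and we compensate with a constant-factor rescaling of $\eps$ and an additional local Steiner grid to handle the distortion. This part contributes $\bigO{n/\eps^{3/2}}$ edges (respectively $\bigO{n/\sqrt\eps}$).

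\textbf{Thin part.} Each collar is the quotient of a neighbourhood of a geodesic line in $\Hyp^2$ by a translation, so it unrolls to a periodic strip. A shortest path between two points in the collar, or one crossing it, travels essentially along the core geodesic. We place Steiner points on $\bigO{1/\eps}$ equidistant geodesic ``spokes'' near each boundary, plus the core geodesic, and connect them. Input points inside collars are processed in the unrolled strip with a periodic version of the plane construction, using the neck-decomposition style analysis for convex polygons. This costs $\bigO{1/\eps^2}$ (or $\bigO{1/\eps}$) edges per collar, which sums to $\bigO{g/\eps^2}$ over the $\bigO{g}$ collars.

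\textbf{Stretch and crossings.} For the stretch analysis we decompose a shortest path into maximal thick and thin pieces and bound the error of each piece by $\eps$ times its length. Pieces shorter than $\mu$ are handled by the local Euclidean spanner. Non-crossing is obtained by inserting Steiner points at every intersection between the net edges and the local spanners. We must argue that the number of such intersections is still $\bigO{1/\eps^{2}}$ per net cell plus $\bigO{1/\eps^{3/2}}$ per input point.

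\textbf{Running time.} Computing the net requires locating points in $\tD$ and finding neighbours across sides. Since each disk touches $\bigO{1}$ copies of $\tD$, this takes $\bigO{g^2\log\frac1\eps}$ adjacency work, and with sorting it gives the stated time bound.

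The main obstacle is ensuring that net points of the thick part can be connected using only polynomially many domain copies, while guaranteeing that shortest paths crossing thin collars with a large twist are still approximated. I would first try to prove that a length-$\ell$ geodesic in the thick part meets only $\bigO{\ell}$ disks, each in $\bigO{1}$ domain copies. This reduces the question to local adjacency that can be precomputed from the decomposition.
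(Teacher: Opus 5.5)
\textbf{Main gap: routing long geodesics through the thick part.} This is exactly where the $g$-terms of the edge bounds come from, and your scheme cannot reach them.

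A graph on a $\delta$-net whose edges join only net points at distance $\bigO{\delta}$ is grid-like, so its stretch is bounded away from $1$. Suppose instead you route through net points within $\delta$ of the geodesic $pq$. A hop of length $L$ then costs additive error about $\delta^2/L$, so you need hops of length $L\gtrsim\delta/\sqrt\eps$ in every direction. That means roughly $1/\eps$ neighbours per net point:
\begin{itemize}
\item about $g/\eps^2$ edges for your $\bigO{g/\eps}$-point net in the crossing case;
\item about $g/\eps^3$ edges for your $\bigO{g/\eps^2}$-point net, before any crossing is resolved.
\end{itemize}
Feeding your net points into the Euclidean Steiner constructions instead gives $(g/\eps)\cdot\eps^{-1/2}$ and $(g/\eps^2)\cdot\eps^{-3/2}$ edges, which is still too many.

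The target bounds factor as $(g/\sqrt\eps)\cdot\eps^{-1/2}$ and $(g/\sqrt\eps)\cdot\eps^{-3/2}$. The missing idea is that $\bigO{g/\sqrt\eps}$ Steiner points on a \emph{one-dimensional} set suffice. Take a constant-scale net $C$ of the thick part ($\bigO{g}$ centres, Lemma~\ref{lem:surfnet}). Place Steiner points $P_2$ at spacing $\sqrt\eps$ on the circle of radius $\frac{3}{32}$ around each centre. Then run Theorem~\ref{thm:constdiam} on $P\cup P_2$ inside each ball of radius $\frac18$; these balls have constant ply.

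A geodesic $pq$ then yields a chain $p=s_0,s_1,\dots,s_{k+1}=q$ of spanner vertices with three properties: each lies within $\sqrt\eps$ of $pq$, consecutive ones are $\Omega(1)$ apart, and consecutive ones lie in a common ball. Lemma~\ref{lem:split} shows that a $\sqrt\eps$ deviation over a constant-length hop costs only a factor $1+\bigO{\eps}$. This chain is also what your piece-by-piece stretch argument lacks. Bounding the error of each thick or thin piece by $\eps$ times its length says nothing about how the piece endpoints become spanner vertices.

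\textbf{Further gaps.}
\begin{itemize}
\item \emph{Klein-model distortion.} A bounded multiplicative length distortion cannot be absorbed by rescaling $\eps$; by itself it only turns a Euclidean $(1+\eps)$-spanner into an $\bigO{1}$-spanner. The paper uses that the construction of~\cite{planesteinernew} is cone-restricted: every edge of the $ab$-path makes Euclidean angle at most $\sqrt\eps$ with $ab$. Via hyperbolic projection onto $ab$, Lemma~\ref{lem:conerestrict} shows such paths have hyperbolic stretch $1+\bigO{\eps}$.
\item \emph{Crossing count.} The bound on crossings between overlapping local constructions with different Klein origins, which you leave as ``we must argue'', is supplied by Lemma~\ref{lem:coneintersect} and the last clause of Theorem~\ref{thm:constdiam}.
\item \emph{Collar geometry.} The core geodesic is the short waist of the cylinder. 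Shortest paths crossing a collar run across it, not along it.
\item \emph{Twist problem.} What you name as the main obstacle is handled by Lemma~\ref{lem:thinsplit}. Cutting the collar along two maximally separated shortest arcs gives two simply-connected regions, and one of them contains a shortest path for every pair. Centres are placed in these regions only while the $\frac{3}{32}$-circle still has length at least $\sqrt\eps$ inside. By Lemma~\ref{lem:thinjradius} this gives $\bigO{\log\frac1\eps}$ centres and a geometric series of $\bigO{1/\sqrt\eps}$ Steiner points per collar. The remaining $\sqrt\eps$-thin middle is crossed by a single edge.
\item \emph{Running time.} A constant-radius disk need not meet only $\bigO{1}$ copies of $\tD$. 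The paper instead bounds the copies by $\bigO{g^3}$ near the thick part (Lemma~\ref{lem:thicktersect}) and $\bigO{g}$ per collar region (Lemma~\ref{lem:thintersect}). This is the source of the $(g^4+g^2\log\frac1\eps)$ term.
\end{itemize}
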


Note that for $g = \bigO{n\sqrt\eps}$ both edge counts match their Euclidean counterparts~\cite{planesteinernew,LeS25,ChangC0MST24}, which are optimal in their dependence on $n$ and, up to $(1/\eps)^{o(1)}$ factors, in their dependence on $\eps$~\cite{planesteinernew,BhoreT22}.
Additionally, the dependence on $g$ is only linear and not multiplied with~$n$, which improves on the edge count $\frac{1}{\sqrt\eps} \log\frac1\eps \cdot n g^{\bigO g}$ found earlier for Steiner spanners with crossings~\cite{Kisfaludi-BakW25}.
For non-crossing Steiner spanners, this is the first construction in any hyperbolic setting.

Using very similar techniques we also get results in the hyperbolic plane.
Namely, we introduce the \emph{neck decomposition} as an analogue of the thick--thin decomposition for hyperbolic polygons and give a linear-time algorithm to compute this using essentially the Dobkin-Kirkpatrick Hierarchy~\cite{DobkinKirkpatrick}.
The neck decomposition has properties similar to those of the thick--thin decomposition, giving the following result.

\begin{restatable}{theorem}{thmplanespanner}\label{thm:SteinerPlaneSpanner}
    For any point set $P \subset \Hyp^2$ of $n$ points and any $\eps\in (0,1/2)$, we can construct a non-crossing Steiner $(1+\eps)$-spanner with $\bigO{n / \eps^2}$ edges in $\bigO{n \log n / \eps^2}$ time.
\end{restatable}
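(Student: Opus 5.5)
We construct the spanner in the hyperbolic plane by mimicking the surface construction, with the neck decomposition playing the role of the thick--thin decomposition. First we compute the convex hull $H$ of $P$. In $\Hyp^2$, shortest paths between points of $P$ stay inside $H$, and $H$ is a convex polygon with at most $n$ vertices. We then apply the linear-time neck decomposition of $H$ (via the Dobkin--Kirkpatrick hierarchy). It splits $H$ into \emph{fat} pieces and \emph{necks}. A fat piece is a region in which every point has a disk of constant radius (say of order $\mu_M$) around it contained in $H$, up to boundary effects. A neck is a thin strip bounded by two long boundary chains that are within constant distance of each other, and it behaves like a thin collar. The key structural facts I would aim to prove are the following. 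First, the total area of the fat pieces is $\bigO{n}$: a convex hyperbolic polygon with $k$ vertices has area less than $(k-2)\pi$ by Gauss--Bonnet, which handles the whole hull. Second, each neck can be treated as a quasi-one-dimensional region.

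In the fat part we place a grid-like Steiner structure. We cover the fat pieces with an $\eps$-scale net. Since the area is $\bigO{n}$ and net points are $\Theta(1)$-separated at constant scale, refining to scale about $\eps$ gives $\bigO{n/\eps^2}$ Steiner points. We triangulate these by a constant-degree planar mesh (for instance a Delaunay triangulation of a well-spaced net), so every geodesic through the fat part is approximated within additive $\bigO{\eps}$ per unit length. Within each constant-size mesh cell, the geometry is $(1+\bigO{\eps})$-bi-Lipschitz to Euclidean after rescaling. So we can either insert the input points directly into the mesh, or apply a Euclidean plane Steiner spanner (e.g.\ the $L_1$-style $\bigO{k/\eps^2}$ construction of~\cite{planesteiner}) locally to connect each input point to the surrounding grid. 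Short pairs (distance at most about $\eps$) need separate care: for those we run a local Euclidean non-crossing Steiner spanner on the points in each cell, which costs $\bigO{|P\cap \text{cell}|/\eps^2}$ edges.

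In a neck, the two boundary chains are close, and shortest paths traverse it essentially lengthwise. We place ``rungs'', geodesic segments across the neck, at spacing roughly $\eps$ times the local width. The neck is thin and is a hyperbolic strip whose width shrinks or grows exponentially, so the number of rungs needed is controlled by the number of hull vertices on its boundary plus $\bigO{1/\eps}$ per neck, not by the neck's length. We connect consecutive rungs along the boundary chains and then add the input points. Paths entering a neck at one end and exiting at the other then lose only a $(1+\eps)$ factor. Making all pieces non-crossing is automatic: each edge lies within its own cell or neck, and the pieces are glued along shared boundary segments subdivided at common Steiner points.

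The main obstacle I expect is the stretch analysis across piece boundaries, together with bounding the neck count. We must show that a geodesic between $p,q\in P$ can be decomposed into subpaths, each of which stays in a single piece or crosses boundaries only $\bigO{1}$ times per unit of length, so that additive errors of $\eps$ per unit sum to a multiplicative $(1+\eps)$. Pairs with $d(p,q)$ much less than $1$ need to be handled purely by the local Euclidean construction. For the running time, the decomposition takes $\bigO{n}$ and the net and mesh construction takes $\bigO{n/\eps^2}$; point location of inputs in the mesh and the local spanners give $\bigO{n\log n/\eps^2}$ overall.
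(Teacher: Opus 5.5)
There is a genuine gap, and it sits in the step you treat as routine. You claim that a constant-degree planar mesh on an $\eps$-spaced net (e.g.\ a Delaunay triangulation) approximates every geodesic in the fat part ``within additive $\bigO{\eps}$ per unit length''. This is false, because the stretch of such a mesh does not improve under refinement. Nothing in your construction prevents the mesh from being a slight perturbation of the triangular lattice of spacing $\eps$. Take a pair at distance, say, $1000\eps$ in a direction at $30^\circ$ to the lattice directions. The hyperbolic metric is $(1+\bigO{\eps^2})$-close to Euclidean on that disk, yet the mesh distance is about $2/\sqrt{3}$ times the true distance, however small $\eps$ is. Refinement only shrinks the additive error for pairs at distance $\bigO{\eps}$, not the multiplicative error for longer pairs. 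A path of length $(1+\eps)|pq|$ must use edges whose directions are, on average, within angle about $\sqrt{\eps}$ of $pq$. So you need long near-straight structure in $\Omega(1/\sqrt{\eps})$ directions everywhere, and making that non-crossing cheaply is the whole difficulty of the problem.

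Your fallback does not repair this. The $\bigO{k/\eps^2}$ construction of~\cite{planesteiner} is a $(1+\eps)$-spanner for the $L_1$ distance, so its Euclidean stretch can be $\sqrt{2}$. Its Euclidean counterpart costs $\bigO{k/\eps^4}$. Running any genuine non-crossing $(1+\eps)$-spanner on your $\bigO{n/\eps^2}$ net points would exceed the budget. Non-crossing is also not automatic: the mesh, the connectors from input points, and the local spanners for short pairs share cells and cross each other.

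The paper's proof rests on two ingredients your proposal lacks.
\begin{itemize}
\item Theorem~\ref{thm:constdiam} runs the cone-restricted $\bigO{k/\eps^{3/2}}$ non-crossing construction of~\cite{planesteinernew} in the Klein model, centred at a point of a set of diameter at most $\frac{1}{4}$. Lemma~\ref{lem:conerestrict} turns cone-restriction into hyperbolic stretch $1+\bigO{\eps}$. Lemma~\ref{lem:coneintersect} bounds crossings even between constructions with different centres, so overlapping balls of constant ply stay within budget.
\item Lemma~\ref{lem:split} shows that detouring through points within $\sqrt{\eps}$ of $pq$ that are $\Theta(1)$ apart costs only a factor $1+\bigO{\eps}$.
\end{itemize}
Hence Steiner points are needed only at spacing $\sqrt{\eps}$ on circles of radius $\frac{3}{32}$ around a $\frac{1}{16}$-cover of the non-neck part, which has $\bigO{n}$ centres by the area argument. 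That gives $\bigO{n/\sqrt{\eps}}$ points. Each lies in $\bigO{1}$ local constructions costing $\bigO{1/\eps^{3/2}}$ edges per point, for $\bigO{n/\eps^2}$ edges in total.

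The same lemma is what makes necks cheap, and your neck count is not justified as stated. Rungs spaced at $\eps$ times the local width $w(x)$ number about $\int dx/(\eps\, w(x))$. This is unbounded where the neck pinches, e.g.\ near a sharp hull vertex. The paper instead stops placing centres once a circle has less than $\sqrt{\eps}$ of its perimeter inside the neck. This gives $\bigO{\log\frac{1}{\eps}}$ centres and, by a geometric series, $\bigO{1/\sqrt{\eps}}$ Steiner points per neck. The remaining narrow part is crossed by a single edge between Steiner points at its two ends. Lemma~\ref{lem:split} justifies this, because those Steiner points lie within $\sqrt{\eps}$ of any geodesic passing through.
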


In all of our algorithms, we use the real RAM model of computation; for a precise definition of the model, see~\cite{EHM2022Smoothing}.

\paragraph*{Light non-crossing spanners and a corollary for TSP.} Note that, because non-crossing (Steiner) spanners are $K_r$-minor-free graphs (for $r = 5$ in the hyperbolic plane and some $r = \bigO{\sqrt g}$ on surfaces~\cite{RingelYoung}), we can use Theorem~1.5 of Le and Solomon~\cite{LeS23} to also get bounded \emph{lightness}.
If we define the \emph{weight} of a graph as the sum of its edge weights, then lightness is the ratio between the weight of the spanner and that of the minimum spanning tree for the same point set.
We now specifically get lightness $\widetilde{\mathcal O}(\frac{1}{\eps^2})$ in the hyperbolic plane and $\widetilde{\mathcal O}(\frac{\sqrt g}{\eps} + \frac{1}{\eps^2})$ on hyperbolic surfaces, where $\widetilde{\mathcal O}$ hides a polylogarithmic factor in $1/\eps$ and $g$.
This gives a partial answer to the open question of finding light spanners in hyperbolic space~\cite{steinerspanner,hypertsp} and removes one obstacle preventing better running times for the travelling salesman problem (TSP) in the hyperbolic plane. In TSP, the goal is to find the shortest closed curve visiting the given set of points in some geometric space. Using a recent subset TSP algorithm on edge-weighted minor-free graphs~\cite{CohenAddadFKL20} within our spanner (and the input point set as terminals), we obtain the following corollary.

\begin{corollary}
    Given a set of $n$ points on a hyperbolic surface of genus $g$ (which in turn is given as a Dirichlet domain), we can compute in time $2^{\mathcal{O}_g(\poly(1/\eps))}\poly(n)$ a TSP tour of the points of length at most $1+\eps$ times the shortest tour.
\end{corollary}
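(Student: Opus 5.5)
The plan is to reduce the corollary to the non-crossing part of Theorem~\ref{thm:main} combined with known results on minor-free graphs. First, with $\eps' = \eps/3$, I would construct the non-crossing Steiner $(1+\eps')$-spanner $G$ of Theorem~\ref{thm:main} on the $n$ input points. It has $m = \bigO{n/\eps^{3/2} + g/\eps^2}$ edges and is built in time polynomial in $n$, $g$ and $1/\eps$. $G$ is embedded on a genus-$g$ surface, so it excludes $K_r$ as a minor for some $r = \bigO{\sqrt g}$~\cite{RingelYoung}.

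Next I would apply the subset TSP approximation scheme of Cohen-Addad \etal~\cite{CohenAddadFKL20} to the edge-weighted graph $G$, using $P$ as the terminal set. That algorithm runs in time $2^{\mathcal{O}_r(\poly(1/\eps))}\poly(|V(G)|)$ and returns a closed walk in $G$ visiting all of $P$. Its length is at most $(1+\eps')$ times the optimal subset tour in $G$. Since $|V(G)| = \bigO{m}$ and $r$ depends only on $g$, the running time has the claimed form $2^{\mathcal{O}_g(\poly(1/\eps))}\poly(n)$. The only care needed is with the $g/\eps^2$ term, but it is polynomial in $1/\eps$ for fixed $g$, so it is absorbed into the exponential factor.

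For correctness, take the optimal surface tour $T^*$ visiting $P$ in some order $p_1,\dots,p_n$. Replace each leg $p_ip_{i+1}$ by a spanner path of length at most $(1+\eps')d(p_i,p_{i+1})$. This gives a closed walk in $G$ through all terminals of length at most $(1+\eps')|T^*|$. Each spanner edge is a geodesic segment on the surface, so the walk returned by the algorithm is a valid closed curve on $S$. Its length is at most $(1+\eps')^2|T^*| \le (1+\eps)|T^*|$ for $\eps<1/2$.

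I expect the only real subtlety to be checking the hypotheses of~\cite{CohenAddadFKL20}. Their scheme needs a bound on the lightness of the input with respect to the terminals, and the bound must be of the right order. I would first try to use the scheme as stated for minor-free graphs with terminal sets, since it handles subset TSP directly. If lightness is required, I would prune $G$ with Theorem~1.5 of Le and Solomon~\cite{LeS23}: this gives a sub-spanner of lightness $\widetilde{\mathcal O}(\sqrt g/\eps + 1/\eps^2)$ at an extra $(1+\eps')$ stretch. It stays $K_r$-minor-free because it is a subgraph, and the constant in $\eps'$ is adjusted accordingly.
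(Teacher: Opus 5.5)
Your proposal is correct and is essentially the paper's argument. The non-crossing spanner of Theorem~\ref{thm:main} is embedded on the genus-$g$ surface, so it is $K_r$-minor-free with $r=\bigO{\sqrt g}$ by Ringel--Youngs, and running the subset-TSP EPTAS of~\cite{CohenAddadFKL20} on it with $P$ as terminals gives the claimed time and approximation ratio. The lightness fallback is unnecessary: \cite{CohenAddadFKL20} builds its own light subset spanner for arbitrary $K_r$-minor-free inputs. It also would not help, because the Le--Solomon lightness is measured against the MST of all of $G$, Steiner vertices included, not against the terminals.
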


\subsection{Brief overview of ideas and techniques}

\paragraph{Thick-thin decomposition.} Consider a hyperbolic surface $\Hyp^2/\Gamma$ given as a Dirichlet domain $\tD$, which is a convex polygon in the hyperbolic plane with $\bigO g$ edges. The hyperbolic polygons $\gamma\tD$ for $\gamma\in \Gamma$ will be referred to as the \emph{copies of $\tD$}. We observe that a $2\mu_M$-neighbourhood of $\tD$ will cover the lifts of all closed geodesic loops of length $2\mu_M$ with one endpoint in $\tD$, that is, any geodesic loop that is a core curve of a thin part collar can be discovered by inspecting copies of $\tD$ that intersect this $2\mu_M$-neighbourhood. Once such a short geodesic loop is found, one can identify the corresponding thin part and compute the regions in $\tD$ that belong to it.

The idea therefore is to grow the parameter $\delta$ from $0$ to $\mu_M$, and find the copies $\gamma\tD$ (for $\gamma\in \Gamma$) intersecting the $\delta$-neighbourhood of $\tD$. To get a polynomial running time, we need to bound the number of such copies intersecting this neighbourhood by a polynomial in $g$. Unfortunately, the number of such copies can be unbounded, as a constant-neighbourhood of a thin part may intersect an unbounded number of copies of $\tD$. Luckily, there are only a small number of domain copies near the thick parts.

\begin{restatable}{lemma}{thicktersect}\label{lem:thicktersect}
    Let $\widetilde{T} \subseteq \tD$  denote the lifting of the union of the $\mu_M$-thick parts of $S$. 
    Then, $N_{\Hyp^2}(\widetilde{T}, 2\mu_M)$ intersects $\bigO{g^3}$ copies of $\tD$.
\end{restatable}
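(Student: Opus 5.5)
We bound the number of copies by a packing argument. The argument is organised around the pieces of $\widetilde T$, and the key quantitative inputs are that the Dirichlet domain has $\bigO{g}$ sides, that $S$ has area $4\pi(g-1)$, and that every point of $\widetilde T$ has injectivity radius at least $\mu_M$.

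\emph{Step 1: cover $\widetilde T$ by few small disks.} Take a maximal $\mu_M/2$-separated set $Q$ of points in the thick part of $S$. Each $q\in Q$ has injectivity radius at least $\mu_M$, so the disk of radius $\mu_M/4$ around $q$ is embedded in $S$ and has area $\Theta(1)$. These disks are pairwise disjoint, and the total area is $4\pi(g-1)$, so $|Q|=\bigO{g}$. Every point of the thick part lies within distance $\mu_M/2$ of $Q$. Lifting, $\widetilde T$ is covered by $\bigO{g}$ hyperbolic disks of radius $\mu_M/2$, each centred at a lift $\tilde q\in\tD$ of some $q\in Q$. Hence $N_{\Hyp^2}(\widetilde T,2\mu_M)$ is covered by $\bigO{g}$ disks $B(\tilde q,R)$ with $R=5\mu_M/2$. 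It therefore suffices to show that each such disk meets $\bigO{g^2}$ copies of $\tD$.

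\emph{Step 2: copies meeting a disk around a thick point.} Fix $\tilde q$ and $R$ as above. Let $\gamma\tD$ be a copy meeting $B(\tilde q,R)$, and pick $x\in\gamma\tD\cap B(\tilde q,R)$. Then $\gamma^{-1}x\in\tD$ is a lift of a point at distance at most $R$ from $q$ on $S$. Two copies $\gamma\tD\neq\gamma'\tD$ have disjoint interiors, but a single point of $S$ may have several lifts in $B(\tilde q,R)$. The number of lifts of a given point $y\in S$ inside $B(\tilde q,R)$ equals the number of geodesic arcs from $q$ to $y$ of length at most $R$. Because $q$ is thick, distinct lifts of $y$ are $2\mu_M$-separated. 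Since $R$ is a constant, a packing argument in $\Hyp^2$ gives $\bigO{1}$ lifts of $y$ in $B(\tilde q,R)$.

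This does not yet bound the number of copies, because a copy may meet the disk only in a thin sliver. I would handle this via the sides of the copies. Every copy meeting $B(\tilde q,R)$, other than a copy containing the whole disk, has a boundary edge crossing the disk. Each such edge is the lift of one of the $\bigO{g}$ side-pairing segments on $S$. Each of these segments is a geodesic arc in $S$, and its lifts meeting $B(\tilde q,R)$ correspond to intersections of the segment with the ball $B_S(q,R)$, counted with multiplicity of lifts. By the previous paragraph the multiplicity is $\bigO{1}$. A geodesic segment can meet a ball of radius $R$ around a thick point in many components if it winds around. However, each component is a chord of the $\bigO{1}$ lifted ball, and distinct lifts of the same segment are disjoint geodesic segments. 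So I would bound, per side, the number of lifted pieces crossing $B(\tilde q,R)$ by the number of copies of $\tD$ adjacent along that side, which gives $\bigO{g}$ per side pair after an Euler-characteristic count. The result is $\bigO{g}\cdot\bigO{g}=\bigO{g^2}$ edges crossing the disk. The copies meeting the disk form a planar subdivision of the disk whose edges are these $\bigO{g^2}$ pieces, so by Euler's formula the number of faces, and hence of copies, is $\bigO{g^2}$.

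\emph{Main obstacle.} The hard step is bounding the number of lifted sides crossing a constant-radius disk centred at a thick point. A single side of $\tD$ on $S$ can be long (it can run through a thin collar) and may return to the neighbourhood of $q$ many times. My first attempt would use that $\tD$ is a Dirichlet domain. Its sides lie on bisectors between the centre $\tilde p$ and its images $\gamma\tilde p$, so the sides crossing $B(\tilde q,R)$ correspond to images $\gamma\tilde p$ for which that bisector comes within $R$ of $\tilde q$. The step is then to bound the number of such $\gamma$ by $\bigO{g^2}$, combining the $\bigO{g}$ side pairings with the thickness at $q$. If this bound falls short, the fallback is to accept one extra factor of $g$ through a cruder count, since we only need $\bigO{g^3}$ overall.
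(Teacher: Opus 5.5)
\textbf{There is a genuine gap.} Your Step~1 covering claim is false, and repairing it breaks the count. The disks $B(\tilde q,\mu_M/2)$ around \emph{one} lift $\tilde q\in\tD$ of each $q\in Q$ do not cover $\widetilde T$. A point $\tilde y\in\widetilde T$ near a side of $\tD$ may be $\mu_M/2$-close to $q$ only along a path that crosses that side. Then the lift of $q$ near $\tilde y$ lies in a neighbouring copy, while your chosen lift in $\tD$ sits near the paired side, possibly far away in $\Hyp^2$. You must use every lift $\gamma\tilde q$ within $\mu_M/2$ of $\tD$. The number of these is the number of copies meeting $B(\tilde q,\mu_M/2)$, which is the Step~2 quantity again. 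Concretely, a copy $\gamma'\tD$ meeting $N_{\Hyp^2}(\widetilde T,2\mu_M)$ yields some $q\in Q$ and $\gamma$ such that both $\gamma^{-1}\tD$ and $\gamma^{-1}\gamma'\tD$ meet $B(\tilde q,5\mu_M/2)$. So your framework only bounds the total by $\sum_{q\in Q}|\mathcal{C}_q|^2$, where $\mathcal{C}_q$ is the set of copies meeting that disk. With your target $|\mathcal{C}_q|=\bigO{g^2}$ this gives $\bigO{g^5}$, and the ``fallback'' of losing another factor of $g$ only makes it worse. You need $|\mathcal{C}_q|=\bigO{g}$.

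Step~2 also does not prove any per-disk bound. Bounding lifted pieces ``by the number of copies adjacent along that side \ldots\ after an Euler-characteristic count'' is not an argument, and your last paragraph concedes the key step is open. The missing ingredient is Lemma~\ref{lem:dirichletboundary}: $\partial\tD$ is made of $\bigO{g}$ curves that are \emph{shortest} curves on $S$, so the winding you worry about cannot happen.
\begin{itemize}
\item Two points of a shortest curve lying in $B_S(q,R)$ are at surface distance at most $2R$, hence at most $2R$ apart along the curve. So each such curve meets $B_S(q,R)$ inside a single subarc of length at most $2R$, all of whose points are within $2R$ of $q$.
\item By Lemma~\ref{lem:thinjradius}, these points have injectivity radius at least some constant $c$. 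It is not $\mu_M$, as you claim for $y$: points at distance up to $5\mu_M/2$ from $q$ may be thin.
\item A packing argument in $B(\tilde q,3R)$ then gives $\bigO{1}$ lifts of that subarc meeting $B(\tilde q,R)$.
\end{itemize}
Hence only $\bigO{g}$ lifted boundary pieces meet the disk. Each borders two copies, so $|\mathcal{C}_q|=\bigO{g}$; the paper gets the same per-region bound from Lemma~\ref{lem:dirichletcurve}.

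With both repairs your route gives $|Q|\cdot\bigO{g}\cdot\bigO{g}=\bigO{g^3}$. That is a legitimate alternative to the paper's proof. The paper instead bounds the area of $N_{\Hyp^2}(\widetilde T,2\mu_M)$ by $\bigO{g^2}$, via the length of $\partial\widetilde T$ and the isoperimetric inequality. It then covers this region by $\bigO{g^2}$ constant-size tiles bounded by shortest curves and charges $\bigO{g}$ copies to each tile.
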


To tackle the problem near thin parts, we can stop exploring their neighbourhood right after the thin parts are detected. The following lemma helps us bound the number of copies of $\tD$ we need to explore near a thin part. Recall that $r_p$ denotes the injectivity radius of the point $p$.

\begin{restatable}{lemma}{thintersect}\label{lem:thintersect}
    Let $\widetilde C \subseteq \tD$ denote the lifting of a $\mu_M$-thin part of $S$ and let $\widetilde N$ denote the region covered by balls of radius $r_p$ around each point $p \in \widetilde C$.
    Then, $\widetilde N$
    intersects $\bigO{g}$ copies of $\tD$.
\end{restatable}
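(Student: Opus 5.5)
We prove the bound by an area (packing) argument in $\Hyp^2$. The key geometric input is the collar structure of a $\mu_M$-thin component. Its core is a simple closed geodesic $\alpha$ of length $\ell < 2\mu_M$. The lift of $\alpha$ is a geodesic $\widetilde\alpha$ translated by a hyperbolic element $h\in\Gamma$ of translation length $\ell$. The thin part $C$ lifts to a tubular region around $\widetilde\alpha$, and it is invariant under $\langle h\rangle$. Fix a lift $\widetilde C\subseteq\tD$ as in the statement, so $\widetilde C$ is contained in one fundamental domain and has area at most $\mathrm{area}(C)$. By Gauss--Bonnet applied to the cylinder (or the standard collar area formula), $\mathrm{area}(C)=\bigO{1}$.

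\textbf{Step 1 ($\widetilde N$ stays in a bounded region transverse to $\widetilde\alpha$).} For $p\in\widetilde C$ the injectivity radius $r_p\le\mu_M$, and it is realized by the loop freely homotopic to a power of $\alpha$. I would show that $B(p,r_p)$ lies within the $(\mu_M)$-thin tube around $\widetilde\alpha$ enlarged by a constant: every point $q$ of this ball satisfies $r_q\le r_p+d(p,q)\le 2\mu_M$, so $q$ lies in the $2\mu_M$-thin part around $\widetilde\alpha$. Hence $\widetilde N$ is contained in the lift $\widetilde C'$ of the $2\mu_M$-thin collar $C'$ of $\alpha$, and this collar is still an embedded cylinder of area $\bigO{1}$. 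We use the Margulis constant here, possibly after shrinking constants, and the collar lemma gives that the $2\mu_M$-collar around a geodesic shorter than $2\mu_M$ is embedded.

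\textbf{Step 2 (bounding the extent along $\widetilde\alpha$).} Since $\widetilde C\subseteq\tD$ and $\tD$ is a fundamental domain, $\widetilde C$ meets each $\langle h\rangle$-orbit at most once. Moving along $\widetilde\alpha$, the balls $B(p,r_p)$ have radius comparable to the local "width" $r_p$, i.e. the distance at which the loop around the cylinder closes up. So $\widetilde N$ is contained in a union of at most $\bigO{1}$ translates $h^k\widetilde C''$, where $\widetilde C''$ is a single fundamental domain for $\langle h\rangle$ acting on $\widetilde C'$. The informal reason is that a ball of radius $r_p$ wraps at most about once around the cylinder, namely a loop of length $2r_p$. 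Thus $\widetilde N$ projects essentially injectively, with bounded multiplicity, to $C'$.

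\textbf{Step 3 (counting copies).} A copy $\gamma\tD$ meeting $\widetilde N$ contains a point of $\widetilde C'$. Here is the main obstacle: copies of $\tD$ can be long and thin near the collar, so pure area packing fails. I would instead count via the boundary of $\tD$. Each copy meeting $\widetilde N$ either contains part of $\widetilde N$ whose projection to $S$ is a piece of $C'$, or is separated from others by sides of copies. The projection of $\widetilde N\cap\gamma\tD$ to $S$ lies in $C'$, and the sets $\gamma^{-1}(\widetilde N\cap\gamma\tD)\subseteq\tD$ are pairwise disjoint subsets of the lift of $C'$ in $\tD$, up to the bounded multiplicity from Step 2. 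Each is a nonempty region cut out by $\bigO{g}$ sides of $\tD$. So I would bound the number of copies by the number of connected pieces of $\tD\cap(\text{lift of }C')$. Because $\tD$ is convex and $C'$ is a convex-boundary collar (its boundary curves are equidistant curves), each side of $\tD$ crosses the lift of $\partial C'$ a bounded number of times. This gives $\bigO{g}$ pieces, hence, with the bounded multiplicity, $\bigO{g}$ copies.
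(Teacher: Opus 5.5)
\textbf{There is a genuine gap: the counting in your Step~3 does not go through.}

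You show that the pulled-back sets $\gamma^{-1}\widetilde N\cap\tD$ are (boundedly) disjoint subsets of $\tD\cap\rho^{-1}(C')$. That gives no bound on their number in terms of the number of connected pieces, because a single piece can contain arbitrarily many disjoint such sets. To get the injection you want, you need two facts.
First, two copies with $\gamma^{-1}\widetilde B'=\gamma'^{-1}\widetilde B'$, where $\widetilde B'$ is the lifted collar containing $\widetilde N$, satisfy $\gamma'=h^k\gamma$.
Second, only $\bigO{1}$ such powers reach $\widetilde N$.

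The second fact requires each piece $\tD\cap\widetilde B_i'$, and $\widetilde C$ itself, to have bounded extent along the axis. Your Step~2 infers this from injectivity modulo $\langle h\rangle$ alone, and that is false in general. A thin convex neighbourhood of a geodesic arc crossing $\widetilde\alpha$ at a shallow angle is disjoint from all its $h$-translates, yet it spans arbitrarily many fundamental domains of $\langle h\rangle$. For a Dirichlet domain the claim is true, since the bisectors of $\tilde x$ and $h^{\pm1}\tilde x$ are perpendicular to the axis. However, you never use that $\tD$ is a Dirichlet domain.

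Your bound on the number of pieces has a similar problem. Convexity only says that a side of $\tD$ crosses each individual hypercycle at most twice. But $\rho^{-1}(\partial C')$ has infinitely many components, and sides of $\tD$ can be arbitrarily long, so a priori one side could pass through many lifts of the collar. Ruling this out needs that sides are shortest curves on $S$ (Lemma~\ref{lem:dirichletboundary}), together with a uniform separation between distinct lifts of the collar.

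A smaller point concerns Step~1. The bound $r_q\le 2\mu_M$ does not place $q$ in a collar of $\alpha$, because $2\mu_M$ exceeds the Margulis constant. Instead, use $\widetilde N\subseteq N_{\Hyp^2}(\widetilde C,\mu_M)$ together with the collar lemma.

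\textbf{The missing ingredient} is what the paper uses to defeat the long-thin-copies obstacle you correctly identified: Lemma~\ref{lem:dirichletcurve}, which says a shortest curve crosses $\partial\tD$ at most $k/2=\bigO{g}$ times. The paper covers $\widetilde N$ (two extra copies of the cut-open collar plus a $\mu_M$-neighbourhood of its boundary) by $\bigO{1}$ geodesic triangles whose sides are shortest curves on $S$.
Triangles spanned by four boundary points of the cylinder cover most of it.
The leftover constant-diameter region near the boundary has injectivity radius bounded below by Lemma~\ref{lem:thinjradius}, so the $\bigO{1}$ tiles of $Q_c$ covering it are bounded by shortest curves.
Each triangle then meets only $\bigO{g}$ copies, since no copy can lie inside a triangle by area. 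No piece counting or control of axial extent is needed.
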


With these key lemmas at hand and a Euclidean dynamic point location data structure~\cite{ChanN18}, the algorithm for the thick-thin decomposition follows.

\paragraph{Crossing and non-crossing spanners on surfaces}
The proof of our main theorem builds heavily on a recent construction for the same problem in the Euclidean plane~\cite{planesteinernew}, where they achieve a plane Steiner spanner of stretch $1+\eps$ using $\bigO{n/\eps^{3/2}}$ edges, and note that the same idea gives a non-plane Steiner spanner with $\bigO{n/\sqrt\eps}$ edges. We can use these constructions in a small-diameter neighbourhood, which is unsurprising, but in fact far from simple.
First, a brute-force embedding of a constant-radius disk gives distortion proportional to the diameter of said disk, leading to constant multiplicative error. Thus, one needs to adapt the construction of~\cite{planesteinernew} more carefully. 

Unfortunately, the construction of~\cite{planesteinernew} is based on angles, cones, straight lines, and axis-parallel balanced box decompositions, none of which work in the hyperbolic setting. There are no balanced box decompositions, and no model of the hyperbolic plane has straight geodesics \emph{and} conformal angles at the same time, i.e., either we have curved geodesics and conformal angles, or straight geodesics and distorted angles. Here, we opt for straight geodesics and work in the Beltrami-Klein model, where we apply the Euclidean construction of~\cite{planesteinernew} in a more or less black-box fashion.  After proving the stretch bound within one small neighbourhood, we need to consider neighbourhoods and rotations with different centres. It is non-trivial to prove that the number of crossings between different rotations is still small. Recall that for non-crossing spanners, all crossings must be replaced with a Steiner point, directly impacting the edge count. In the end, we successfully prove the following theorem.

\begin{restatable}{theorem}{thmconstdiam}\label{thm:constdiam}
    For any set $P \subset \Hyp^2$ of $n$ points with diameter at most $\frac{1}{4}$ and any $\eps > 0$, we can construct a Steiner $(1+\eps)$-spanner (with crossings) with $\bigO{n / \sqrt{\eps}}$ edges in $\bigO{n \log n / \sqrt{\eps}}$ time, and a non-crossing Steiner $(1+\eps)$-spanner with $\bigO{n / \eps^{3/2}}$ edges in $\bigO{n \log n / \eps^{3/2}}$ time.
    Moreover, if we have two point sets $P_1,P_2 \subset \Hyp^2$ of $n$ points in total, then their non-crossing Steiner $(1+\eps)$-spanners constructed in this manner will only intersect in $\bigO{n / \eps^{3/2}}$ points.
\end{restatable}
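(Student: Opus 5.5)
We plan to reduce to the Euclidean construction of~\cite{planesteinernew}, applied essentially as a black box in the Beltrami--Klein model. We choose a centre $c$ of $P$ (say any point of $P$) and place it at the origin of the Klein disk. Then all of $P$ lies in a Euclidean disk of radius $\tanh(1/4)<1/4$, and hyperbolic geodesics are Euclidean segments. The first step is a \emph{local comparison lemma}. For $x,y$ in this small disk, the Klein metric differs from a Euclidean metric by a factor $1+\bigO{r^2}$, where $r$ is the distance to the origin. This distortion is not isotropic, however: radial lengths scale by $1/(1-r^2)$ and tangential ones by $1/\sqrt{1-r^2}$. So a global comparison only gives a constant factor. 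Instead we aim for a comparison that holds \emph{relative to a moving centre}. For $x,y$ within distance $\delta$ of a point $q$, re-centre the model at $q$ by a hyperbolic isometry, which maps segments to segments. In the recentred picture $d_{\Hyp}(x,y)=(1+\bigO{\delta^2})\,\|x-y\|$.

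Next we adapt the spanner construction. The construction of~\cite{planesteinernew} ensures stretch $1+\eps$ by routing each pair $p,q$ along a path contained in a thin neighbourhood of the segment $pq$. This path consists of a bounded number of near-parallel pieces, following cone directions of angular width about $\sqrt{\eps}$ together with a box/quadtree hierarchy. We will run it on the Klein images of $P$, which produces Steiner points and straight edges, hence hyperbolic geodesic edges. For the stretch, we fix a pair $p,q$ and consider the path $\pi$ the Euclidean analysis provides. We split $\pi$ into pieces of Euclidean length comparable to $|pq|$ and compare each piece with the geodesic $pq$ by re-centring at the midpoint of $pq$. Since $\pi$ stays within an $\bigO{\sqrt\eps\,|pq|}$-neighbourhood of $pq$ and all relevant points lie within $\bigO{|pq|}$ of that midpoint, the anisotropy contributes only a factor $1+\bigO{|pq|^2}$. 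The deviation term is governed by the same angular bound as in the Euclidean proof. For pairs with $|pq|$ small this is fine. For $|pq|$ up to $1/4$, however, a factor $1+\bigO{1/16}$ is not good enough.

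This is the main obstacle. We would handle it by strengthening the comparison to angles: a Klein segment direction at a point of radius $r$ is distorted by an angle of order $r^2$ times a bounded function. The key point is that the distortion is a smooth linear map close to identity and it varies slowly. Over a path staying $\sqrt\eps|pq|$-close to $pq$, the metric is then, to first order, a fixed linear map $A$ applied along the whole path. A constant linear map preserves ratios of lengths along parallel directions. It also preserves ``within angle $\theta$'' up to a factor $\|A\|\|A^{-1}\|=1+\bigO{1/16}$, which is a constant. So it suffices to run the Euclidean construction with $\eps$ replaced by $c\eps$ for a small constant $c$. The second-order variation of $A$ is $\bigO{|pq|}$ along the path and gives only a $1+\bigO{\eps|pq|}$ correction. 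The edge counts $\bigO{n/\sqrt\eps}$ and $\bigO{n/\eps^{3/2}}$ and the running times follow directly from the Euclidean bounds.

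Finally, for two point sets $P_1,P_2$ the Klein models may be centred at different points $c_1,c_2$ with $d(c_1,c_2)$ arbitrary. We need to bound the crossings of their plane spanners by $\bigO{n/\eps^{3/2}}$. We would first observe that crossings only occur if the hulls meet, so $d(c_1,c_2)\le 1/2$. The change of model between the two centres is then a projective map close to identity, so it maps segments to segments and nearly preserves the quadtree cells. We then charge each crossing to an edge of one spanner meeting a cell of the other's hierarchy. Because the structure of~\cite{planesteinernew} has bounded ``cell-crossing number'' per edge, with edges of length comparable to their cell and $\bigO{1}$ overlap between cells of comparable size, each edge of one spanner crosses $\bigO{1}$ edges of the other on average. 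We expect this charging to need the most care, since it goes beyond the black-box use of~\cite{planesteinernew}.
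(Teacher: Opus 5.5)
Your overall reduction is the paper's: centre a Klein model at a point of $P$, run the Euclidean construction of~\cite{planesteinernew} on the images as a black box, and inherit edge counts and running times because Klein segments are hyperbolic geodesics. However, the step that carries all the difficulty is asserted rather than proved. You need the Euclidean cone-restricted path $\pi$ to have hyperbolic length $(1+\bigO{\eps})|pq|$ even when $|pq|$ and the distance to the origin are constants, and your heuristic does not give this. Freezing the metric at one linear map $A$ is accurate only up to a relative $\bigO{|pq|}$. The part of this variation that distinguishes $\pi$ from $pq$ is of first order in $\sqrt\eps$, for two reasons. First, a transverse displacement $\eta=\bigO{\sqrt\eps\,|pq|}$ changes the length element by a relative $\bigO{\sqrt\eps\,|pq|}$. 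Second, the Klein metric has a generally nonzero cross term $g_x(u,u^\perp)=(x\cdot u)(x\cdot u^\perp)/(1-\|x\|^2)^2$, so tilting an edge by $\theta\le\sqrt\eps$ changes its length per unit of length projected onto $pq$ by a relative amount of order $\theta$, not $\theta^2$. Bounded separately, these terms give only stretch $1+\bigO{\sqrt\eps}$. They cancel after integration only because $pq$ is a hyperbolic geodesic, so the first variation of length vanishes. You never use this, so the claim that the variation of $A$ ``gives only a $1+\bigO{\eps|pq|}$ correction'' is unsupported. In fact, your own re-centring at the midpoint of $pq$ already suffices if you compare pointwise instead of via global bi-Lipschitz bounds. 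Put $pq$ on a diameter and write $\pi$ as $x(t)=tu+\eta(t)u^\perp$; it is still monotone with $|\eta|,|\dot\eta|=\bigO{\sqrt\eps}$, since the re-centring map has bounded distortion on the relevant region. Then the hyperbolic length element is $\sqrt{1-\eta^2+2t\eta\dot\eta+\dot\eta^2(1-t^2-\eta^2)+\eta^2\dot\eta^2}/(1-t^2-\eta^2)=(1+\bigO{\eps})/(1-t^2)$, which integrates to $(1+\bigO{\eps})|pq|$. The anisotropy is harmless because the radial stretching is along $pq$. The paper instead proves Lemma~\ref{lem:conerestrict} with Lambert quadrilaterals. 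Each edge of $\pi$, being at Euclidean angle at most $\sqrt\eps$ to $ab$ and within distance $\sqrt\eps$ of it, is at most $(1+\bigO{\eps})$ times its hyperbolic projection onto $ab$, and these projections add up to $|ab|$.

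For the ``moreover'' part, your charging scheme is not yet an argument. Re-centring between centres at distance up to $\frac12$ is a projective map of bounded distortion, but it is not close to the identity, so cells of one hierarchy are not nearly cells of the other. Moreover,~\cite{planesteinernew} supplies no per-edge ``cell-crossing'' bound of the kind you invoke, and the assertion that edges cross $\bigO{1}$ edges of the other spanner on average is exactly what must be shown. The missing idea is that the argument of Lemma~14 of~\cite{planesteinernew} uses only the length of a segment $\sigma$ and the structure of one directional graph $G_j$. It therefore bounds, for an \emph{arbitrary} segment $\sigma$, the number of edges of $G_j$ that cross $\sigma$ and are at least as long as $\sigma$ by $\bigO{1/\sqrt\eps}$. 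An edge $uv$ of the spanner centred at $o$ is also a straight segment in the Klein model centred at $o'$, so you can apply that argument there unchanged. You pay only the constant length distortion of Lemma~\ref{lem:curvelength}, by cutting $uv$ into three pieces; this is Lemma~\ref{lem:coneintersect}. Charging each crossing to the shorter edge then gives $\bigO{1/\sqrt\eps}$ crossings per edge and directional graph, hence $\bigO{1/\eps}$ per edge and $\bigO{n/\eps^{3/2}}$ overall. Also, Steiner points need not lie in the convex hulls, so your ``hulls meet'' reduction should be replaced by the paper's discarding of Steiner points farther than $\frac12$ from the centre. This also keeps every edge in the region where Lemma~\ref{lem:curvelength} gives constant distortion.
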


To build our spanners, we first build a pseudo-$\frac{1}{16}$-net $C$ (i.e., a $\frac{1}{16}$-net of the thick parts), which is easily constructed once a thick-thin decomposition is available. For each point in $C$ we consider a circle of constant radius and add Steiner points at distance $\sqrt{\eps}$ along the circle; let $S$ denote the Steiner points added on these circles. (The construction in the thin parts is slightly more technical and not discussed here.) We apply the constant-diameter spanner construction on the points from the input and $S$ in each of these circles.


To see why the resulting spanner has stretch $1+\eps$ in each thick part, one can see that any point pair within some small constant distance will be covered by some disk in which we have applied the constant-diameter construction. For point pairs at greater distances, we observe that along the shortest path connecting the pair of points on the surface, we can find a point $s_i\in S$ at distance $\bigO{\sqrt{\eps}}$ from the shortest path, and at distance $\Theta(1)$ from the previous point $s_{i-1}\in S$ near the shortest path. Thus, by a lemma of~\cite{Kisfaludi-BakW25}, the spanner path between $s_{i-1}$ and $s_i$ is a $(1+\bigO{\eps})$-distortion of the corresponding constant-length section of the shortest path. The argument for traversing the thin parts is slightly more involved.

Finally, the algorithm to construct the spanner requires some supporting data structures, including hyperbolic approximate nearest neighbours~\cite{Kisfaludi-BakW25}.

\paragraph{Non-crossing spanner on the hyperbolic plane}

Unfortunately our hyperbolic surface result for genus $g$ does not extend to the hyperbolic plane simply with $g=0$, since the surface area bound $\bigO g$ is crucially used in our construction. In the hyperbolic plane, the area is infinite, but we can consider the convex hull of the input point set; this has area at most $(n-2)\pi$. We introduce so-called \emph{neck decompositions} for convex hyperbolic polygons: they consist of thin parts called \emph{necks} (where two edges of the polygon run within constant distance of each other for an extended length) and the remaining thick parts. One can think of thick parts as branching regions/hubs that are attached into a tree-like structure, where necks act as edges. 
We can compute this decomposition in linear time using a recursion that resembles the Dobkin-Kirkpatrick Hierarchy~\cite{DobkinKirkpatrick}. With the decomposition at hand, the construction is very much analogous to our construction on surfaces.


\section{Preliminaries}\label{sec:prelim}

\paragraph{Hyperbolic geometry.}
To work with hyperbolic surfaces, we will frequently need to use the hyperbolic plane $\Hyp^2$.
We give the relevant details here, but refer to \cite{cannon1997hyperbolic} and the textbooks \cite{iversen1992hyperbolic,thurston97three,benedetti1992lectures} for a deeper understanding.
Though most of our proofs do not use any particular model of $\Hyp^2$ (and our figures are often not drawn in any particular model either), our algorithms ultimately work in the (Poincaré) half-plane model.
Here, any point in $\Hyp^2$ is represented by a complex number with positive imaginary part.
The distance between points $a,b$ differs significantly from their Euclidean distance $|a-b|$ and is given by $2\arsinh\sqrt{\frac{|a-b|^2}{4\Im(a)\Im(b)}}$.
Note that we cannot compute $\arsinh$ and square roots in the real-RAM model, but since they are strictly increasing we can nevertheless compare distances to each other and to constants of the form $2\arsinh\sqrt{c}$.
Thus, whenever we mention comparing distances with a specific given constant we will in fact compare them with an appropriate approximation of the constant in this given form.
In the half-plane model, all important curves appear as either Euclidean line segments or circle arcs.
In particular, this includes when the curve is one of the following:
a hyperbolic line segment, the shortest curve $pq$ connecting $p$ to $q$;
a circle, the set of points at equal distance from a centre;
or a hypercycle, the set of points at equal distance from a (hyperbolic) line.

We do not use any other particular properties of the half-plane model, other than that it can accurately represent particular tilings.
Namely, from \cite{steinerspanner} we can extract the following tilings by taking an appropriate level of the quadtree and replacing every curve (horocycle) forming the top boundary of some cell with a line segment.
\begin{theorem}[Tilings from \cite{steinerspanner}]\label{thm:quadtiling}
    For any value $r \geq 0$ there is a tiling $Q_r$ of $\Hyp^2$ where each tile is an irregular $2^{\bigO{1+r}}$-gon of diameter at most $r$ that contains a ball of radius $\bigOm{r}$, and each vertex has degree at most four.
\end{theorem}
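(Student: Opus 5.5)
The plan is to work entirely in the half-plane model, starting from the \emph{binary tiling} and then either coarsening it (for large $r$) or refining it as a quadtree (for small $r$). Both constructions are the ones underlying the hyperbolic quadtree of~\cite{steinerspanner}.

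\textbf{Large $r$.} Fix $k = \Theta(1+r)$. For every integer $j$, the horizontal band $2^{jk} \le \Im z \le 2^{(j+1)k}$ is cut by the vertical lines $\Re z \in 2^{(j+1)k}\mathbb{Z}$.
\begin{itemize}
\item Each resulting cell is isometric to $[0,2^k]\times[1,2^k]$ via $z\mapsto z/2^{jk}$, so all tiles of one level are congruent.
\item The top horocyclic side of this cell has hyperbolic length $1$ and its vertical sides have length $k\ln 2$. Hence the diameter is $\Theta(k)$, and choosing the constant in $k$ makes it at most $r$.
\item Conversely, the cell contains a ball of radius $\Omega(k)$ around the point $(2^{k-1}, 2^{k/2})$. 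This holds because the distance from this point to each of the four sides grows linearly in $k$; this is a direct computation with the distance formula from the preliminaries.
\item The bottom side of a cell in band $j$ is subdivided by the $2^k$ tops of cells in band $j-1$. So each tile has at most $2^k+3 = 2^{\bigO{1+r}}$ vertices.
\item Every vertex is either a T-junction (degree $3$) or a point where a vertical line of band $j$ continues into band $j-1$ (degree at most $4$).
\end{itemize}

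\textbf{Small $r$ ($r\le 1$, say).} Start from the binary tiling ($k=1$) and subdivide each cell $i = \Theta(\log(1/r))$ times. Each subdivision step halves the cell in the $x$-direction and halves it in $\log y$, exactly as in the quadtree of~\cite{steinerspanner}.
\begin{itemize}
\item A level-$i$ cell near height $y$ has Euclidean width and height both $\Theta(2^{-i} y)$.
\item It is therefore a near-square of hyperbolic side $\Theta(2^{-i})$, so its diameter is $\bigO{2^{-i}}\le r$ and its inradius is $\Omega(2^{-i}) = \Omega(r)$.
\item Because the width doubles only from one band of the binary tiling to the next, each side of a cell meets at most two neighbouring cells. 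This gives $\bigO{1}$ vertices per tile and vertex degree at most $4$ as above.
\end{itemize}

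\textbf{Straightening the horocycles.} Finally, replace each horocyclic top side of a cell by the hyperbolic segment joining its endpoints. The bottom side of the cell above is the union of these tops, so it becomes a polyline, and the tiling property is preserved: adjacent tiles share the same straightened curves.

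Vertex counts and degrees are unchanged. For the metric bounds, the straightening moves each boundary point by at most the distance between a horocyclic arc of hyperbolic length $\ell$ and its chord. This distance is $\bigO{\log(1+\ell)}$ for large $\ell$ and $\bigO{\ell^2}$ for small $\ell$. With $\ell = 1$ (large-$r$ case) or $\ell=\Theta(r)$ (small-$r$ case), it is a constant fraction of the relevant scale. Enlarging the constant in $k$, or in $i$, then absorbs this change in both the diameter bound and the inscribed-ball bound.

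\textbf{Main obstacle.} The step I expect to be delicate is exactly this straightening. I must check that the chord of a cell's top side does not cross the polyline bottom of the same cell, so that tiles stay simple polygons. I must also check that the inscribed ball survives.

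I would handle both by comparing the chord's maximal height above the horocycle with the cell's vertical extent, which is $k\ln 2$ or $\Theta(r)$. Both chords bulge upward (geodesics in the half-plane are arcs above the horocycle), so the tile is shifted rather than pinched. Choosing the constants generously then gives the claimed bounds.
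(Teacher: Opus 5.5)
Your proposal is correct and takes essentially the same approach as the paper, which obtains $Q_r$ by taking an appropriate level of the binary-tiling-based hyperbolic quadtree of \cite{steinerspanner} and replacing each horocyclic top side by the geodesic segment between its endpoints. Your explicit construction (coarsened bands for large $r$, quadtree refinement of binary tiles for small $r$) and your chord-bulge estimate correctly fill in the verification the paper leaves implicit: the bulge is about $0.11$ for unit-length arcs and $O(r^2)$ for short ones, against vertical extents $k\ln 2$ and $\Theta(r)$, and the intermediate range of $r$ between your two regimes is handled trivially by reusing the $r=1$ tiling.
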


\paragraph{Hyperbolic surfaces.}
To be able to work with surfaces algorithmically, it is important to define how they appear in our input. For a thorough introduction to hyperbolic surfaces, see the books~\cite{Beardon83,Massey91}. It is well-known that all hyperbolic surfaces are locally isometric to the hyperbolic plane $\Hyp^2$. They can be obtained as the quotient of $\Hyp^2$ under the action of $\Gamma$, which is some discrete subgroup of isometries of~$\Hyp^2$ (analogous to how a square torus can be seen as a quotient of the Euclidean plane). A \emph{Dirichlet domain} $\tD_{\tilde x}$ of a point $\tilde x\in \Hyp^2$ is the (closed) Voronoi cell of $\tilde x$ in the Voronoi diagram of $\Gamma \tilde x$. (The Voronoi diagram is the decomposition of $\Hyp^2$ into convex polygonal cells according to the nearest point of $\Gamma \tilde x$). It is well-known that $\tD_{\tilde x}$ is a \emph{fundamental domain} of the surface $S=\Hyp^2/\Gamma$, that is, $\Gamma \tD_{\tilde x}=\Hyp^2$ and for each point $y\in \inter \tD_{\tilde x}$ we have $\gamma y \neq \gamma'y$ for distinct $\gamma,\gamma'\in \Gamma$.
Let $\rho : \Hyp^2 \to S$ denote the projection map.
For a point $p \in S$, we use $\tilde{p} \in \Hyp^2$ to denote a lift, i.e.\ $\rho(\tilde p) = p$ or equivalently $\tilde p$ is a representative in $\Hyp^2$ of the element $p \in \Hyp^2 / \Gamma$.
We also extend this to sets of points.

\paragraph{Input model.} In our algorithms, the Dirichlet domain is represented by the cyclic sequence of its vertices in the Poincaré half-plane model, its centre, and its side-pairing information. The corresponding Möbius transformations are stored as projective $2\times2$ real matrices; we do not normalize their determinants to one. From the side-pairing data, these matrices can be computed using only standard arithmetic operations in $\bigO g$ time. The points on the surface must fall in the given Dirichlet domain.


\section{Thick-thin decomposition for hyperbolic surfaces}

First, we give an algorithm to compute the thick-thin decomposition of a genus $g$ hyperbolic surface $S = \Hyp^2 / \Gamma$, given by a Dirichlet domain $\tD$.

\begin{figure}
    \centering
    \includegraphics{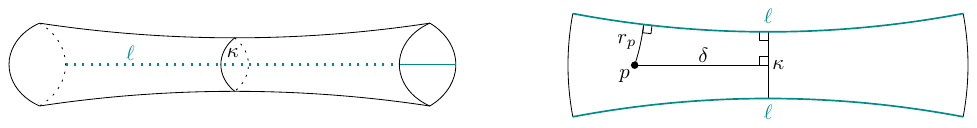}
    \caption{A thin part of the surface (left) can be cut open along a curve $\ell$ to get a shape in $\Hyp^2$ bounded by two copies of $\ell$ and two hypercycles (right).
    The Lambert quadrilateral on the right now lets us calculate the injectivity radius $r_p$ for a point $p$ in the thin part.}
    \label{fig:thinlambert}
\end{figure}

Thin parts have a lot of structure, which is what makes the thick-thin decomposition useful.
As mentioned, each thin part $T$ is homeomorphic to a cylinder, but on top of that, each can also be defined as the neighbourhood of some (short) closed geodesic $\kappa$ \cite[Theorem 4.5.6]{thurston97three}.
Consequently, we can cut $T$ open with a geodesic $\ell$ perpendicular to $\kappa$ to get a region in $\Hyp^2$ bounded by two copies of $\ell$ and two hypercycles (see Figure~\ref{fig:thinlambert}).
The length of $\kappa$ completely determines the shape of~$T$.
In particular, for a point $p$ at distance $\delta$ from $\kappa$, we get the relation $\sinh r_p = \sinh\frac{|\kappa|}{2} \cdot \cosh\delta$ from the \emph{Lambert quadrilateral} shown in Figure~\ref{fig:thinlambert}; a quadrilateral with three right angles and one acute angle at $p$.
This also fixes the distance $\delta_{\max}$ of $\kappa$ to the boundary of $T$ as $\cosh \delta_{\max} = \sinh \mu / \sinh\frac{|\kappa|}{2}$.

\begin{lemma}\label{lem:thinjradius}
    A point $p$ in a $\mu_M$-thin part at distance $\delta$ from the boundary has injectivity radius $r_p = \bigOm{e^{-\delta}}$.
    If $\delta \geq 1$, then $r_p = \bigTh{e^{-\delta}}$.
\end{lemma}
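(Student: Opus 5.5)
We use the Lambert quadrilateral relation stated just above: for $p$ at distance $\delta'$ from the core geodesic $\kappa$, $\sinh r_p = \sinh\frac{|\kappa|}{2}\cosh\delta'$. We combine it with the boundary relation $\cosh\delta_{\max} = \sinh\mu_M/\sinh\frac{|\kappa|}{2}$, taking $\mu=\mu_M$.

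A point at distance $\delta$ from the boundary of the thin part lies at distance $\delta' = \delta_{\max}-\delta$ from $\kappa$, since the thin part is the $\delta_{\max}$-neighbourhood of $\kappa$. Eliminating $\sinh\frac{|\kappa|}{2}$ gives the exact identity
\[
\sinh r_p = \sinh\mu_M \cdot \frac{\cosh(\delta_{\max}-\delta)}{\cosh\delta_{\max}}.
\]

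\textbf{Lower bound.} We use $\cosh(a-b)/\cosh a \ge e^{-b}/1$ up to constants. Concretely, $\cosh(\delta_{\max}-\delta) \ge \tfrac12 e^{\delta_{\max}-\delta}$ and $\cosh\delta_{\max}\le e^{\delta_{\max}}$ give $\sinh r_p \ge \tfrac12\sinh\mu_M\, e^{-\delta}$. Since $r_p\le\mu_M$ is bounded, $\sinh r_p=\Theta(r_p)$, so $r_p=\Omega(e^{-\delta})$. This holds for every $\delta\in[0,\delta_{\max}]$.

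\textbf{Upper bound for $\delta\ge 1$.} We need $\cosh(\delta_{\max}-\delta)/\cosh\delta_{\max} = O(e^{-\delta})$. Write the ratio as
\[
\frac{e^{\delta_{\max}-\delta}+e^{\delta-\delta_{\max}}}{e^{\delta_{\max}}+e^{-\delta_{\max}}}.
\]
The first term contributes at most $e^{-\delta}$. The second term is at most $e^{\delta-2\delta_{\max}}\le e^{-\delta}$, because $\delta\le\delta_{\max}$. Hence the ratio is at most $2e^{-\delta}$, and therefore $r_p\le\sinh r_p=O(e^{-\delta})$.

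This argument actually holds for all $\delta$. The assumption $\delta\ge1$ is only needed to make the constant-factor comparison $\sinh r_p\asymp r_p$ clean, and to cover points on the other side of $\kappa$. For such points, $\delta$ should be read as the distance to the nearest boundary component, which keeps $\delta\le\delta_{\max}$, so nothing changes.

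The main point to verify carefully is the identification $\delta'=\delta_{\max}-\delta$. It relies on the thin part being exactly the tubular $\delta_{\max}$-neighbourhood of $\kappa$ with hypercycle boundaries, which is the cited structure theorem for $\mu\le\mu_M$. Everything else is elementary estimation of hyperbolic functions.
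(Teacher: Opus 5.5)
Your proof is correct and follows the paper's route. Both start from the same Lambert-quadrilateral identity $\sinh r_p=\sinh\mu_M\cosh(\delta_{\max}-\delta)/\cosh\delta_{\max}$; you bound the ratio directly with exponentials, whereas the paper expands it to $\cosh\delta-\tanh\delta_{\max}\sinh\delta$ and sandwiches that between $e^{-\delta}$ and $1/\cosh\delta$. You are also right that the bounds hold for every $\delta\in[0,\delta_{\max}]$, though your stated reasons for the $\delta\ge 1$ hypothesis don't hold up: $\sinh x\asymp x$ already holds on $[0,\mu_M]$ without it, so the hypothesis is simply superfluous.
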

\begin{proof}
    From the equations above and using the difference of arguments identity for $\cosh$,
    \begin{align*}
        \sinh r_p
        &= \sinh\frac{|\kappa|}{2} \cdot \cosh(\delta_{\max} - \delta) \\
        &= \sinh\frac{|\kappa|}{2} \cdot (\cosh\delta_{\max}\cosh\delta - \sinh\delta_{\max}\sinh\delta) \\
        &= \sinh\frac{|\kappa|}{2} \cosh\delta_{\max} \cdot (\cosh\delta - \tanh\delta_{\max} \sinh\delta) \\
        &= \sinh\mu_M \cdot (\cosh\delta - \tanh\delta_{\max} \sinh\delta) \\
        &= \bigTh{\cosh\delta - \tanh\delta_{\max} \sinh\delta}.
    \end{align*}
    For the lower bound, first note that $\tanh x \leq 1$ and $\sinh x \geq 0$ for $x \geq 0$, so
    \begin{align*}
        \cosh\delta - \tanh\delta_{\max} \sinh\delta
        \geq \cosh\delta - \sinh\delta
        = e^{-\delta}.
    \end{align*}
    For the upper bound, we use $\tanh\delta_{\max} \geq \tanh \delta$ to obtain
    \begin{align*}
        \cosh\delta - \tanh\delta_{\max} \sinh\delta
        &\leq \cosh\delta - \tanh\delta \sinh\delta \\
        &= \cosh\delta - \sinh^2\delta / \cosh\delta \\
        &= \cosh\delta - (\cosh^2\delta - 1) / \cosh\delta \\
        &= 1 / \cosh\delta.
    \end{align*}
    Finally we use that $\cosh x = \bigTh{e^x}$ for $x \geq 1$, so that for $\delta \geq 1$ we also have an upper bound $\bigO{e^{-\delta}}$.
    Now, applying that $\arsinh x = \bigTh{x}$ for $x \leq 1$ gives the claimed results.
\end{proof}

For the proofs that follow we will need these two properties of Dirichlet domains from \cite{DespreKT24}.

\begin{lemma}[Proposition 4 of \cite{DespreKT24}]\label{lem:dirichletboundary}
    Any Dirichlet domain for a hyperbolic surface of genus~$g$ has a boundary made up of at most $24g - 12$ shortest curves.
\end{lemma}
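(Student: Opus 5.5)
The count splits into two parts: a combinatorial bound on the number of sides of $\tD$, and a geometric argument that each side can be cut into at most two pieces, each projecting to a shortest curve on $S$. Together these give $2\cdot(12g-6)=24g-12$.

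\textbf{Counting sides.} Project the boundary of $\tD$ to $S$ via $\rho$. The side pairing identifies sides in pairs and vertices in cycles, so we obtain a cellular graph on $S$ with exactly one face. If it has $V$ vertices and $E$ edges, then $E$ is half the number of sides of $\tD$, and Euler's formula gives $V-E+1=2-2g$. Every vertex of this graph has degree at least $3$: a Dirichlet domain of a point with trivial stabiliser has no degree-$2$ vertex cycles, since a vertex where only two copies meet would lie in the interior of a single geodesic side. Hence $2E\ge 3V$, so $E\le 6g-3$, and $\tD$ has at most $12g-6$ sides.

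\textbf{Cutting sides into shortest curves.} Each side $\sigma$ of $\tD$ lies on the perpendicular bisector of $\tilde x$ and some $\gamma\tilde x$. Let $m$ be the foot of the perpendicular from $\tilde x$ to that bisector, i.e.\ the midpoint of $\tilde x\,\gamma\tilde x$. I would cut $\sigma$ at $m$ if $m\in\sigma$, giving at most two subsegments $\sigma_1,\sigma_2$. Along each $\sigma_i$, the distance to $\tilde x$ is monotone, and each $\sigma_i$ together with $\tilde x$ spans a geodesic triangle $\Delta_i$ contained in $\tD$ by convexity.

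To show that $\rho(\sigma_i)$ is a shortest curve, take $a,b\in\sigma_i$ and suppose some path on $S$ from $\rho(a)$ to $\rho(b)$ is shorter than $d(a,b)$. Lifting it from $a$ yields a geodesic from $a$ to $\gamma' b$ with $\gamma'\ne 1$ (or $\gamma'$ equal to the pairing element, which merely reproduces the paired side), of length less than $d(a,b)$. To reach a contradiction, I would compare via $\tilde x$: we have $d(b,\tilde x)\le d(b,\gamma'^{-1}\tilde x)$ since $b\in\tD$, and $d(a,\tilde x)\le d(a,\gamma'^{-1}\tilde x)$ likewise. I would then use the triangle $\Delta_i$ and the monotonicity of the distance to $\tilde x$ along $\sigma_i$ to show that $b$ is closer to $\tilde x$ than $\gamma' b$ is to $a$ "in the right direction". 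The idea is a hyperbolic-trigonometry argument in the quadrilateral $a,b,\gamma'b,\gamma'\tilde x$, exploiting that the angle at $\tilde x$ in $\Delta_i$ is less than $\pi/2$ on each half.

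\textbf{Main obstacle.} This last geometric step is the delicate part; the Euler count is routine. If the direct comparison fails, the fallback is to note that this is exactly Proposition~4 of~\cite{DespreKT24} and import their argument. That argument, as I understand it, also proceeds by splitting each side at the foot of the perpendicular from the centre and then using the Voronoi property of both endpoints.
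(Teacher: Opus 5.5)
\textbf{There is a genuine gap: the shortest-curve step is never proved.}

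The paper gives no proof of this lemma; it imports it from \cite{DespreKT24}. Your skeleton is the natural one and gives the right count $2(12g-6)=24g-12$. The Euler-characteristic bound of $12g-6$ sides is correct: degree at least $3$ holds because every lift of a vertex is a genuine corner, where at least three Voronoi cells of $\Gamma\tilde x$ meet. Cutting each side at the foot $m$ of the perpendicular from $\tilde x$ is also the right cut.

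The gap is the step you flag yourself: you never show that $\rho(\sigma_i)$ is a shortest curve. The comparisons you list cannot do it, because they only use the disk around $\tilde x$. With $r=d(b,\tilde x)$, the information $d(\gamma' b,\tilde x)\ge r$ only gives $d(a,\gamma' b)\ge r-d(a,\tilde x)$. By the strict triangle inequality in the triangle $\tilde x a b$, this is less than $d(a,b)$. Indeed, the point of $\partial B(\tilde x,r)$ nearest to $a$ lies on the ray from $\tilde x$ through $a$, beyond the bisector. It is excluded only by the second orbit point $\gamma\tilde x$, whose bisector carries $\sigma$. Your parenthetical exempting the pairing element is also unjustified, and unnecessary.

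\textbf{The missing idea: an empty-circle argument using both centres.} Let $a,b\in\sigma_i$ with $a$ between $m$ and $b$, $a\neq b$, and put $r=d(b,\tilde x)=d(b,\gamma\tilde x)$.
\begin{itemize}
\item Since $b\in\tD$, for every $\gamma'\in\Gamma$ we have $d(\gamma' b,\tilde x)=d(b,\gamma'^{-1}\tilde x)\ge r$ and $d(\gamma' b,\gamma\tilde x)=d(b,\gamma'^{-1}\gamma\tilde x)\ge r$. So $\gamma' b$ lies outside the union $U$ of the open disks of radius $r$ around $\tilde x$ and $\gamma\tilde x$.
\item By monotonicity of the distance to $\tilde x$ along the bisector, $d(a,\tilde x)<r$, so $a\in U$. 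The segment from $a$ to $\gamma' b$ therefore leaves $U$ at some point $z\in\partial U$.
\item If $z\in\partial B(\tilde x,r)$, then $d(z,\gamma\tilde x)\ge r$ puts $z$ on the closed $\tilde x$-side of the bisector. This forces $\angle m\tilde x z\ge\angle m\tilde x b$, and hence $\angle a\tilde x z\ge\angle a\tilde x b$, since $a$ lies on the segment $mb$. By the hyperbolic law of cosines, $d(a,z)$ increases with $\angle a\tilde x z$, so $d(a,z)\ge d(a,b)$.
\item The case $z\in\partial B(\gamma\tilde x,r)$ follows by reflecting in the bisector, which swaps $\tilde x$ and $\gamma\tilde x$ and fixes $a$ and $b$.
\end{itemize}
Hence $d(a,\gamma' b)\ge d(a,z)\ge d(a,b)$ for all $\gamma'\in\Gamma$. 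This says exactly that $\rho([a,b])$ realises the surface distance between its endpoints, which is what is needed. The triangle $\Delta_i$, the acute angle at $\tilde x$, and the quadrilateral $a,b,\gamma' b,\gamma'\tilde x$ play no role. Without this argument, or an explicit appeal to \cite{DespreKT24} as the paper makes, your proposal does not establish the lemma.
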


\begin{lemma}[Lemma 5 of \cite{DespreKT24}]\label{lem:dirichletcurve}
    Let $\tD$ be a Dirichlet domain with $k$ edges.
    A shortest curve between two points crosses the boundary of $\tD$ at most $k/2$ times.
\end{lemma}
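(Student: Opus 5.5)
The plan is to work on the surface $S$ rather than in the universal cover, and to show that a shortest curve crosses each \emph{surface edge} at most once. The boundary of $\tD$ consists of $k$ sides, identified in pairs by the side-pairing transformations. On $S$ they therefore project to $k/2$ geodesic arcs; call these the edges of the cut graph. A crossing of the boundary of $\tD$ by the lifted curve is exactly a crossing, on $S$, of one of these $k/2$ edges. So it suffices to prove that a shortest curve $\sigma$ between two points of $S$ crosses each such edge $e$ at most once.

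By Lemma~\ref{lem:dirichletboundary}, each side of $\tD$ is a shortest curve. I will use this in the following form: the projected edge $e$ is a shortest curve on $S$ between its endpoints. Hence every sub-arc of $e$ is also a shortest curve between its own endpoints, since a sub-arc of a shortest curve is shortest.

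Now suppose, for contradiction, that $\sigma$ crosses $e$ at two distinct points $a$ and $b$, with $a$ before $b$ along $\sigma$.
\begin{itemize}
\item Let $\sigma_{ab}$ be the part of $\sigma$ between $a$ and $b$, and let $e_{ab}$ be the sub-arc of $e$ between them.
\item Since $e_{ab}$ is a shortest curve, $|e_{ab}| = d_S(a,b) \le |\sigma_{ab}|$.
\item Replace $\sigma_{ab}$ by $e_{ab}$. This gives a curve $\sigma'$ with the same endpoints as $\sigma$ and $|\sigma'| \le |\sigma|$, so $\sigma'$ is also shortest.
\item The curve $\sigma'$ has a corner at $a$. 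Indeed, $\sigma$ crosses $e$ transversally at $a$, so the incoming direction of $\sigma$ at $a$ differs from the direction of $e_{ab}$; if they agreed, $\sigma$ would run along the geodesic $e$ and would not cross it.
\item A shortest curve must be a geodesic, hence smooth. Equivalently, rounding off the corner at $a$ strictly shortens $\sigma'$. This contradicts minimality of $\sigma$.
\end{itemize}
Note that homotopy class plays no role here, because $\sigma$ is globally shortest among all curves with those endpoints.

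The main obstacle I expect is making ``crossing'' precise at degenerate positions, where the argument above does not apply directly.
\begin{itemize}
\item \emph{Vertices.} If $\sigma$ passes through a vertex of $\tD$, the lift leaves one copy of $\tD$ through a vertex. I would handle this by a convention, charging such a passage to one incident edge that it genuinely separates. I would then check that the corner argument still applies to that edge, since the two sides at a vertex meet at an angle less than $\pi$ in the convex polygon $\tD$.
\item \emph{Running along a side.} A shortest curve might run along part of a side rather than cross it. Then it is a geodesic coinciding with $e$ on an interval, so it contains a piece of $e$. I would count this as at most one crossing, and argue that $\sigma$ cannot return to $e$ afterwards, again by the replacement argument.
\item \emph{The endpoint $b$.} The point $b$ could be an endpoint of $e$, i.e.\ a vertex. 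The sub-arc $e_{ab}$ is still defined in that case, so the argument goes through unchanged.
\end{itemize}

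Once these cases are settled, $\sigma$ meets each of the $k/2$ surface edges at most once. The lifted curve therefore crosses the boundary of copies of $\tD$ at most $k/2$ times, which is the claim of Lemma~\ref{lem:dirichletcurve}.
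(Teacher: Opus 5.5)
Your exchange idea is the right tool, but the step it rests on is false. You read Lemma~\ref{lem:dirichletboundary} as saying that each side of $\tD$, and hence each projected edge $e$, is a shortest curve on $S$. The lemma does not say this. It says $\partial\tD$ is a union of at most $24g-12$ shortest curves, and since a Dirichlet domain has at most $12g-6$ sides, this count $24g-12=2(12g-6)$ corresponds to cutting sides into two pieces.

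Sides are in general not shortest curves. For the Bolza surface, the regular octagon with angles $\pi/4$ is the Dirichlet domain of its centre, and all eight vertices are identified to one point. So every side projects to a closed geodesic loop. Take $a,b$ on one side, close to its two endpoints. Then $d_S(a,b)$ is tiny, because you can pass through the common vertex point. The sub-arc $e_{ab}$ along the side, however, has length close to the full side length (about $3.06$). So your inequality $|e_{ab}| = d_S(a,b)\le|\sigma_{ab}|$ fails, and the replacement step has nothing to stand on. On a loop, ``the'' sub-arc between $a$ and $b$ is not even well defined.

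Run the exchange argument on genuine shortest curves, namely the pieces from Lemma~\ref{lem:dirichletboundary}, subdivided compatibly with the side pairings. Then it correctly shows that a shortest curve $\sigma$ crosses each piece, viewed on $S$, at most once. What this yields depends on what $k$ counts:
\begin{itemize}
\item If $k$ is the number of such shortest pieces on $\partial\tD$, so $k\le 24g-12$, the pieces are identified in pairs and you get exactly $k/2$. This reading suffices for every later $\bigO{g}$ application in the paper.
\item If $k$ is the number of polygon sides, as you assume, the argument only gives a bound of about $k$, one per half-side on $S$. The exchange argument does not rule out $\sigma$ crossing both halves of the same side, because the two halves together need not form a shortest curve.
\end{itemize}
So to get $k/2$ with $k$ counted as sides, you would need an additional idea that your proposal does not contain.
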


Let $N_X(R, \delta)$ denote the $\delta$-neighbourhood of $R$, i.e.\ the points in $X$ with distance at most $\delta$ from a point in $R$.
We only need to consider specific neighbourhoods of the thick and thin parts, which lets us bound the number of domains we actually need to look at; see Figure~\ref{fig:dirichletneighbourhood}.

\begin{figure}
    \centering
    \includegraphics{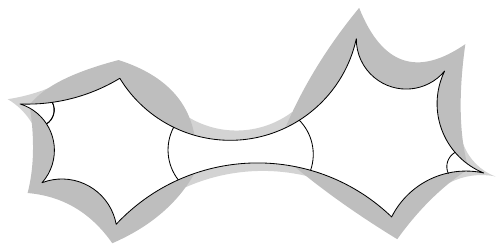}
    \caption{To find the thick--thin decomposition, we need only consider the dark grey neighbourhood of thick parts and the light grey neighbourhood of thin parts.}
    \label{fig:dirichletneighbourhood}
\end{figure}

\thicktersect*

\begin{proof}
    First, we show that the boundary of $\widetilde{T}$ has length $\bigO{g^2}$, which lets us say that $N_{\Hyp^2}(\widetilde{T}, 2\mu_M)$ has area $\bigO{g^2}$.
    For this, consider a shortest curve $\kappa$ of length $l$ that lies in a single $\mu_M$-thick part.
    Then, $N_S(\kappa, \mu_M)$ is a topological disk, meaning it has the area we would expect it to have in $\Hyp^2$, which is $2l \sinh \mu_M + 2\pi\cosh(\mu_M-1)= \bigTh{l+1}$.
    Since $S$ has total area $\bigO{g}$, this now implies $l = \bigO{g}$.    
    By Lemma~\ref{lem:dirichletboundary}, the boundary of $\tD$ consists of $\bigO{g}$ shortest curves.
    Since there are $\bigO{g}$ thin parts, we can split up each curve into $\bigO{g}$ ones that are each contained in some thick part and $\bigO{g}$ ones that are contained in some thin part.
    From this it follows that the part of the boundary of $\widetilde{T}$ it shares with $\tD$ has length $\bigO{g^2}$.
    The boundary of $\widetilde{T}$ can also come from the boundary between a thick and a thin part.
    Since there are $\bigO{g}$ thin parts, each is adjacent to at most two thick parts, and this shared boundary has length $\bigO{\mu_M}$, this contributes length $\bigO{g}$.
    Thus, the boundary of $\widetilde{T}$ has length $\bigO{g^2}$ and $\bigO{g}$ vertices. Consequently, $N_{\Hyp^2}(\widetilde{T}, 2\mu_M)$ has boundary length at most $\bigO{g} + \bigO{g^2} \cdot \cosh(2\mu_M) = \bigO{g^2}$, which, by the hyperbolic isoperimetric inequality~\cite{Osserman78}, also bounds its area.

    Now, let $T' = N_S(\rho(\widetilde T), 2\mu_M)$ and observe that for any $p \in T'$ we have $r_p \geq c$ for some constant $c$, due to Lemma~\ref{lem:thinjradius}.
    Let $Q_c$ be as in Theorem~\ref{thm:quadtiling}.
    Now, consider the intersection of $Q_c$ with $N_{\Hyp^2}(\widetilde{T}, 2\mu_M)$.
    By construction the tiles of $Q_c$ intersecting $N_{\Hyp^2}(\widetilde{T}, 2\mu_M)$ are contained in $N_{\Hyp^2}(\widetilde{T}, 3\mu_M)$ which still has area $\bigO{g^2}$, and seeing as the tiles have area $\bigTh{1}$ we must have $\bigO{g^2}$ tiles.
    The tile boundaries contained in $N_{\Hyp^2}(\widetilde{T}, 2\mu_M)$ have length at most $\mu_M$ and are thereby shortest curves on $S$, which means by Lemma~\ref{lem:dirichletcurve} that they each intersect $\bigO{g}$ elements of $\Gamma D$.
    Thus, $N_{\Hyp^2}(\widetilde{T}, 2\mu_M)$ intersects $\bigO{g^3}$ elements of $\Gamma \tD$ in total.
\end{proof}

We give a similar result for thin parts, except now we use a finer definition of neighbourhood.

\thintersect*
\begin{proof}
    We can cover $\widetilde N$ with two additional copies of $\widetilde C$ and a $\mu_M$-neighbourhood of the thin part's boundary.
    Let $p,p',q,q'$ be points along the boundary of the thin part $C$ on the surface at maximum distance from each other.
    The four triangles formed from the shortest curves between these points cover most of $C$, but since the boundary of $C$ is formed by hypercycles of constant length there is a region $R$ of constant diameter left uncovered.
    We take the union of $R$ with the $\mu_M$-neighbourhood of the thin part's boundary to get a new region $R'$ which still has constant diameter.
    As in the proof of Lemma~\ref{lem:thicktersect}, we note that all points in $R'$ are close to the boundary and thus Lemma~\ref{lem:thinjradius} implies their injectivity radius is larger than some constant $c$.
    Let $Q_c$ be as in Theorem~\ref{thm:quadtiling}; $R'$ is intersected by a constant number of these tiles and each is bounded by a constant number of shortest curves.
    Thus, we can cover $\widetilde N$ with a constant number of triangles whose edges are shortest curves.
    In turn, each triangle intersects $\bigO{g}$ copies of the Dirichlet domain by Lemma~\ref{lem:dirichletcurve}.
\end{proof}

For the thick--thin decomposition algorithm we will also need to be able to compute the distance from domain boundaries to the Dirichlet domain $\tD$.
With a bit of care, this works similarly to how it would in the Euclidean plane.

\begin{lemma}\label{lem:computedist}
    We can compute the minimum distance between a convex $n$-gon $C$ and a line segment $s$ in $\bigO{\log n}$ time in $\Hyp^2$, assuming we are given the vertices of $C$ in clockwise~order.
\end{lemma}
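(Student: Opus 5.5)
The plan is to reduce to the Euclidean problem through the Beltrami--Klein model, where hyperbolic geodesics are Euclidean straight segments. We first map $C$ and $s$ from the half-plane model to the Klein model. This takes $\bigO{1}$ arithmetic operations per point, but we only apply it lazily to the vertices we actually inspect. Convexity and the clockwise order of the vertices are preserved, so $C$ remains a Euclidean convex $n$-gon and $s$ remains a Euclidean segment. If $s$ intersects $C$, the distance is $0$. We test this in $\bigO{\log n}$ time using the standard binary-search intersection test for a segment against a convex polygon, which only needs orientation predicates. Orientation predicates are valid in the Klein model because collinearity is preserved.

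If the two are disjoint, the minimum is attained at a pair $(x,y)$ with $x\in\partial C$, $y\in s$. The hyperbolic distance function $d(x,y)$ is convex along geodesics in $\Hyp^2\times\Hyp^2$, since $\Hyp^2$ is CAT($-1$). Restricted to $s$, the function $f(y)=d(y,C)$ is therefore convex. Likewise, for a fixed geodesic line containing $s$, the distance from that line to points of $\partial C$ is unimodal along each convex chain of $C$.

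The algorithm follows the Euclidean approach of Edelsbrunner's $\bigO{\log n}$ convex-polygon distance computation. We split $\partial C$ into two chains using the tangent/extreme vertices in the direction of $s$. In the Klein model these are found by binary search with orientation tests. On the relevant chain we then binary search over vertices, at each step comparing the distances from two consecutive candidate vertices (or edges) to $s$, and discard half of the chain by unimodality.

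Each comparison involves the distance between two segments, or between a point and a segment, in $\Hyp^2$. This is a constant-size problem. The nearest point on a geodesic segment is either an endpoint or the foot of the perpendicular, which can be computed in the hyperboloid/Minkowski model with $\bigO{1}$ arithmetic. We compare the values $\cosh d$ rather than $d$ itself: these are obtained from the Minkowski inner product, possibly after one square root, and we can avoid that root by comparing squares with signs handled. This stays within real RAM as discussed in Section~\ref{sec:prelim}.

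The main obstacle is justifying the unimodality used in the binary search, since Euclidean arguments based on angles do not transfer directly. I would establish it from the convexity of the distance function in CAT($-1$) spaces. For a convex chain, the sequence of distances from its vertices to the convex set $s$ is unimodal along the portion of the chain facing $s$. This follows because the nearest-point projection onto a convex set is monotone along a convex curve, and the distance along a geodesic is convex. Once this holds, the Euclidean search structure carries over with only predicate changes, giving $\bigO{\log n}$ time overall.
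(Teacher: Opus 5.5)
Your overall plan is the paper's: restrict to one chain of $\partial C$, then binary search using unimodality of the distances to $s$. Your handling of the predicates (intersection test, Klein-model orientation tests, comparing $\cosh d$) is more explicit than the paper's.

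The gap is the step you yourself flag as the crux. The two facts you cite are convexity of $d(\cdot,s)$ along geodesics and monotonicity of the nearest-point projection onto $s$. They do not imply unimodality, because both also hold in configurations where unimodality fails. Take a long thin convex polygon $C$, and let $s$ be a point just outside the middle of one long side. Alternatively, let $s$ be a short segment perpendicular to that side, so that every point of $C$ has the same nearest point on $s$. Then the projection is constant, hence monotone, and $d(\cdot,s)$ is convex on every edge. Yet along the convex chain that runs from the middle of the near side to one end of $C$ and back along the far side, the distance rises and then falls again. A binary search on this chain can therefore discard the half containing the minimum. The only thing that rules this example out is that the far side is not visible from $s$. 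Your argument never actually uses the hypothesis ``facing $s$'', so it cannot be correct as written.

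Your auxiliary claim about the supporting line of $s$ also fails. In the half-plane model, the distance to the imaginary axis has a local minimum in the interior of each side of a triangle with vertices slightly above $1,2,3$. So no split of $\partial C$ into two chains makes it unimodal on both. In any case, the function you need to search is the distance to the segment, not to its line.

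The missing ingredient is visibility, used as in the paper.
\begin{itemize}
\item First restrict to the chain of $\partial C$ visible from $s$. It is cut out by the tangents to $C$ from the endpoints of $s$. Visibility is projective, so your Klein-model orientation tests can find this chain by binary search.
\item By contrast, splitting at vertices that are extreme ``in the direction of $s$'' in Klein coordinates is not a hyperbolic notion, since Euclidean directions there depend on where you put the origin.
\item Then rule out a strict local maximum at an edge $e_i$ of the visible chain. If there were one, the geodesic joining the closest points $e_{i-1}^*$ and $e_{i+1}^*$ would lie in $C$. By convexity of $d(\cdot,s)$ along geodesics, it would be strictly closer to $s$ than every point of $e_i$. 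This is incompatible with $e_i$ being visible from $s$.
\end{itemize}
With this lemma in place, the rest of your plan goes through.
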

\begin{proof}
    Using binary search, we can in $\bigO{\log n}$ time find the chain $e_1, \dots, e_k$ of edges of $C$ visible from $s$.
    We only care about this chain, as a point in $C$ that is not visible from $s$ cannot yield the minimum distance.

    Note that the function that maps $i = 1, \dots, k$ to the minimum distance of $e_i$ to $s$ is convex.
    To see this, let $e_i^*$ for any $i$ denote the point on $e_i$ closest to $s$ and consider the line segment $e_{i-1}^* e_{i+1}^*$ for $i = 2, \dots, k-1$.
    Any line segment must achieve its maximum distance to $s$ at an endpoint, which means that in order for $e_i^*$ to be visible from $s$ it must lie closer to it than the farthest of $e_{i-1}^*$ and $e_{i+1}^*$.
    Hence there can be no local maximum at $i$, which implies convexity and lets us use a second binary search to find the distance of $C$ to $s$ in $\bigO{\log n}$ time.
\end{proof}

For an isometry $\gamma$ of $\Hyp^2$, its \emph{translation length} $\tau$ is the shortest distance that $\gamma$ can move a point, i.e.\ $\inf_{p \in \Hyp^2} |p \, \gamma(p)|$.
If $\tau > 0$ and $\gamma$ is orientation-preserving, then we call $\gamma$ a translation.
Any translation preserves exactly one line, which is its axis, as well as each hypercycle around that axis.
For a surface $\Hyp^2 / \Gamma$, note that we required $\Gamma$ to be a discrete group of isometries, which means that no element other than the identity may have translation length zero and thus the remaining ones are translations.
A translation $\gamma \in \Gamma$ corresponds to a closed geodesic of length $\tau$ and thus a $(\tau/2)$-thin part.
We prove the following for completeness.

\begin{lemma}\label{lem:translationlength}
    The length $\tau$ of any translation in $\Hyp^2$ can be calculated in time $\bigO{1}$. More precisely, in $\bigO 1$ time we can compute the strictly monotone function $\cosh(\tau/2)$, which allows us to compare translation lengths of different translations. 
\end{lemma}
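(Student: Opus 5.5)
The translation is given in the input model as a projective real $2\times 2$ matrix $M=\begin{pmatrix} a & b\\ c & d\end{pmatrix}$ acting on the upper half-plane by $z\mapsto (az+b)/(cz+d)$, with determinant $ad-bc>0$ not normalized. The plan is to use the classical trace formula. For a hyperbolic element of $\mathrm{PSL}_2(\Reals)$ normalized to determinant one, $|\operatorname{tr} M| = 2\cosh(\tau/2)$. To avoid square roots, which the real RAM cannot compute, we work with the scale-invariant quantity
\[
\cosh^2(\tau/2) \;=\; \frac{(a+d)^2}{4(ad-bc)} .
\]
This quantity is unchanged when $M$ is replaced by $\lambda M$, so it is well defined on projective matrices. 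It can be evaluated with $\bigO 1$ arithmetic operations. Since $\tau\ge 0$ and $\cosh^2(\tau/2)$ is strictly increasing in $\tau$, comparing these values compares translation lengths. The same applies to comparisons with constants of the form $2\arsinh\sqrt{c}$, because $\cosh^2(\arsinh\sqrt c)=1+c$.

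To justify the trace formula, conjugate $M$ by an isometry $A\in \mathrm{PSL}_2(\Reals)$ that sends the axis of the translation to the imaginary axis. Conjugation preserves both trace and determinant, and it preserves translation length because $A$ is an isometry. The conjugated map fixes $0$ and $\infty$, so up to scaling it equals $\mathrm{diag}(e^{\tau'/2},e^{-\tau'/2})$, that is, $z\mapsto e^{\tau'}z$. A point $iy$ is moved by hyperbolic distance exactly $\tau'$, which one checks with the distance formula from the preliminaries. Any point off the axis is moved at least as far; this follows from the displacement formula $\sinh(d/2)=\cosh(\delta)\sinh(\tau'/2)$ for a point at distance $\delta$ from the axis, which is the same Lambert-quadrilateral relation used before Lemma~\ref{lem:thinjradius}. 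Therefore $\tau=\tau'$. The normalized trace of the diagonal matrix is $2\cosh(\tau/2)$, which yields the displayed formula.

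The only delicate points are bookkeeping. First, the matrix may represent an orientation-preserving map only up to sign, which is why we square the trace. Second, the determinant must be positive, which holds for orientation-preserving maps of the upper half-plane. For completeness one may also note that $(a+d)^2>4(ad-bc)$ characterizes translations (hyperbolic elements), so the formula correctly returns $\tau>0$ exactly in that case. No step is a real obstacle. The main care is making sure only ratios of polynomials in the entries are computed, so the computation stays within the real RAM.
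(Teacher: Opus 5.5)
Your proposal is correct and follows essentially the same route as the paper: conjugate the translation to $z\mapsto\lambda z$ along the imaginary axis, use that trace and translation length are invariant under conjugation, and read off $2\cosh(\tau/2)$ from the normalized trace. Your scale-invariant quantity $(a+d)^2/(4(ad-bc))=\cosh^2(\tau/2)$ is a worthwhile refinement, because the paper's input model stores matrices with unnormalized determinants while its proof tacitly works in $SL(2,\Reals)$, where normalizing would need a square root; you also justify that displacement is minimized on the axis, which the paper only asserts.
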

\begin{proof}
    The group of orientation-preserving isometries of $\Hyp^2$ is isomorphic to the quotient $PSL(2,\Reals) = SL(2,\Reals) / \{I,-I\}$, where the special linear group $SL(2,\Reals)$ is the subgroup of matrices $\begin{pmatrix}a & b\\c & d\end{pmatrix}$ with determinant $ad-bc = 1$ \cite{katok1992fuchsian}.
    Important to note is that the isometry of $\Hyp^2$ corresponding to such a matrix is unrelated to $\begin{pmatrix}x \\ y\end{pmatrix} \mapsto \begin{pmatrix}a & b\\c & d\end{pmatrix} \begin{pmatrix}x \\ y\end{pmatrix}$.
    If we are working in the half-plane model with coordinates given by complex numbers $z$ with positive imaginary part (note that the choice of model is irrelevant since we can convert between models~\cite{cannon1997hyperbolic}), then the matrices $\pm\begin{pmatrix}a & b\\c & d\end{pmatrix} \in PSL(2,\Reals)$ correspond to (Möbius) transformations $z \mapsto \frac{az + b}{cz + d}$.


    The corresponding matrices for translations are diagonalisable \cite{katok1992fuchsian}.
    Combined with the restriction $ad-bc=1$, this lets us write them as $A = PDP^{-1}$ where $D =  \pm\begin{pmatrix}a & 0\\0 & 1/a\end{pmatrix}$ and $P \in PSL(2,\Reals)$.
    Matrix $D$ corresponds to $z \mapsto a^2 z$, which is the translation that preserves the line given by the imaginary axis and moves points along this line a distance $\tau = 2 \ln|a|$. Let $f,g: \Hyp^2 \to \Hyp^2$ denote the isometries represented by $D$ and $P$, respectively.
    The translation given by $A$ has the same translation length as $D$, since
    \[
        \tau
        = \inf_{p \in \Hyp^2} \left| p \, f(p) \right|
        = \inf_{p \in \Hyp^2} \left| g^{-1}(p) \, f(g^{-1}(p)) \right|
        = \inf_{p \in \Hyp^2} \left| p \, g(f(g^{-1}(p))) \right|.
    \]
    Both $A$ and $D$ also have the same trace, which is $\pm (a + 1/a) = \pm 2 \cosh(\tau / 2)$.
    Thus, we can calculate $\cosh(\tau / 2)$ from the trace of $A$.
\end{proof}

Finally, we need the following point-location data structure by Chan and Neckrich~\cite{ChanN18}, where we let $\eps=1$ and use the remark from their conclusion that it generalises to $x$-monotone curves.

\begin{theorem}[Theorem 3.8 of~\cite{ChanN18}]\label{thm:pointlocation}
    There is a point-location data structure for $n$ non-intersecting $x$-monotone curves in $\Reals^2$ that uses $\bigO{n}$ space, has query time $\bigO{\log n}$ and allows insertions in $\bigO{\log^2 n}$ time.
\end{theorem}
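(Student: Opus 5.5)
We do not reprove the data structure from scratch. Instead we invoke Theorem~3.8 of Chan and Nekrich~\cite{ChanN18} with $\eps=1$. That theorem is stated for non-crossing line segments and gives $\bigO{n}$ space, $\bigO{\log n}$ query time and $\bigO{\log^{1+\eps} n}$ insertion time. The plan is to check that its correctness and complexity analysis depend only on a few combinatorial primitives. We then show that these primitives still run in $\bigO{1}$ time in the real RAM when segments are replaced by the curves we actually use, namely $x$-monotone pieces of Euclidean lines and circle arcs in the half-plane model. This is the generalization the authors remark on in their conclusion.

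\textbf{Step 1: identify the primitives.} We would go through the construction of~\cite{ChanN18} (interval/segment-tree style decomposition with fractional cascading and the dynamic ordering of segments within slabs). We would check that segments are accessed only through the following operations:
\begin{itemize}
\item comparing the $x$-coordinates of endpoints and query points;
\item deciding whether a query point lies above or below a segment whose $x$-range contains it;
\item comparing two non-crossing segments that share an $x$-interval, to determine which one is above the other.
\end{itemize}
Because the segments never cross, the third relation is a consistent partial order. All of their balancing, cascading and amortization arguments are phrased in terms of this order and of the $x$-projections, so they should carry over verbatim.

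\textbf{Step 2: implement the primitives for $x$-monotone curves.} We would split each hyperbolic segment, which in the half-plane model is a vertical segment or a Euclidean circle arc, at its leftmost and rightmost points. This yields at most a constant number of $x$-monotone pieces. The resulting set of pieces is still pairwise non-crossing, and its size grows by only a constant factor.
\begin{itemize}
\item An above/below test for a point against an arc reduces to a sign test of a quadratic polynomial together with the known side of the arc's centre. This takes $\bigO{1}$ arithmetic operations.
\item To order two non-crossing $x$-monotone arcs over a common $x$-interval, it suffices to evaluate the above/below test at a single $x$-value inside that interval, for instance an endpoint of one arc. This is valid precisely because the arcs do not cross there.
\end{itemize}
Vertical pieces can be handled by the usual symbolic-perturbation convention, as in the segment case. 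The stated bounds then follow from the $\eps=1$ instance of~\cite{ChanN18}.

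\textbf{Main obstacle.} The subtle step is Step~1: verifying that~\cite{ChanN18} uses no linear-specific primitive. Examples would be an operation that computes a segment's $y$-value at an arbitrary $x$ and then combines such values arithmetically, or any bit-level or word-RAM tricks. If such a step appears, we would first try to replace it with a comparison-based equivalent. Failing that, we would fall back to the classical $\bigO{\log^2 n}$-query/insert dynamic structures, which are purely comparison-based, since only the $x$-monotone order is ever used. We would also confirm that any word-RAM steps in the original operate only on ranks of coordinates, which remains valid in the real RAM after sorting the coordinates.
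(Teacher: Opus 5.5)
Your proposal matches the paper, which gives no proof of its own: it invokes Theorem~3.8 of Chan and Nekrich with $\eps=1$ and relies on their concluding remark that the structure extends to non-intersecting $x$-monotone curves, exactly as you do. Your primitive-by-primitive re-verification is therefore optional, but note that your fallback to classical structures with $\bigO{\log^2 n}$ query time would not give the stated $\bigO{\log n}$ query bound.
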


Before going into the thick-thin decomposition algorithm, let us consider the regions in $\Hyp^2$ that lift to a thin part on the surface.
For each thin part, there are infinitely many regions bounded by two hypercycles that lift to that thin part; we will call these \emph{bananas}.
Note that these bananas never intersect one another, even if they belong to different thin parts.

\thmthickthin*
\begin{proof}
    We first compute the thick-thin decomposition for $\mu_M$.
    The algorithm works by growing a value $\delta$ from $0$ to $\mu$ and considering the copies from $\Gamma \tD$ that have a point within distance $\delta$ from~$\tD$.
    We let $\widetilde \cD_\delta \subset \Hyp^2$ denote the region covered by all such copies and let $\partial \widetilde \cD_\delta \subset \Hyp^2$ be its boundary.
    Concretely, we maintain
    \begin{itemize}
        \item a point-location data structure $B$ containing the found bananas that intersect $\tD$ (note that there can be multiple for one thin part).
        More precisely, as we work in the Poincaré half-plane model of $\Hyp^2$, the two hypercycles bounding these bananas can be split into four Euclidean $x$-monotone curves, which lets us use the data structure of Theorem~\ref{thm:pointlocation}.
        \item a priority queue $Q$ containing the edges $e$ forming the boundary of $\partial \widetilde \cD_\delta$ that are not completely covered by a region from $B$ and have distance $\delta_e \leq \mu_M$ from $\tD$,
        with sorting based on $\delta_e$.
        \item a (self-balancing) binary search tree to remember the domains that have been processed.
    \end{itemize}    
    In every step of the algorithm, we extract an edge $e$ from $Q$ with minimum distance until $Q$ is empty.
    Let $\gamma \in \Gamma$ be such that $\gamma \tD$ is the domain copy incident to $e$ that has not been processed yet (if none exists we move on) and add $\gamma \tD$ to the processed domains.
    We calculate the translation distance $\tau$ of $\gamma$ using Lemma~\ref{lem:translationlength} and add the banana corresponding to $\gamma$ to $B$ when $\tau \leq 2\mu_M$.
    Finally, we add to $Q$ the $\bigO{g}$ edges of $\gamma \tD$ on the boundary of $\widetilde\cD_\delta \cup \gamma \tD$ that are not completely covered by regions from $B$ and have distance at most $\mu_M$ from $\tD$.

    By Lemma~\ref{lem:thintersect}, for any banana, we will consider at most $\bigO{g}$ domains that intersect it before seeing a domain $\gamma \tD$ that contains a point $\gamma \tilde p$ within small distance of the point $\tilde p \in \tD$.
    After that, adding the banana to $B$ avoids exploring it further.
    Since there are $\bigO{g}$ thin parts, each of which has $\bigO{g}$ associated bananas intersecting $\tD$ (also by Lemma~\ref{lem:thintersect}), we explore $\bigO{g^3}$ domains in thin parts.
    Note also that the bananas are disjoint and any domain where the corresponding translation has length at most $\mu_M$ must have a point within distance $\mu_M$ from~$\tD$.
    Finally, we also consider $\bigO{g^3}$ domains in thick parts by Lemma~\ref{lem:thicktersect}.

    For each of the $\bigO{g^3}$ domains, we consider $\bigO{g}$ edges and spend $\bigO{\log g}$ time for each.
    For each of the $\bigO{g^2}$ found bananas, we spend $\bigO{\log^2 g}$ time to add it to $B$.
    Thus, we get the total running time of $\bigO{g^4 \log g}$.
    Note that from the $\mu_M$-thick-thin decomposition we can also derive it for smaller $\mu$.
\end{proof}

\section{Steiner spanners for hyperbolic surfaces}
Using the thick-thin decomposition, we will now construct Steiner spanners.
The following constant-diameter constructions, proved in Section~\ref{sec:constdiam} are also key.

\thmconstdiam*

Additionally, we need the following approximation result, proved in \cite{Kisfaludi-BakW25} which will let us place Steiner points at distance $\sqrt\eps$ rather than $\eps$.

\begin{lemma}[Lemma 4 of \cite{Kisfaludi-BakW25}]\label{lem:split}
    Let $p,q,x \in \Hyp^d$ and let $y$ be the closest point to $x$ on the segment $pq$.
    Assume $\eps \in (0,1]$ and $|pq| \leq \Delta$ for some $\Delta > 0$.
    If $|xy| \leq \sqrt \eps \min\{|py|,|qy|\}$, then $|px| + |xq| \leq (1 + e^\Delta \eps) |pq|$.
\end{lemma}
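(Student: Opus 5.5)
The statement to prove is Lemma~\ref{lem:split}. I would prove it by reducing to the two triangles $pyx$ and $qyx$, which are right-angled at $y$. The case where $y$ is an endpoint is excluded, since then $\min\{|py|,|qy|\}=0$ forces $x=y$ and the claim is trivial. Write $a=|py|$, $b=|qy|$ and $h=|xy|$, so that $|pq|=a+b\le\Delta$ and $h\le\sqrt\eps\min\{a,b\}$. It suffices to show
\[
|px|\le a+e^{\Delta}\eps\, a \qquad\text{and}\qquad |qx|\le b+e^{\Delta}\eps\, b ,
\]
and adding the two gives $|px|+|xq|\le(1+e^\Delta\eps)(a+b)$. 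All triangles involved lie in a totally geodesic copy of $\Hyp^2$, so I can work in the plane.

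For a single right triangle, the hyperbolic Pythagorean theorem gives $\cosh|px|=\cosh a\cosh h$. I will set $c=|px|$ and bound $c-a$, using the identity
\[
\cosh c-\cosh a=\cosh a(\cosh h-1)=2\cosh a\,\sinh^2(h/2).
\]
On the left I apply the mean value theorem: $\cosh c-\cosh a\ge \sinh a\,(c-a)$.

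This gives
\[
c-a\le \frac{2\cosh a\sinh^2(h/2)}{\sinh a}.
\]
Since $h\le\sqrt\eps\,a\le a\le\Delta$ and $\eps\le1$, we have $\sinh(h/2)\le (h/2)\cosh(h/2)$, so $\sinh^2(h/2)\le (h^2/4)\cosh^2(h/2)$. Also $\cosh a/\sinh a\le \cosh a/a$, using $\sinh a\ge a$. Combining these,
\[
c-a\le \frac{h^2}{2a}\cosh a\cosh^2(h/2)\le \frac{\eps a}{2}\cosh a\,\cosh^2(a/2).
\]
It remains to check that $\tfrac12\cosh a\cosh^2(a/2)\le e^{a}\le e^\Delta$. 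This holds because $\cosh^2(a/2)=(1+\cosh a)/2$, so
\[
\tfrac12\cosh a\,\frac{1+\cosh a}{2}\le \tfrac14\cosh a(1+\cosh a)\le e^{a},
\]
where the last step uses $\cosh a\le e^a$ and $1+\cosh a\le 2e^{a}$, giving at most $\tfrac12 e^{2a}$. So the crude bound yields $e^{2a}/2$ rather than $e^a$.

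\textbf{Main obstacle.} The only delicate step is landing on the exponent $e^\Delta$ instead of a worse constant such as $e^{2\Delta}$; the reduction and the Pythagorean step are routine. To fix the constant I would tighten the first step. Instead of the mean value bound with $\sinh a$, I would use
\[
\cosh c-\cosh a=2\sinh\tfrac{c+a}{2}\sinh\tfrac{c-a}{2}\ge 2\sinh a\,\sinh\tfrac{c-a}2\ge \sinh a\,(c-a).
\]
Alternatively, I would use the factor $\cosh a/\sinh a=\coth a\le 1/a+1$ directly, which gives
\[
c-a\le \frac{\eps a}{2}(1+a)\cosh^2(h/2),
\]
with $\cosh^2(h/2)\le \cosh^2(\sqrt\eps\,\Delta/2)$. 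The resulting constant $\tfrac12(1+a)\cosh^2(a/2)$ is at most $e^{a}$, which I would check via a Taylor comparison for small $a$ and the growth rates for large $a$. If this still falls short, I would weaken the stated constant, but the $\coth$ route should suffice, since $(1+a)(1+\cosh a)/4\le e^a$ holds for all $a\ge0$.
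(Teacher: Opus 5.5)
The paper gives no proof of this lemma; it imports it as Lemma~4 of \cite{Kisfaludi-BakW25}, so your argument has to stand on its own. Your reduction is sound. For $y$ interior and $x\neq y$ the segment $xy$ is perpendicular to $pq$. The hyperbolic Pythagorean theorem gives $\cosh c=\cosh a\cosh h$, and the convexity bound $\cosh c-\cosh a\ge \sinh a\,(c-a)$ yields $c-a\le 2\coth a\,\sinh^2(h/2)$.

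The gap is the final constant step, which your proposal never actually closes. You correctly note that the crude chain gives a factor $\tfrac12\cosh a\cosh^2(a/2)\sim e^{2a}/8$, which is not $\le e^a$. The claimed rescue $(1+a)(1+\cosh a)/4\le e^a$ for all $a\ge 0$ is false. Its left side behaves like $(1+a)e^a/8$, so it fails once $a>7$; at $a=10$ it is about $3.0\cdot 10^4$ against $e^{10}\approx 2.2\cdot 10^4$. Your ``tightened'' identity $2\sinh\frac{c+a}{2}\sinh\frac{c-a}{2}\ge \sinh a\,(c-a)$ is the same bound as the mean value step, so it gains nothing. Weakening the constant would not prove the stated lemma.

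The missing observation is that you threw away information in two places.

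\textbf{Fix 1: use the full budget.} The hypothesis bounds $h$ by $\sqrt\eps$ times the \emph{shorter} of $a,b$, and the available constant is $e^\Delta\ge e^{a+b}$, not $e^a$. In the triangle with leg $a$ you therefore have $h\le b$, so
\[
\cosh^2(h/2)\le\cosh^2(b/2)=\tfrac{1+\cosh b}{2}\le e^{b}.
\]
Your crude bound then already gives
\[
c-a\le \tfrac{\eps a}{2}\cosh a\, e^{b}\le \tfrac{\eps a}{2}e^{a+b}\le \eps a\, e^{\Delta}.
\]
It fails only because you replaced $\cosh^2(h/2)$ by $\cosh^2(a/2)$ and targeted $e^a$ per triangle.

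\textbf{Fix 2: keep the per-triangle constant $e^a$.} Do not linearise $\sinh(h/2)$. Since $\sinh$ is convex with $\sinh 0=0$, and $h\le \sqrt\eps\,a$ with $\sqrt\eps\le 1$, you get $\sinh(h/2)\le\sqrt\eps\,\sinh(a/2)$. Hence
\[
c-a\le 2\eps\coth a\,\sinh^2(a/2)=\eps\cosh a\,\tanh(a/2)\le \tfrac{\eps a}{2}\,e^{a}.
\]

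Either fix completes your argument.
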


The following lemma shows that it suffices to cut open the thin parts in two different ways and handle both separately.

\begin{lemma}\label{lem:thinsplit}
    Let $S$ be a $\mu$-thin part of a hyperbolic surface for $\mu \leq \mu_M$.
    Then, there are two simply-connected regions such that any pair of points on $S$ has a shortest curve contained entirely within one of the regions.
\end{lemma}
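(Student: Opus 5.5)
We work in the standard model of the thin part. By \cite[Theorem 4.5.6]{thurston97three}, $S$ is the closed $\delta_{\max}$-neighbourhood of a simple closed geodesic $\kappa$, with $\cosh\delta_{\max}=\sinh\mu/\sinh\frac{|\kappa|}{2}$. Lift $S$ to a banana $\widetilde S\subset\Hyp^2$: the region between the two hypercycles at distance $\delta_{\max}$ from the axis $\tilde\kappa$ of the translation $\gamma$ of length $|\kappa|$. Then $S=\widetilde S/\langle\gamma\rangle$.

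Parametrise $\widetilde S$ by signed arclength $t$ along $\tilde\kappa$ of the foot point and signed distance $s\in[-\delta_{\max},\delta_{\max}]$. For $j\in\{0,1\}$ let $\ell_j$ be the geodesic perpendicular to $\kappa$ at parameter $t=j|\kappa|/2$; these are the two antipodal cuts. Take as regions
\[
  R_j=\rho\bigl(\{(t,s)\in\widetilde S : j|\kappa|/2 < t < j|\kappa|/2+|\kappa|\}\bigr),
\]
which is $S\setminus\ell_j$. Each $R_j$ is homeomorphic to an open rectangle, so it is simply connected, and its closure lifts isometrically as a fundamental strip $F_j$ bounded by two lifts of $\ell_j$.

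Now fix $x,y\in S$ and a shortest curve $\sigma$ between them on $S$. First we must check that $\sigma$ stays inside $S$; otherwise "shortest curve on $S$" should be read intrinsically. Since $S$ is the closed $\delta_{\max}$-neighbourhood of a geodesic, it is convex in the surface, and its lift $\widetilde S$ is convex in $\Hyp^2$. So we read shortest curves intrinsically, and each lifts to a hyperbolic segment inside $\widetilde S$.

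Fix a lift $\tilde x$. A shortest curve corresponds to the nearest lift $\gamma^k\tilde y$, and its lift is the segment $\tilde x\gamma^k\tilde y$. Replacing $\tilde y$ by its nearest lift, it remains to show that the $t$-coordinates of $\tilde x$ and $\tilde y$ differ by at most $|\kappa|/2$. Granting this, the segment projects to $\tilde\kappa$ onto an interval $I$ of length at most $|\kappa|/2$, because nearest-point projection onto a line preserves betweenness. The two cut positions, taken modulo $|\kappa|$, are $|\kappa|/2$ apart. A closed interval of length at most $|\kappa|/2$ can contain both of them in its interior only if its length exceeds $|\kappa|/2$. So some $\ell_j$ has no lift meeting the interior of $I$, and the segment lies in a translate of $F_j$. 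If the segment only touches a lift of $\ell_j$ at an endpoint, we use the closed region or the other cut instead. Therefore $\sigma\subset \overline{R_j}$.

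The main obstacle is the claim that the nearest lift has $|t(\tilde x)-t(\tilde y)|\le|\kappa|/2$. The plan is to use the Fermi-coordinate distance formula
\[
  \cosh d = \cosh s_1\cosh s_2\cosh(\Delta t) - \sinh s_1 \sinh s_2.
\]
This is strictly increasing in $|\Delta t|$ for fixed $s_1,s_2$. Hence among the translates $\gamma^k\tilde y$, which shift $\Delta t$ by $k|\kappa|$ and keep $s_2$ fixed, the nearest one minimises $|\Delta t|$, giving $|\Delta t|\le|\kappa|/2$. Ties at exactly $|\kappa|/2$ are handled by the boundary remark above.
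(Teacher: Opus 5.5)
Your argument is correct and proves the lemma in the same intrinsic sense as the paper's proof, but by a different route.

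\textbf{The paper's route.} It cuts the collar along both antipodal perpendicular arcs $l_1,l_2$ into isometric halves $S_1,S_2$ and argues by cut-and-paste. If a curve enters $S_2$ through $l_1$ and leaves through $l_2$, that piece is replaced by an isometric copy in $S_1$ with the same endpoints. This is necessarily the reflection of the collar fixing $l_1$ and $l_2$ pointwise. The replacement removes both crossings without changing length, and the step is iterated.

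\textbf{Your route.} You work in the banana $\widetilde S$ with deck group $\langle\gamma\rangle$ and use the Fermi-coordinate distance formula to show that the nearest lift satisfies $|\Delta t|\le|\kappa|/2$. Convexity of $\widetilde S$ and monotonicity of the foot-point coordinate along a segment then put the whole segment into one translate of a closed strip $F_j$.

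\textbf{Comparison.} The paper's argument is coordinate-free and needs only the reflection symmetry of the collar, but as written it is a sketch. It only shows that length does not increase. To conclude anything about an actual shortest curve, one must also observe that the modified curve is again shortest yet has a corner at $p_1$. Hence a shortest curve never traverses a whole half, and some bookkeeping of crossing patterns is still needed. Your argument is fully explicit and handles ties. It also gives a stronger conclusion: every intrinsic shortest curve has axial extent at most $|\kappa|/2$, so its interior crosses at most one lift of the two cuts.

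\textbf{Two corrections.}

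First, delete the assertion that $S$ is convex in the surface; it is false in general. Pinch a non-separating curve $\kappa$. The two boundary circles of its collar stay at bounded distance through the adjacent thick part. Any path inside $S$ between them has length at least $2\delta_{\max}\to\infty$, by your own Fermi formula with $s_1=-\delta_{\max}$ and $s_2=\delta_{\max}$. So surface-shortest curves between such points leave $S$, and the lemma holds only for the intrinsic metric of $S$. That is all the paper's proof treats, and all your argument actually uses, since it relies only on convexity of $\widetilde S$ in $\Hyp^2$.

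Second, $\overline{R_j}$ is all of $S$. So your final conclusion $\sigma\subset\overline{R_j}$ is vacuous, and ``its closure lifts isometrically'' is not literally true. What you actually prove, and what the paper needs when it works on ``the two simply-connected regions in $\Hyp^2$'', is that the lift of $\sigma$ lies in a translate of the closed strip $F_j$. State the conclusion in that form.
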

\begin{proof}
    Let $l_1$ and $l_2$ be shortest curves connecting the two components of $\partial S$ such that $l_1$ and $l_2$ have maximum distance from each other.
    We get one of the regions from the lemma statement by cutting open $S$ along $l_1$ and another by cutting along $l_2$ (note that each covers all of $S$ by itself).
    
    For the proof, we will first cut $S$ along both $l_1$ and $l_2$ to partition it into isometric parts $S_1$ and $S_2$. We will show that any curve can be made to only intersect one of $l_1$ or $l_2$ without increasing its length.
    Without loss of generality, assume that $s$ crosses from $S_1$ to $S_2$ at $p_1 \in l_1$ and crosses back to $S_1$ at $p_2 \in l_2$.
    Since $S_1$ and $S_2$ are isometric, we can replace the segment of $s$ between $p_1$ and $p_2$ with one that lies on $S_1$ and has the same length.
    This removes the crossings $p_1$ and $p_2$ from $s$ and we can repeat this procedure until no such pair $p_1,p_2$ remains, proving the lemma.
\end{proof}

The final ingredient for our Steiner spanners is given by $\eps$-nets.
In particular, we first give a polynomial-time algorithm for computing an $\eps$-net for only the thick parts of a surface (which Lanuel~\cite{lanuelthesis} refers to as a pseudo-$\eps$-net).
It is worth noting that we improve Lanuel's result, whose running time is $\bigO{1/\eps^4}$ with a hidden exponential dependency on $g$.

\begin{lemma}\label{lem:surfnet}
    Given a hyperbolic surface $S$ as a Dirichlet domain $\tD$ and values $0 < r \leq \mu \leq \mu_M$, we can construct a point set of size $\bigO{g / r^2}$ that is an $(r/3)$-packing and an $r$-cover of the $\mu$-thick parts of $S$ in $\bigO{(g^4/r^2)\log(g/r)}$ time.
\end{lemma}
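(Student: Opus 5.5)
First compute the $\mu$-thick-thin decomposition with Theorem~\ref{thm:thickthin} in $\bigO{g^4\log g}$ time. This gives the region $\widetilde T\subseteq \tD$ lifting the $\mu$-thick parts. The proof of Lemma~\ref{lem:thicktersect} shows that $\widetilde T$ is bounded by $\bigO g$ pieces, namely subarcs of the $\bigO g$ sides of $\tD$ and hypercycle arcs bounding thin parts. From the exploration already done for Theorem~\ref{thm:thickthin} we keep the $\bigO{g^3}$ domain copies $\gamma\tD$ meeting $N_{\Hyp^2}(\widetilde T,2\mu_M)$, together with their group elements. This lets us map any point near $\widetilde T$ back into $\tD$, and list all lifts of surface points lying near a given point of $\widetilde T$.

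\textbf{Candidate set.} Overlay $\tD$ with the tiling $Q_{r/3}$ of Theorem~\ref{thm:quadtiling}, clipped to $\widetilde T$. Each tile has diameter at most $r/3$ and contains a ball of radius $\Omega(r)$. Area considerations bound the candidates. $\tD$ has area $4\pi(g-1)$. Tiles that meet $\widetilde T$ lie in $N(\widetilde T,r/3)$, and its area is $\bigO g$ plus (boundary length $\bigO{g^2}$)$\cdot r$. To keep the count at $\bigO{g/r^2}$ I would handle boundary tiles more carefully: I would take only tiles centred inside $\tD$ and accept a constant-factor loss. Since tiles have area $\Theta(r^2)$, this gives $\bigO{g/r^2}$ candidate tiles. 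From each tile meeting $\widetilde T$, pick one representative point of the tile inside $\widetilde T$. The representatives form an $(r/3)$-cover of the thick part in the lifted sense. On the surface they still cover, because $\tD$ covers $S$.

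\textbf{Greedy packing.} The representatives need not be an $(r/3)$-packing on $S$, because two points of $\tD$ can be close on $S$ through a side pairing. Process the candidates one by one and keep a candidate only if it lies at surface distance more than $r/3$ from every kept point. The kept points then form an $(r/3)$-packing, and every candidate is within $r/3$ of a kept point, so the kept set is a $2r/3\le r$ cover of the thick parts. For the distance test, note that a point $p$ in the thick part has $r_p\ge\mu\ge r$. So the surface ball of radius $r/3$ around $p$ is an embedded disk, and it lifts to the union of the translates $\gamma B(\tilde p,r/3)$ over the $\bigO{g^3}$ relevant copies. Store all kept points, replicated into each of these copies, in a dynamic approximate nearest-neighbour or point-location structure (e.g.\ Theorem~\ref{thm:pointlocation} applied to a grid/tile bucketing). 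A query then examines only the $\bigO1$ tiles adjacent to $\tilde p$. This costs $\bigO{g^3}$ insertions per kept point at $\bigO{\log(g/r)}$ each, and $\bigO{g^4/r^2\cdot\log(g/r)}$ in total. Per candidate, I would store the kept points indexed by their tile in $\tD$. I would then check the $\bigO g$ copies neighbouring $\tilde p$'s tile (by Lemma~\ref{lem:dirichletcurve}, a segment of length $\le r/3<\mu_M$ crosses $\bigO g$ sides), spending $\bigO{\log(g/r)}$ per copy. This matches the $\bigO{(g^4/r^2)\log(g/r)}$ bound even with the crude estimate.

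\textbf{Size bound and main obstacle.} The $\bigO{g/r^2}$ size follows independently of the construction. The kept points form an $(r/3)$-packing in the thick part, where the injectivity radius is at least $r$. So the disjoint embedded disks of radius $r/6$ around them each have area $\Theta(r^2)$, and the surface has area $\bigO g$. The main obstacle is the distance test across side pairings: making sure that every lift of a kept point near $\tilde p$ is found, without exploring $g^{\bigO g}$ copies. I would resolve this by reusing the $\bigO{g^3}$ copies from Lemma~\ref{lem:thicktersect}, since $r/3\le 2\mu_M$.
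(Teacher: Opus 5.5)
Your proposal is correct and is essentially the paper's proof. Both compute the thick--thin decomposition and take quadtree-tile representatives in the thick region as candidates. Both greedily keep candidates using a dynamic (approximate) nearest-neighbour structure that stores the $\bigO{g^3}$ lifts of each kept point in the copies from Lemma~\ref{lem:thicktersect}. Both bound the size by the area of disjoint embedded $\Omega(r)$-balls. The paper merely uses $Q_{r/2}$ and a $(3/2)$-approximate query with threshold $r/2$ instead of your exact test at $r/3$.

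Drop your refinement of keeping only tiles centred inside $\tD$. It is unnecessary, and it would endanger the covering guarantee near $\partial\tD$. As in your fallback, and as in the paper, the crude candidate count already fits the $\bigO{(g^4/r^2)\log(g/r)}$ time budget.
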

\begin{proof}
    First, use Theorem~\ref{thm:thickthin} to get a thick-thin decomposition for $\tD$ and let $\cD$ be the neighbouring set of copies of $\tD$ given by Lemma~\ref{lem:thicktersect}.
    To ensure we do not place net points too closely together, we use a dynamic (approximate) nearest neighbour data structure for $\Hyp^2$.
    For example, for $(3/2)$-approximate nearest neighbours we can get a data structure that uses $\bigO{n}$ space and allows queries and updates in $\bigO{\log n}$ time with $n$ points~\cite{steinerspanner}.
    Call this data structure $A$; we will use it as follows.
    Whenever we add a point to our net $\widetilde N \subseteq \tD$, we also add its $\bigO{g^3}$ copies in the other domains of $\cD$ to $A$.
    Then, to check if we actually want to add a point $\tilde p$ to $\widetilde N$, we query $A$ for a $(3/2)$-approximate nearest neighbour $\tilde q$ and see if $|\tilde p \tilde q| > r/2$.

    Note that this ensures that $\widetilde N$ is an $(r/3)$-packing, because a $(3/2)$-approximate nearest neighbour having distance greater than $r/2$ implies the exact nearest neighbour has distance greater than $r/3$.
    Thus, we can make a collection of balls of radius $r/3$ around the net points, where the balls will be disjoint and topological disks.
    This makes each contribute area $4\pi^2 \sinh^2(r/6) = \bigOm{r^2}$, while the surface has total area $\bigTh{g}$, meaning $|\widetilde N| = \bigO{g / r^2}$.
    Consequently, inserting all points into $A$ takes $\bigO{(g^4 / r^2) \log(g/r)}$ time, $A$ uses $\bigO{g^4 / r^2}$ space, and queries take $\bigO{\log(g/r)}$ time.

    Let $Q_r$ be as in Theorem~\ref{thm:quadtiling}.
    For each tile $\widetilde T \in Q_{r/2}$ intersecting a region $\widetilde R \subseteq \tD$ marked as thick, we take an arbitrary point $\tilde p \in \widetilde R \cap \widetilde T$ and try to add it to $\widetilde N$.
    This gives $\bigO{g^4 / r^2}$ candidates, as these tiles contain balls of radius $\bigOm{r}$ and thus have area $\bigOm{\sinh^2 r} = \bigOm{r^2}$, while also being contained in the region covered by $\cD$ which has area $\bigO{g^4}$.
    As we spend $\bigO{\log(g/r)}$ time per unsuccessful candidate, the total running time remains $\bigO{(g^4 / r^2) \log(g/r)}$.

    Lastly, by construction, the candidate points give an $r/2$-cover and any point that was within distance $r/2$ from an unsuccessful candidate point has to be within distance $r$ from a net point, making $\widetilde N$ an $r$-cover.
\end{proof}

\paragraph{Construction.}
First, use Theorem~\ref{thm:thickthin} to get a thick-thin decomposition based on the Margulis constant~$\mu_M$.
We start by constructing a set $C$ of centres and $P_2$ of Steiner points around those centres.
Throughout the construction, there will be four relevant radii around centres $c \in C$:
radius $\frac{1}{48}$ is used for packings,
radius $\frac{1}{16} = \frac{2}{32}$ for coverings,
radius $\frac{3}{32}$ for placing Steiner points in $P_2$,
and radius $\frac{1}{8} = \frac{4}{32}$ for the constant-diameter Steiner spanner construction of Theorem~\ref{thm:constdiam} (but, note that we do not need these exact values, so we can approximate them as required).
In particular, restricted to the thick parts, $C$ will be the $\frac{1}{16}$-cover given by Lemma~\ref{lem:surfnet}.
For each of these centres $c \in C$, we then add Steiner points to $P_2$ spaced at distance $\sqrt\eps$ along a circle of radius $\frac{3}{32}$ around~$c$.

For each thin part, since the shape is completely determined by the length of its shortest loop, we can construct and work on the two simply-connected regions in $\Hyp^2$ given by Lemma~\ref{lem:thinsplit}, then later map the points onto $\tD$.
Let $Q_{\frac{1}{16}}$ be as in Theorem~\ref{thm:quadtiling}.
In each such region $R \subset \Hyp^2$, we place a centre for each tile of $Q_{\frac{1}{16}}$ that intersects $R$ as long as the ball of radius $\frac{3}{32}$ around it has perimeter at least $\sqrt\eps$ in the interior of $R$.
For each centre $c$ placed for $R$, we take the intersection of the circle of radius $\frac{3}{32}$ around $c$ with $R$ and add Steiner points to $P_2$ spaced at distance $\sqrt\eps$ along this intersection.
Additionally, we greedily add centres to $C$ to ensure every point in $P$ has a centre within distance $\frac{1}{16}$ (the details for this are given in the theorem proof), and add a Steiner point on both boundary components of the ball of radius $\frac{3}{32}$ around each centre $c \in C$ added in this step.

Now, for each centre $c \in C$ in the thick part, use Theorem~\ref{thm:constdiam} to construct a Steiner $(1+\eps)$-spanner for all points of $P \cup P_2$ in the ball $B$ of radius $\frac{1}{8}$ around $c$ (details on how to find these points are given in the theorem proof) and replace every newly placed Steiner point with its closest point inside $B$.
Note that this gives point sets of diameter at most $\frac{1}{4}$ so we may in fact use Theorem~\ref{thm:constdiam}.
Moreover, $\frac{1}{8} < \mu_M / 2$ so balls of this radius in a thick part must be geodesically convex (i.e.\ it cannot be the case that the shortest curve between two points in the ball leaves the ball and goes around the surface in a different way).
We do the same for centres $c \in C$ in the thin part, except here we take the ball $B$ in the simply-connected region where $c$ was originally placed so that Theorem~\ref{thm:constdiam} works on a simply-connected region.
Finally, consider the regions left uncovered by the balls of radius $\frac{3}{32}$ around centres in $C$.
Each of these regions has two boundary components that both contain a Steiner point from~$P_2$; we connect these with a single edge each.
This adds $\bigO{n + g}$ edges.
We now have our graph $G$, which we will prove to be the promised Steiner spanner.

\thmsurfspanner*
\begin{proof}
    We will assume without loss of generality that $\eps>0$ is below a sufficiently small constant threshold.
    Let us first bound the sizes of $C$ and $P_2$.
    The thick parts contribute $\bigO{g}$ centres in total, each of which adds $\bigO{1/\sqrt\eps}$ Steiner points to $P_2$.
    For thin parts, note that by Lemma~\ref{lem:thinjradius} the injectivity radius decreases exponentially.
    Thus, we add $\bigO{\log\frac1\eps}$ centres for each thin part but the number of Steiner points placed follows a geometric series, giving only $\bigO{1 / \sqrt\eps}$ Steiner points per thin part.
    Finally, we place a constant number of centres and Steiner points for each point from $P$ that was not covered yet.
    This gives final sizes $|C| = \bigO{n + g \log\frac{1}{\eps}}$ and $|P_2| = \bigO{n + g / \sqrt{\eps}}$.

    Each of the balls $B$ of radius $\frac{1}{8}$ around a centre only intersects a constant number of others, based on an area argument.
    Namely, for any point $c \in C$, the balls that intersect its ball must come from a point of $C$ within distance $\frac{1}{4}$.
    A ball of radius $\frac{1}{4} + \frac{1}{32}$ has area $a_1 = \bigTh{1}$ in $\Hyp^2$ and at most area $a_1$ in $P_2$.
    Balls of radius $\frac{1}{48}$ are still topological disks in the thick parts of $P_2$ and thus have the same area $a_2 = \bigTh{1}$ as in $\Hyp^2$.
    Now, since $C$ is a $\frac{1}{48}$-packing, this means we can have at most $a_1/a_2 = \bigO{1}$ points of $C$ within distance $\frac{1}{4}$ from $p$.
    The balls have constant ply by similar reasoning.

    Since $|P \cup P_2| = \bigO{\frac{g}{\sqrt\eps} + n}$ and each of these points is involved in $\bigO{1}$ small-diameter Steiner spanner constructions from Theorem~\ref{thm:constdiam}, we get either $\bigO{(\frac{g}{\sqrt\eps} + n) \frac{1}{\sqrt{\eps}}}$ or $\bigO{(\frac{g}{\sqrt\eps} + n) / \eps^{3/2}}$ edges, depending on whether intersections are allowed.
    Note that the complete graph has size $n^2$ (or $n^4$ when we add Steiner points to make it non-crossing), so we can assume that we only use this procedure when it gives an edge count bounded by $n^{\bigO{1}}$.
    In particular, this lets us assume $\frac{1}{\eps} = n^{\bigO{1}}$ and $g = n^{\bigO{1}}$.

    \medskip
    Before we bound the running time, let us fill in the missing algorithmic details.
    To get, for each centre from $C$, the points of $P \cup P_2$ within distance $\frac{1}{8}$, we build a point-location data structure on the balls around these centres.
    For the thick parts, we first take the $\bigO{g^3}$ copies of each centre given by Lemma~\ref{lem:thicktersect} and around each of these $\bigO{g^4}$ points we take the ball of radius $\frac{1}{8}$.
    For the thin parts, we take the ball of radius $\frac{1}{8}$ inside the simply-connected region the centre was placed for, then take the $\bigO{g}$ copies of this given by Lemma~\ref{lem:thintersect} for $\bigO{g^2 \log\frac1\eps}$ copies in total.
    We then take the arrangement of these regions around centres.
    Note that, because each ball only intersects a constant number of others, the arrangement has size $\bigO{g^4 + g^2 \log\frac1\eps}$ and as we are working in the half-plane model consists of Euclidean circle arcs and line segments.
    For this arrangement, we use a static point-location data structure that can handle circle arcs~\cite{EdelsbrunnerGS86,SarnakT86}, which takes $\bigO{(g^4 + g^2 \log\frac1\eps) \log(g\log\frac1\eps)}$ time to construct, uses $\bigO{g^4 + g^2 \log\frac1\eps}$ space, and allows queries in $\bigO{\log(g\log\frac1\eps)} = \bigO{\log n}$ time.
    For every point in $P \cup P_2$ we can now query in which balls it lies and then add it to a list for these balls, taking $\bigO{n \log n}$ time in total.
    
    For the step where we ensure that every point in $P$ has a centre from $C$ within distance $\frac{1}{16}$, we first use the same setup as above except with balls of radius $\frac{1}{16}$ to find the points $P_{\text{uncov}} \subseteq P$ still to be covered.
    In particular, we get a partition of $P_{\text{uncov}}$ based on the thin parts.
    For each thin part, we sort the points based on their distance along the thin part, then the algorithm repeatedly adds a centre $c$ to $C$ at the leftmost point and deletes the subsequent points that are also within distance $\frac{1}{16}$ from that centre.
    Although sorting along the thin part is not exactly the same as sorting by distance from $c$, we can safely say that any remaining point has distance at least $\frac{1}{48}$ to $c$, which gives us the same packing property as before.
    This takes $\bigO{n \log n}$ time.
    
    For the small-diameter Steiner spanner constructions from Theorem~\ref{thm:constdiam}, note that
    the total number of points involved in the construction is $\bigO{|P \cup P_2|} = \bigO{\frac{g}{\sqrt\eps} + n}$ (as a point can appear in $\bigO{1}$ constructions).
    Thus, this takes $\bigO{(\frac{g}{\sqrt\eps} + n) \log(\frac{g}{\sqrt\eps} + n) / \sqrt\eps} = \bigO{(\frac{g}{\eps} + \frac{n}{\sqrt\eps}) \log n}$ time when we allow edge intersections and $\bigO{(n / \eps^{3/2} + g/\eps^2) \log n}$ time when we do not.
    
    \medskip
    Finally, let us consider two points $p, q \in P$ to prove that this is in fact a Steiner $(1+\bigO{\eps})$-spanner.
    First, if there is a centre $c \in C$ around which the ball $B$ of radius $\frac{1}{8}$ contains both $p$ and $q$, then we are done as we have already constructed a Steiner $(1+\eps)$-spanner for the points in $B$ according to Theorem~\ref{thm:constdiam}.
    Otherwise, we will find a sequence of Steiner points $s_1, \dots, s_k \in P_2$ with also $s_0 = p$ and $s_{k+1} = q$, where for every $i = 1,\dots,k+1$ the point $s_i$ has distance at most $\sqrt\eps$ from $pq$ and distance at least $\frac{1}{32}$ from $s_{i-1}$ while also having a path of length at most $(1+\eps)|s_{i-1}s_i|$ in $G$ connecting it to $s_{i-1}$.
    Note that with Lemma~\ref{lem:split} this implies that $G$ is a Steiner $(1+\bigO{\eps})$-spanner, so we can get a Steiner $(1+\eps)$-spanner by following the construction with an appropriately adjusted value $\eps$.

    First, consider each occasion that $pq$ stops being covered by the balls of radius $\frac{1}{8}$ around the centres in $C$.
    For one such case, let $c \in C$ be the centre of the last ball intersected before $pq$ stops being covered and let $c' \in C$ be the centre of the first ball after this.
    There will be Steiner points $s_i, s_{i+1} \in P_2$ connected by an edge in $G$, placed at distance $\frac{3}{32}$ from $c$ and $c'$ respectively.
    Since this can only happen in a $\sqrt\eps$-thin part, $s_i$ and $s_{i+1}$ must be within distance $\sqrt\eps$ from $pq$.
    Finally, $|s_i s_{i+1}| \geq \frac{1}{16}$ as both have distance at least $\frac{1}{32}$ from where $pq$ intersects the ball of radius $\frac{1}{8}$ around their respective centre.

    Next, we continue by finding a point $s_{i+1}$ given $s_i$.
    Note that at this point we have already fixed a subsequence of points that we must go through, but we first assume that there is no centre $c \in C$ that has within distance $\frac{1}{8}$ both $s_i$ and a point that was already selected to come after $s_i$.
    Consider the ball $B$ of radius $\frac{1}{16}$ whose centre lies on $pq$ that has $s_i$ on its boundary.
    There has to be a centre $c \in C \cap B$ for $C$ to be a $\frac{1}{16}$-cover, and $|s_i c| \leq \frac{1}{8}$.
    There also has to be a Steiner point $s_{i+1} \in P_2$ at distance $\sqrt\eps$ from $pq$ and distance $\frac{3}{32}$ from $c$.
    Note that the circle of radius $\frac{1}{16}$ around $c$ would have an intersection with $pq$ at distance at least as far from $p$ as the centre of $B$, so this must also hold for the circle of radius $\frac{3}{32}$ around~$c$.
    Hence $|s_i s_{i+1}| \geq \frac{1}{16} - \sqrt\eps$.

    Last, we consider the case where there is a centre $c \in C$ whose ball $B$ of radius $\frac{1}{8}$ contains both $s_i$ and a point that was already selected to come after $s_i$, which we now mark $s_{i+1}$.
    Since $s_i, s_{i+1} \in B$ the Steiner $(1+\eps)$-spanner of Theorem~\ref{thm:constdiam} constructed for $B$ connects them with a path of length at most $(1+\eps) |s_i s_{i+1}|$.
    What remains is to prove $|s_i s_{i+1}| \geq \frac{1}{32}$.
    If $s_i$ was found by the procedure in the preceding paragraph, let $c \in C$ be the centre used there.
    We know that $|c s_i| = \frac{3}{32}$ while $|c s_{i+1}| \geq \frac{1}{8}$, so this gives the desired bound $|s_i s_{i+1}| \geq \frac{1}{32}$.
    Otherwise, if $s_i = p$, then $s_{i+1}$ has to be placed at distance $\frac{3}{32}$ from the closest centre in $C$, while $p$ has to have distance at most $\frac{1}{16}$ to its closest centre, giving $|p s_{i+1}| \geq \frac{1}{32}$.
    The same argument holds when $s_{i+1} = q$.
    In the remaining case, note that the minimum distance between the two boundary components of a ball in the $\sqrt\eps$-thin part is already greater than $\frac{1}{16}$ which gives the bound.

    This finishes the proof that $G$ is a Steiner $(1+\eps)$-spanner and with that the proof of the theorem.
\end{proof}

\section{Non-crossing Steiner spanner for the hyperbolic plane}
Using the same ideas as for hyperbolic surfaces, we will now also construct a non-crossing Steiner spanner for points in the hyperbolic plane.
Given a polygon $P$, a \emph{neck} is a triangle or quadrilateral $N\subseteq P$ with two sides contained in $\partial P$; every remaining side has length at most $1$, and, in the quadrilateral case, the two remaining sides are non-adjacent.
Now, a \emph{neck decomposition} is a set of disjoint necks where no new neck can be inserted without losing disjointness and no pair of necks can be replaced by a single larger neck that contains both.

\begin{lemma}\label{lem:neckcomp}
    Any neck decomposition of a convex $n$-gon $P$ has size at most $2n-3$ and we can find one in $\bigO{n}$ time.
\end{lemma}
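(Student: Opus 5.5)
The plan is to treat the size bound and the algorithm separately, and to organise both around the pairs of polygon edges that a neck can ``pinch''. Every neck $N$ has two sides contained in $\partial P$. Since $P$ is convex and $N$ is a convex triangle or quadrilateral, these sides lie on two distinct edges $e,f$ of $P$; in the triangle case $e$ and $f$ share a vertex of $P$. Call $\{e,f\}$ the \emph{type} of $N$.

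The first step is to show that distinct necks of a neck decomposition have distinct types. Suppose $N_1,N_2$ are disjoint necks of the same type $\{e,f\}$. Then the convex hull of $N_1\cup N_2$, clipped to the strip between $e$ and $f$, is again a triangle or quadrilateral with two sides on $e$ and $f$. Its two remaining sides are the outermost short sides of $N_1$ and $N_2$, so each still has length at most $1$. Hence this region is a larger neck containing both, which contradicts the maximality condition.

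The second step bounds the number of types. Consider the graph on the edges of $P$ in which $e$ and $f$ are adjacent whenever some neck has type $\{e,f\}$. I would argue that this graph is outerplanar. Draw each neck's ``spine'', a curve inside the neck joining its sides on $e$ and $f$. Disjoint necks give non-crossing spines, so we obtain a non-crossing set of chords between the $n$ edges of $P$, treated as points on a circle. An outerplanar graph on $n$ vertices has at most $2n-3$ edges, which gives the bound.

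For the algorithm, I would follow a Dobkin--Kirkpatrick-style recursion, or more simply a rotating-calipers sweep. For each edge $e$, walk along the antipodal chain of edges $f$ whose distance to $e$ is at most $1$. In that strip, compute the maximal portion of $e$ and $f$ lying within unit distance of one another; this can be done with $\bigO{1}$ hyperbolic computations, in the spirit of Lemma~\ref{lem:computedist}. These maximal necks are then cut at overlaps so that the result is disjoint and greedily maximal. Because the candidate pairs form an outerplanar structure, there are only $\bigO{n}$ of them, and the two-pointer advance along the boundary is monotone, so the total time is linear.

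The main obstacle is the maximality and merging argument. It has to handle necks whose short sides are not parallel chords, and triangle necks at shared vertices. It also has to make sure that greedy insertion really yields a decomposition in which no pair can be replaced. I would first try to define the canonical neck of type $\{e,f\}$ as the maximal region of the strip where the width is at most $1$, and then show that this region is convex in the parameter along $e$. That property would follow from the convexity of the distance function, as used in Lemma~\ref{lem:computedist}.
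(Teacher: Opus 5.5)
Your size bound is correct, and it takes a different route from the paper. The paper only asserts that there is at most one neck per pair of edges. It then inducts on $n$ by deleting a vertex $v$: the triangle at $v$ is new, and at most one neck on the new edge $uw$ splits into two. Your spine/Jordan-curve argument instead shows directly that the type graph is outerplanar, using only disjointness and distinct types. This is cleaner, since you never have to relate a decomposition of $P$ to one of the smaller polygon. Your merging step also supplies the proof of the one-neck-per-pair claim. When merging, take two same-type necks that are consecutive along the strip, so that their hull cannot meet a third neck. What the paper's induction buys in return is that it uses the same vertex deletion that drives its algorithm.

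The gap is in the algorithm. You justify linear time by saying the candidate pairs form an outerplanar structure. Outerplanarity, however, holds only for the types of pairwise disjoint necks. It does not hold for the pairs $\{e,f\}$ at distance at most $1$, nor even for the pairs that admit some neck. If $P$ has diameter less than $1$, the convex hull of any two edges of $P$ is a neck. Then all $\binom{n}{2}$ pairs are candidates, every edge lies within distance $1$ of every other, and your walk costs $\bigTh{n^2}$. The step ``cut the maximal necks at overlaps'' also does not yield a decomposition. Pieces of overlapping quadrilaterals are generally not necks, and nothing ensures either maximality condition afterwards, as you note yourself.

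What is missing is a mechanism that only ever examines $\bigO{n}$ pairs. The paper gets it from the Dobkin--Kirkpatrick recursion, which you mention but do not develop:
\begin{itemize}
\item Delete $\lfloor n/2\rfloor$ pairwise non-adjacent vertices to get $P'$, and compute a decomposition $\cN'$ of $P'$ recursively.
\item Each edge of $P'$ corresponds to at most two edges of $P$, so each neck of $\cN'$ yields at most six candidate pairs in $P$.
\item Add one triangular neck at each deleted vertex.
\end{itemize}
Your $2n-3$ bound on $|\cN'|$ then makes each level cost $\bigO{n}$ pair checks, and the geometric sum over levels gives $\bigO{n}$ in total.
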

\begin{proof}
    First, note that a neck decomposition can contain at most one neck per pair of edges.
    We will now use induction to bound the number of necks.
    For $n = 3$, there are three pairs of edges so it holds trivially.
    Next, we consider what happens when we remove a vertex $v$ from $P$ and connect its neighbours $u, w$ to get a smaller polygon $P'$.
    By the induction hypothesis, any neck decomposition of $P'$ has at most $2n-5$ necks.
    In $P$, all necks that did not involve $uw$ remain the same and we can also add a (triangular) neck between $uv$ and $vw$.
    Any other neck in $P$ involving $uv$ or $vw$ must have come from a neck involving $uw$ in $P'$.
    In particular, at most one neck of $P'$ involving $uw$ can split into two necks of $P$, one involving $uv$ and one involving $vw$.
    Thus, a neck decomposition of $P$ can have at most two more necks than a neck decomposition of~$P'$, giving at most $2n-3$ necks as promised.

    We can now use a recursive algorithm to find all pairs of edges that have a neck between them.
    For $n \leq 3$ we can simply check all pairs of edges.
    Otherwise, we will implicitly use the Dobkin-Kirkpatrick hierarchy~\cite{DobkinKirkpatrick} of $P$:
    we start by removing a set $\Delta V$ of $\lfloor n/2 \rfloor$ non-adjacent vertices of $P$ and connecting the vertices that were adjacent to the removed ones to get a smaller polygon $P'$.
    We then compute a neck decomposition $\cN'$ for $P'$, which has size at most $2\lceil n/2 \rceil - 3$.
    Now, each pair of edges in $P'$ corresponds to at most four edges in~$P$.
    For these, we check which of the six pairs still have a neck between them.
    Additionally, we add a (triangular) neck at each vertex in $\Delta V$.
    Thus, we check at most $\lfloor n/2 \rfloor + 12\lceil n/2 \rceil - 18$ pairs in this step,  meaning in the whole recursion we check less than $\sum_{i=0}^\infty 7n/2^i = \bigO{n}$ pairs.
\end{proof}

Removing these necks leaves us with more well-behaved polygons.

\begin{theorem}
    Given a convex $n$-gon $P$ and a constant $\eps > 0$, we can find a neck decomposition $\cN$ of $P$ and an $\eps$-cover for $P' = P \setminus \bigcup_{N \in \cN} N$ of size $\bigO{n}$ in $\bigO{n}$ time.
\end{theorem}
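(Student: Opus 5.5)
The plan is to obtain $\cN$ directly from Lemma~\ref{lem:neckcomp} in $\bigO{n}$ time, and then to show that the leftover region $P'$ is ``thick'' in the same sense as the thick parts of a surface. Concretely, the target statement is that every point of $P'$ at distance at least a suitable constant from $\partial P$ has a ball of constant radius in $P$ around it. Then a packing and area argument, as in Lemma~\ref{lem:surfnet}, bounds the size of an $\eps$-cover (for constant $\eps$).

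\textbf{Area budget.} The convex hull $P$ is a hyperbolic $n$-gon, so its area is less than $(n-2)\pi$. The boundary of $P'$ consists of pieces of $\partial P$ together with the at most $2(2n-3)$ free sides of the necks, each of length at most $1$.

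\textbf{Key geometric claim.} Take any point $x\in P'$. I claim that either $x$ is within constant distance of a vertex of $P$ or of a neck side, or the ball of radius $c$ around $x$ lies in $P$. For the proof, suppose the ball meets two edges $e,f$ of $\partial P$ that are far from each other in $\partial P$ and far from vertices. Then $e$ and $f$ run within distance $2c<1$ of each other near $x$. In that case a quadrilateral with two sides on $e$ and $f$ and short connecting sides through a neighbourhood of $x$ is a neck. By maximality of the decomposition, this neck must either be contained in an existing neck or be mergeable with one, so $x$ lies in or near some $N\in\cN$. The only remaining points of $P'$ are those within constant distance of one of the $\bigO{n}$ vertices or neck sides.

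\textbf{Building the cover.} I would build the $\eps$-cover in two parts.
\begin{itemize}
\item In the thick interior, balls of radius $\Omega(\eps)$ around a maximal packing are disjoint and lie in $P$. Their count is therefore $\bigO{\operatorname{area}(P)/\eps^2}=\bigO{n}$ for constant $\eps$.
\item Near each of the $\bigO{n}$ vertices and neck sides, the relevant region has constant diameter and constant complexity. Each therefore needs $\bigO{1}$ cover points.
\end{itemize}

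\textbf{Algorithm.} I would process each piece of $P'$ separately, overlaying the tiling $Q_{\eps}$ of Theorem~\ref{thm:quadtiling} and taking one point per tile meeting $P'$. The number of tiles is bounded by area plus boundary length. The obstacle to linear time is that the edges of $\partial P$ between necks can be long. Still, their total length within $P'$ should be $\bigO{n}$ by the isoperimetric/area argument, since $N(P',\text{const})\cap P$ has area $\bigO{n}$.

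\textbf{Main obstacle.} The main obstacle is the key geometric claim. One must show that maximality of $\cN$ really rules out long, thin, non-neck corridors in $P'$. In particular, two parallel-ish edges might be close while the neck between them was truncated awkwardly by the ``no larger neck'' rule. I would handle this by taking the maximal neck between each pair of edges found by the recursion of Lemma~\ref{lem:neckcomp}. I would then argue, using convexity, that the distance between two edges is a convex function along them, so the set where they are within distance $1$ is a single interval that the maximal neck exhausts.
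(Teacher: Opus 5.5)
Your overall route matches the paper's. You take $\cN$ from Lemma~\ref{lem:neckcomp}, use maximality of $\cN$ to show that $P'$ is not thin away from the $\bigO{n}$ short neck sides, and charge the cover points to the area $(n-2)\pi$ of $P$. The paper does the charging with tiles of $Q_{\eps'}$: if a tile and a neighbour meet different edges of $\partial P$, the four intersection points span a neck. Hence, away from $\cN$, every tile meeting $P'$ has a neighbour lying entirely in $P$.

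However, your key geometric claim is false as stated, and the first bullet of your cover construction depends on it. Take $x\in P'$ just inside the middle of a long edge $e$, where all other edges of $P$ are at distance at least $10$ from this part of $e$ (e.g.\ $P$ approximating the one-sided $10$-neighbourhood of a long segment). Then $x$ is far from every vertex and every neck side. Yet $B(x,c)\not\subseteq P$ as soon as $c$ exceeds the distance from $x$ to $e$. Your proof of the claim only treats balls meeting two edges, so the single-edge case is missing. For packing points near $\partial P$ the statement that the balls ``lie in $P$'' fails, so the area count is unjustified as written.

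What your maximality argument actually proves is weaker, but it suffices: if $x\in P'$ is farther than a suitable constant from all vertices and short neck sides, then $B(x,c)$ meets at most one edge of $\partial P$. Suppose $B(x,r)$ with $r\le c$ meets only $e$. The line through $x$ perpendicular to the perpendicular from $x$ to $e$ is disjoint from the line of $e$, so $B(x,r)\cap P$ contains a half-disk of $B(x,r)$. Hence the disjoint sets $B(y,\eps/2)\cap P$ over packing points $y$ each have area $\bigOm{\eps^2}$, which gives the $\bigO{n}$ bound.

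There are three further points.
\begin{itemize}
\item From ``the small neck $N_x$ meets some $N\in\cN$'' you only get that $x$ is near $N$, and necks can be arbitrarily long. You need that $x$ is near a short side. Take the segment from $x$ to a point of $N_x\cap N$; it lies in the convex set $N_x\subseteq P$. It must enter $N$ through a short side, because near a relative-interior point of a side of $N$ lying on $\partial P$, $N$ coincides locally with $P$. So $x$ is within $\operatorname{diam}N_x$ of a short side.
\item With this, your ``main obstacle'' disappears, since no convexity of the distance function is needed. Replacing necks by ``the maximal neck for each pair of edges'' is unnecessary, and it may even destroy disjointness, so the output might not be a neck decomposition.
\item The isoperimetric inequality bounds area by perimeter, not conversely. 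The length of $\partial P\cap\partial P'$ away from necks is bounded instead by the one-edge property: the inner $c$-strips along it are essentially disjoint subsets of $P$, each of area $\bigTh{c}$ times its length.
\end{itemize}
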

\begin{proof}
    First calculate $\cN$ according to Lemma~\ref{lem:neckcomp}.
    Next, let $\eps' = \min\{\eps, \frac{1}{2}\}$ and let $Q_{\eps'}$ be the tiling given by Theorem~\ref{thm:quadtiling}.
    For each tile $T \in Q_{\eps'}$ that intersects the non-neck part $P'$, we add an arbitrary point from $T \cap P'$ to our $\eps$-cover $S$.
    Note that $S$ is an $\eps$-cover by construction, so what remains to show is that $|S| = \bigO{n}$.

    First, note that there are $\bigO{n}$ tiles that intersect a neck or are adjacent to a tile that does.
    Now consider a tile $T$ from the remainder and assume (for contradiction) that $T$ and a neighbour $T'$ are both not fully contained in $P$ and intersect different boundary segments.
    We can make a quadrilateral $N$ from points $p,q \in T \cap \partial P$ and $p',q' \in T' \cap \partial P$, which will be a neck since tiles of $Q_{\eps'}$ have diameter less than $\frac{1}{2}$.
    The only reason why $N \notin \cN$ would be if it intersected a neck $N' \in \cN$, but either way there would be a neck intersecting $T \cup T'$ to contradict our initial assumption.
    Thus, the tile $T$ must have an adjacent tile $T'$ completely contained in $P$.
    As $P$ has area $\bigO{n}$, it can contain at most $\bigO{n}$ tiles of $Q_{\eps'}$, each of which has $\bigO{1}$ adjacent tiles.
    Hence, $|S| = \bigO{n}$.
\end{proof}

To now construct a non-crossing Steiner spanner, we first compute the convex hull $C$ of our point set, as it suffices to construct the spanner inside $C$.
Note that we can use any Euclidean convex hull algorithm for this, after we convert the point set to the Beltrami-Klein disk model.
In this model, hyperbolic geodesics appear as Euclidean line segments, and thus the Euclidean convex hull is also the hyperbolic convex hull.
As a consequence, we can find $C$ in $\bigO{n \log n}$ time using well-known algorithms such as Graham's scan~\cite{grahamscan}.\footnote{In fact, all required operations for Graham's scan can also be done directly in the half-plane model to avoid the conversion. This is necessary as the conversion would result in precision issues.}
Next, we compute the neck decomposition for $C$.
After this, the details follow those for Steiner spanners on hyperbolic surfaces, with necks replacing thin parts.

In the analysis we can now note that the $\bigO{g}$ bounds that came from the area of the surface or the number of thin parts are replaced by $\bigO{n}$ bounds for the area and number of necks, which gives $\bigO{n^2 / \eps^2}$ edges in total.
For the running time, we no longer need to concern ourselves with different copies of points, so all auxiliary operations together take $\bigO{n \log n \log\frac1\eps}$ time and constructing the small-diameter Steiner spanners as in Theorem~\ref{thm:constdiam} takes $\bigO{n \log n / \eps^2}$ time, which dominates.

\thmplanespanner*

\section{Steiner spanners for constant-diameter regions}\label{sec:constdiam}
The hyperbolic plane is locally Euclidean, meaning that any subset of the hyperbolic plane of sufficiently small diameter can be embedded into the Euclidean plane with small distortion.
In practice, this means that many Euclidean constructions still work for sets of constant diameter, though using these constructions as a black box only gives a $\bigO{1}$-spanner rather than a $(1+\bigO{\eps})$-spanner.
We will generalise the construction of \cite{planesteinernew}, which comes with the additional challenge that it heavily relies on rotations, which do not behave as nicely in the hyperbolic plane.
To keep the same edge intersections, we will use the Klein disk, which keeps lines straight but can heavily distort both distances and angles.
Note that we can convert from any model of $\Hyp^2$ to the Klein disk~\cite{cannon1997hyperbolic}, but these conversions involve a square root which our real-RAM model does not have access to.
However, this poses no issue, as with Steiner spanners we can afford small errors and we can check for edge intersections in the original model.
To start, we bound these distortions by a function of the distance from the origin.

\begin{lemma}\label{lem:curvelength}
    Let $\kappa$ be a curve in the Klein disk of Euclidean length $L_\Euc$ and hyperbolic length $L_\Hyp$.
    If $\kappa$ has maximum hyperbolic distance $R_\Hyp$ from the origin, then $L_\Euc \leq L_\Hyp \leq e^{2R_\Hyp} L_\Euc$.
\end{lemma}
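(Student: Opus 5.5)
We compare the two length elements pointwise and integrate along $\kappa$. In the Klein disk the hyperbolic metric at a point $x$ with $|x| = r < 1$ is
\[
ds_\Hyp^2 = \frac{|dx|^2}{1-r^2} + \frac{(x\cdot dx)^2}{(1-r^2)^2}.
\]
We split $dx$ into a radial component of Euclidean length $a$ and a tangential component of length $b$, so that $|dx|^2 = a^2 + b^2$. The metric then becomes
\[
ds_\Hyp^2 = \frac{a^2}{(1-r^2)^2} + \frac{b^2}{1-r^2}.
\]
Since $0 < 1-r^2 \le 1$, both coefficients are at least $1$. Hence $ds_\Hyp \ge ds_\Euc$, and integrating along $\kappa$ gives $L_\Euc \le L_\Hyp$.

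For the upper bound, the larger of the two coefficients is the radial one, $1/(1-r^2)^2$. This gives $ds_\Hyp \le ds_\Euc/(1-r^2)$. It remains to express $1/(1-r^2)$ in terms of the hyperbolic distance $\rho$ from the origin. In the Klein model this distance satisfies $r = \tanh\rho$, so
\[
\frac{1}{1-r^2} = \cosh^2\rho = \frac{(e^\rho + e^{-\rho})^2}{4} \le e^{2\rho},
\]
where the last inequality uses $e^{-\rho} \le e^{\rho}$ for $\rho \ge 0$. Every point of $\kappa$ has $\rho \le R_\Hyp$, so $ds_\Hyp \le e^{2R_\Hyp}\, ds_\Euc$ along the whole curve. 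Integrating gives $L_\Hyp \le e^{2R_\Hyp} L_\Euc$.

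The only step that needs care is stating the Klein metric tensor correctly and confirming the identity $r = \tanh\rho$. A quick check is the radial direction: there $ds = dr/(1-r^2)$, which integrates to $\rho = \operatorname{artanh} r$, as required. Beyond that the argument is routine. In fact it proves the sharper bound $\cosh^2 R_\Hyp$, which we weaken to $e^{2R_\Hyp}$ for convenience.
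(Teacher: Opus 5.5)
Your proof is correct and follows essentially the same route as the paper. The paper also bounds the Klein metric pointwise between $\|v\|^2/(1-\|x\|^2)$ and $\|v\|^2/(1-\|x\|^2)^2$, using that the squared cosine of the angle between $x$ and $v$ lies in $[0,1]$ (equivalent to your radial/tangential split); it then integrates along $\kappa$ and uses $1/(1-\tanh^2 R_\Hyp) = \cosh^2 R_\Hyp \le e^{2R_\Hyp}$.
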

\begin{proof}
    Let $R = \tanh(R_\Hyp)$ be the Euclidean maximum distance of $\kappa$ to the origin.
    We will prove the statement using the metric tensor, which generalises the Euclidean inner product:
    for any point $x$ in the Klein disk, the metric tensor is a bilinear form $g_x(.,.)$ that takes two tangent vectors $v_x^1, v^2_x$ at $x$ to a value in $\Reals$.
    A tangent vector $v_x$ has hyperbolic length $\sqrt{g_x(v_x, v_x)}$. We note that angles at $x=0$ are conformal, so we consider $x\neq 0$,
    We can write $x$ and $v_x$ as vectors in $\Reals^2$, so that the metric tensor gives \cite{cannon1997hyperbolic}
    \[ g_x(v_x,v_x) = \frac{\|v_x\|^2}{1 - \|x\|^2} + \frac{(x \cdot v_x)^2}{(1 - \|x\|^2)^2} = \|v_x\|^2 \cdot \frac{1 - \|x\|^2 + \|x\|^2 \left( \frac{x}{\|x\|} \cdot \frac{v_x}{\|v_x\|} \right)^2}{(1 - \|x\|^2)^2}. \]
    Seeing as $0 \leq \left( \frac{x}{\|x\|} \cdot \frac{v_x}{\|v_x\|} \right)^2 \leq 1$, we get $\frac{\|v_x\|^2}{1 - \|x\|^2} \leq g_x(v_x,v_x) \leq \frac{\|v_x\|^2}{(1 - \|x\|^2)^2}$, which becomes $\|v_x\| \leq \sqrt{g_x(v_x,v_x)} \leq \frac{\|v_x\|}{1 - R^2}$ when $\|x\| \leq R$.
    Assuming $\kappa$ is defined as a differentiable map $[0,1] \to \Reals^2$, we can now calculate curve length as
    \begin{align*}
         L_\Hyp = \int_0^1 \sqrt{g_{\kappa(t)}(\kappa'(t), \kappa'(t))}\ dt &\geq \int_0^1 \|\kappa'(t)\|\ dt = L_\Euc, \\
         L_\Hyp = \int_0^1 \sqrt{g_{\kappa(t)}(\kappa'(t), \kappa'(t))}\ dt &\leq \int_0^1 \frac{\|\kappa'(t)\|}{1 - R^2}\ dt = \frac{L_\Euc}{1 - R^2}.
    \end{align*}

    This proves $L_\Euc \leq L_\Hyp \leq \frac{L_\Euc}{1 - R^2}$.
    To get the lemma statement, note that $\frac{1}{1 - R^2} = \frac{1}{1 - \tanh^2(R_\Hyp)} = \cosh^2(R_\Hyp) \leq e^{2R_\Hyp}$.
\end{proof}

We can now similarly bound the distortion of angles using the distances and trigonometry.

\begin{lemma}\label{lem:angles}
    Let two lines in the Klein disk intersect at a Euclidean angle $\alpha_\Euc$ at a point $x$.
    If $x$ lies at hyperbolic distance $R_\Hyp$ from the origin and $\alpha_\Euc \leq \frac{1}{2}e^{-2R_\Hyp}$, then the hyperbolic angle between the lines is $\alpha_\Hyp \leq \pi e^{2R_\Hyp} \alpha_\Euc$.
\end{lemma}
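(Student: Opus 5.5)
Set up coordinates so that $x=(t,0)$ with $t=\tanh R_\Hyp\ge 0$, and let the two lines have Euclidean unit direction vectors $u$ and $w$ meeting at Euclidean angle $\alpha_\Euc$. The plan is to compute the hyperbolic angle straight from the metric tensor $g_x$ written in the proof of Lemma~\ref{lem:curvelength}. In the orthonormal Euclidean basis $e_r=(1,0)$, $e_\theta=(0,1)$ at $x$, that tensor is diagonal:
\[
g_x(e_r,e_r)=\frac{1}{(1-t^2)^2},\qquad g_x(e_\theta,e_\theta)=\frac{1}{1-t^2},\qquad g_x(e_r,e_\theta)=0 .
\]
Hence the linear map $A=\mathrm{diag}\bigl(\frac{1}{1-t^2},\frac{1}{\sqrt{1-t^2}}\bigr)$ carries $(T_x,g_x)$ isometrically onto the Euclidean plane. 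The hyperbolic angle $\alpha_\Hyp$ is then just the Euclidean angle between $Au$ and $Av$.

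The key step is to bound how much the fixed linear map $A$ can enlarge angles. Its singular values are $\sigma_{\max}=\frac{1}{1-t^2}$ and $\sigma_{\min}=\frac{1}{\sqrt{1-t^2}}$, so
\[
\kappa=\frac{\sigma_{\max}}{\sigma_{\min}}=\frac{1}{\sqrt{1-t^2}}=\cosh R_\Hyp\le e^{R_\Hyp}.
\]
The angle is measured through the cross product:
\[
\sin\alpha_\Hyp=\frac{|Au\times Aw|}{\|Au\|\,\|Aw\|}=\frac{\det A\,\sin\alpha_\Euc}{\|Au\|\,\|Aw\|}\le\frac{\sigma_{\max}\sigma_{\min}}{\sigma_{\min}^2}\sin\alpha_\Euc=\kappa\sin\alpha_\Euc .
\]
This gives $\sin\alpha_\Hyp\le e^{R_\Hyp}\alpha_\Euc$.

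To get from $\sin\alpha_\Hyp$ back to $\alpha_\Hyp$, we use the hypothesis $\alpha_\Euc\le\frac12 e^{-2R_\Hyp}$. It yields $\sin\alpha_\Hyp\le\frac12 e^{-R_\Hyp}\le\frac12$, so $\alpha_\Hyp$ lies in $[0,\pi/6]$ or in $[5\pi/6,\pi]$. On the small branch, $\sin y\ge\frac{2}{\pi}y$ for $y\le\pi/2$, so $\alpha_\Hyp\le\frac{\pi}{2}\sin\alpha_\Hyp\le\frac{\pi}{2}e^{R_\Hyp}\alpha_\Euc$, which is well within $\pi e^{2R_\Hyp}\alpha_\Euc$.

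I expect the main obstacle to be excluding the obtuse branch. Here we should take $\alpha_\Euc,\alpha_\Hyp$ to be the angles between the chosen directed lines, or equivalently the non-obtuse angles between the lines. Since $A$ is linear and invertible, the map from directions to directions is a homeomorphism of the circle that sends $\pm v$ to $\pm Av$. A continuity argument then works: rotate $w$ from $u$ through Euclidean angles $s\in[0,\alpha_\Euc]$. The hyperbolic angle starts at $0$, varies continuously, and by the bound above always satisfies $\sin\le\frac12$, so it can never jump from below $\pi/6$ to $5\pi/6$. Therefore it stays on the small branch.

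If the paper's convention is that the origin case $x=0$ is conformal, that is consistent with this argument, since there $\kappa=1$. The extra factor $e^{R_\Hyp}$ of slack in the statement just absorbs constants.
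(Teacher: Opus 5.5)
Your proof is correct, and it takes a genuinely different route from the paper.

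\textbf{The paper's route.} The paper argues synthetically. It closes the two lines off with a third line to form a small Euclidean right triangle at $x$. It then applies the hyperbolic law of sines to get $\sin\alpha_\Hyp\le 2a_\Hyp/c_\Hyp$. Finally it converts this side ratio back to $a_\Euc/c_\Euc=\sin\alpha_\Euc$ using the length-distortion bounds of Lemma~\ref{lem:curvelength}, which is where its factor $e^{2R_\Hyp}$ comes from.

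\textbf{Your route.} You work infinitesimally at $x$. After rotating so that $x=(t,0)$, the metric tensor is diagonal in the radial/tangential frame. So an explicit linear map $A$ is an isometry of $(T_x,g_x)$ onto the Euclidean plane, and the angle distortion is controlled by its condition number $\cosh R_\Hyp$. There is a small typo: you mean the angle between $Au$ and $Aw$, not $Au$ and $Av$.

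\textbf{What your route buys.} You get a sharper statement, $\sin\alpha_\Hyp\le\cosh(R_\Hyp)\sin\alpha_\Euc$, hence $\alpha_\Hyp\le\frac{\pi}{2}\cosh(R_\Hyp)\,\alpha_\Euc$. Your continuity argument would in fact only need $\cosh(R_\Hyp)\sin\alpha_\Euc<1$ rather than the $e^{-2R_\Hyp}$-type hypothesis. Your local argument also avoids two points that the paper's proof passes over:
\begin{itemize}
\item One cannot always choose the triangle with all vertices no farther from the origin than $x$. For example, if one line is perpendicular to the radius through $x$, every other point of that line is farther out. The paper's argument then needs a limiting step to be rigorous.
\item Passing from a bound on $\sin\alpha_\Hyp$ to a bound on $\alpha_\Hyp$ requires knowing that the corresponding hyperbolic angle is acute. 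You establish this explicitly with your continuity argument.
\end{itemize}

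\textbf{What the paper's route buys.} It reuses Lemma~\ref{lem:curvelength} directly. It also stays within the hyperbolic-trigonometry toolkit used in the rest of that section, without invoking the explicit form of $g_x$ again.
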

\begin{proof}
    Add a third line to get a Euclidean right-angled triangle that has angle $\alpha_\Euc$ at $x$ and whose vertices are all at least as close to the origin as $x$.
    Let $\gamma_\Hyp$ denote the hyperbolic angle at the Euclidean right angle.
    Name the sides $a$, $b$ and $c$, where $a$ is opposite angle $\alpha$ and $c$ opposite the Euclidean right angle, and use subscripts $\Euc$ and $\Hyp$ for the Euclidean and hyperbolic side lengths.
    We can move the third line close enough so that $a_\Hyp \leq 2.1$.
    By the hyperbolic law of sines \cite[page 81]{thurston97three},
    \[ \frac{\sin\alpha_\Hyp}{\sinh a_\Hyp} = \frac{\sin\gamma_\Hyp}{\sinh c_\Hyp}. \]
    Note that $\sin\gamma_\Hyp \leq 1$, that $\sinh c_\Hyp \geq c_\Hyp$, and that $a_\Hyp \leq 2.1$ implies $\sinh a_\Hyp \leq 2 a_\Hyp$.
    Thus,
    \[ \sin\alpha_\Hyp \leq 2 a_\Hyp / c_\Hyp. \]
    Applying Lemma~\ref{lem:curvelength} now gives
    \begin{align*}
        2a_\Hyp / c_\Hyp
        \leq 2e^{2R_\Hyp} \cdot a_\Euc / c_\Euc 
        = 2e^{2R_\Hyp} \sin\alpha_\Euc 
        \leq 2e^{2R_\Hyp} \alpha_\Euc.
    \end{align*}
    Note that $2e^{2R_\Hyp} \alpha_\Euc \leq 1$ by assumption.
    Thus,
    \[ 2e^{2R_\Hyp}\alpha_\Euc \leq \sin(\pi e^{2R_\Hyp}\alpha_\Euc), \]
    which gives us $\alpha_\Hyp \leq \pi e^{2R_\Hyp}\alpha_\Euc$ as claimed.
\end{proof}

Last, we will need this small auxiliary lemma.

\begin{lemma}\label{lem:tanhapprox}
    If $\eps \in [0,\frac{1}{4}]$ and $x \in [0,1]$, then
    \[ (1 + \eps) \tanh x \leq \tanh\left( (1+4\eps) x \right). \]
\end{lemma}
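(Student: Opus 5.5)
Since $\tanh$ is increasing and concave on $[0,\infty)$, it is convenient to compare the two sides via the inverse function. Equivalently, we need $\operatorname{artanh}\bigl((1+\eps)\tanh x\bigr) \le (1+4\eps)x$ whenever $(1+\eps)\tanh x<1$. The left side may exceed the range of $\tanh$ only if $(1+\eps)\tanh x\ge 1$. This cannot happen here: $\tanh 1\approx 0.7616$ and $\frac54\cdot 0.7616\approx 0.952<1$. So both sides are well defined.

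The cleanest route is to consider $f(x) = \tanh((1+4\eps)x) - (1+\eps)\tanh x$ on $[0,1]$. We have $f(0)=0$, so it suffices to show $f'(x)\ge 0$, i.e.
\[ (1+4\eps)\operatorname{sech}^2((1+4\eps)x) \ge (1+\eps)\operatorname{sech}^2 x. \]
This fails for large $x$, so the derivative approach alone may not work on all of $[0,1]$. I would instead compare ratios. The function $h(y)=\tanh(y)/y$ is decreasing in $y$, so
\[ \frac{\tanh((1+4\eps)x)}{\tanh x} = (1+4\eps)\,\frac{h((1+4\eps)x)}{h(x)}. \]
The task is then to show $h((1+4\eps)x)/h(x)\ge \frac{1+\eps}{1+4\eps}$.

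Write $y=(1+4\eps)x\le 2$. Using $\log h$, its derivative is $\frac{1}{\sinh y\cosh y}-\frac1y = \frac{2}{\sinh 2y}-\frac1y$, which is bounded below by $-c$ for some explicit constant $c$ on $[0,2]$. A concrete choice is $-y/3\ge -2/3$, since $\frac{2}{\sinh 2y}\ge \frac1y - \frac{2y}{3}\cdot\frac12$ by the series $\sinh 2y\le 2y\cosh(2y/\sqrt3)$-type estimates. By the mean value theorem,
\[ \log\frac{h(y)}{h(x)} \ge -c\,(y-x) = -c\cdot 4\eps x \ge -4c\eps. \]
It then suffices that $(1+4\eps)e^{-4c\eps}\ge 1+\eps$ for $\eps\in[0,\frac14]$. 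With $c\le \frac23$ this reads $(1+4\eps)e^{-8\eps/3}\ge 1+\eps$. Both sides agree at $0$, and the condition is verified by concavity of the left-hand side's logarithm minus $\log(1+\eps)$: it holds at $\eps=0$ and at $\eps=\frac14$, where $2e^{-2/3}\approx 1.027<1.25$.

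That last check fails, so the crude constant bound on $\log h$ is too weak, and I expect exactly this step to be the main obstacle. The fix is to use the sharper, $x$-dependent bound $(\log h)'(t)\ge -\tfrac{2t}{3}$, valid for $t\ge0$ since $\tanh t\ge t - t^3/3$. This gives
\[ \log\frac{h(y)}{h(x)}\ge -\tfrac13(y^2-x^2) = -\tfrac13 x^2\bigl((1+4\eps)^2-1\bigr). \]
This is at most $\tfrac13(8\eps+16\eps^2)\le 4\eps$ in absolute value for $x\le1$, $\eps\le\frac14$. I would then have to verify $(1+4\eps)e^{-\frac13(8\eps+16\eps^2)}\ge 1+\eps$ on $[0,\frac14]$. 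It holds at $0$ with slope $4-\frac83>1$, and at $\eps=\frac14$ the left side is $2e^{-4/3}\approx 0.53<1.25$, so it is still not enough there. In that regime I would fall back to a direct numerical or monotonicity comparison for moderately large $x$. Concretely, I would split into $x\le x_0$, using the Taylor argument, and $x\ge x_0$. In the second case $\tanh((1+4\eps)x)-\tanh x$ is bounded below via the mean value theorem by $4\eps x\operatorname{sech}^2(2x)$, which must be compared against $\eps\tanh x\le\eps$. Calibrating $x_0$ and the case boundaries is the delicate part.
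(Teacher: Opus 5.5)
\textbf{There is a genuine gap: the proposal never reaches a proof.} Each explicit bound you try fails near the corner $(x,\eps)=(1,\tfrac14)$, where the inequality has little slack ($\tanh 2\approx 0.964$ against $\tfrac54\tanh 1\approx 0.952$). Your closing two-case split cannot be calibrated to work there.

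\emph{The constant-derivative route.} It fails even with the best possible constant. We have $\min_{[0,2]}(\log h)'\approx -0.4707$, attained near $y\approx 1.34$, whereas at $x=1$, $\eps=\tfrac14$ you need $c\le\log 1.6\approx 0.4700$. Your $-y/3$ is not a valid lower bound in any case, since $\frac{2}{\sinh 2y}-\frac1y=-\frac{2y}{3}+O(y^3)$.

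\emph{The bound $(\log h)'(t)\ge-\frac{2t}{3}$.} It is not sharper where it matters: it is weaker than the constant bound for $t>0.71$. Also, at $\eps=\tfrac14$ your exponent is $\tfrac13(2+1)=1$, not $\tfrac43$. With this bound, the case $x\le x_0$ works at $\eps=\tfrac14$ only when $x^2\le\log 1.6$, i.e.\ $x\le 0.686$.

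\emph{The fallback.} After the $\eps$ cancels, it requires $4x\operatorname{sech}^2(2x)\ge\tanh x$. This fails for $x$ beyond about $0.70$; at $x=1$ it reads $0.28\ge 0.76$. The reason is that the slope of $\tanh$ at the right endpoint grossly underestimates the mean-value slope. So at $\eps=\tfrac14$ the range $x\in(0.70,1]$ is covered by neither case, whatever $x_0$ you choose.

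\textbf{What is missing} is an argument that loses nothing at the corner.

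\emph{The paper's route.} It clears denominators with product-to-sum identities, reducing the claim to $\eps\sinh((2+4\eps)x)\le(2+\eps)\sinh(4\eps x)$. Expanding $\sinh(2x+4\eps x)$ and dividing by $\sinh(4\eps x)$ gives $\eps\cosh 2x+\eps\coth(4\eps x)\sinh 2x\le 2+\eps$. The bounds $\coth t\le\coth(1)/t$ for $t=4\eps x\in(0,1]$ and $\sinh 2x\le x\sinh 2$ then cancel the $\eps$ and the $x$. Both bounds are tight exactly at $(1,\tfrac14)$. What remains is the numerical condition $\eps\le(2-\tfrac14\coth 1\sinh 2)/(\cosh 2-1)\approx 0.29$.

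\emph{Repairing your own idea.} Your concavity argument works if you apply it to the original function instead of a lossy relaxation. For fixed $x$, the function $F(\eps)=\tanh((1+4\eps)x)-(1+\eps)\tanh x$ is concave in $\eps$, because $\tanh''\le 0$ on $[0,\infty)$. Since $F(0)=0$, it suffices to check $\eps=\tfrac14$. There the identity $\tanh 2x=2\tanh x/(1+\tanh^2 x)$ turns the claim into $\tanh^2x\le\tfrac35$. This holds on $[0,1]$ because $\tanh^2 1\approx 0.580$.
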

\begin{proof}
    Note that this holds trivially when $\eps = 0$ or $x = 0$.
    Otherwise, the following are all equivalent (using the product identities for hyperbolic functions):
    \begin{align*}
        (1 + \eps) \tanh x &\leq \tanh\left( (1+4\eps) x \right), \\
        (1 + \eps) \sinh x \cosh\left( (1+4\eps) x \right) &\leq \sinh\left( (1+4\eps) x \right) \cosh x, \\
        (1 + \eps) \left( \sinh\left( (2+4\eps) x \right) - \sinh(4\eps x) \right) &\leq \sinh\left( (2+4\eps) x \right) + \sinh(4\eps x), \\
        \eps \sinh\left( (2+4\eps) x \right) &\leq (2+\eps) \sinh(4\eps x).
    \end{align*}
    Here, using the sum of arguments identity,
    \[ \sinh\left( (2+4\eps) x \right) = \sinh(4\eps x) \cosh(2x) + \cosh(4\eps x) \sinh(2x). \]
    Thus, after dividing by $\sinh(4\eps x)$, the inequalities are also equivalent to
    \[ \eps\cosh(2x) + \eps\coth(4\eps x) \sinh(2x) \leq 2 + \eps. \]
    Now, since $x \leq 1$ while $\coth t \leq \coth(1) / t$ and $\sinh(2t) \leq \sinh(2) \cdot t$ for $t \in (0, 1]$, this inequality is implied by
    \begin{align*}
        \eps\cosh 2 + (\coth 1 \sinh 2) / 4 &\leq 2 + \eps \\
        \eps &\leq \frac{2 - (\coth 1 \sinh 2) / 4}{\cosh 2 - 1} \approx 0.29,
    \end{align*}
    which is true by our choice of $\eps$.
\end{proof}

The construction of \cite{planesteinernew} is a \emph{cone-restricted Steiner $(1+\eps)$-spanner}, meaning that, for every $a,b\in P$, there is an $ab$-path $\Pi$ in the spanner such that every edge $e$ of $\Pi$ has Euclidean angle $\angle(e, ab) \leq \sqrt{\eps}$.
To prove that such cone-restricted Steiner spanners also work for hyperbolic point sets, we now consider the \emph{hyperbolic projection} of edges of these spanners onto some base line.
The hyperbolic projection maps a point $p$ to the hyperbolically closest point $p'$ on a given line $\ell$, where $pp'$ is now also hyperbolically perpendicular to $\ell$.

\begin{lemma}\label{lem:conerestrict}
    Let $\ell$ and $\ell'$ be two lines that intersect (possibly outside of the Klein disk) at Euclidean angle $\alpha \leq \sqrt{\eps}$ for some $\eps \in (0,10^{-5}]$.\footnote{Note that we make no attempt to optimise this constant.}
    Let $p$ and $q$ be points on $\ell$ that hyperbolically project to $p'$ and $q'$ on $\ell'$.
    Assume each of $p$ and $q$ is at (hyperbolic) distance at most~$\frac{1}{2}$ from the origin and distance at most $\sqrt\eps$ from $\ell'$.
    Then, $|pq| \leq (1+\eps \cdot 10^5) |p'q'|$.
\end{lemma}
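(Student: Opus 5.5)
The plan is to reduce the claim to a hyperbolic-trigonometry statement about the angle between $\ell$ and $\ell'$. First, use Lemma~\ref{lem:angles} to turn the Euclidean angle $\alpha$ into a hyperbolic angle of size $\bigO{\sqrt\eps}$. Then compare $|pq|$ with $|p'q'|$ through the quadrilateral $pqq'p'$, which has right angles at $p'$ and $q'$.

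\textbf{Step 1: bound the hyperbolic angle.} I would first treat the case where $\ell$ and $\ell'$ meet inside the disk at a point $x$. Points within distance $\sqrt\eps$ of $\ell'$ and within $\frac12$ of the origin make $\ell$ and $\ell'$ close on a segment of length $\Theta(1)$. If $x$ lies far from the origin, I would replace $x$ by a nearby reference point. When the lines are ultraparallel, or meet outside the disk, they diverge along $\ell'$. Near the origin, the hyperbolic angle that $\ell$ makes with the perpendiculars dropped to $\ell'$ is then still $\pi/2 - \bigO{\sqrt\eps}$.

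To make this uniform, I would work directly with the segment $pq$. Its Euclidean direction differs from that of $\ell'$ by at most $\alpha\le\sqrt\eps$. Since $|p|,|q|\le \frac12$ hyperbolically, the Euclidean radius is at most $\tanh\frac12$. Applying Lemma~\ref{lem:angles} with $R_\Hyp\le \frac12+\sqrt\eps$ then bounds the hyperbolic angle $\theta$ between $pq$ and the perpendicular direction to $\ell'$ as $\pi/2-\theta = \bigO{\sqrt\eps}$. This applies at every point of $pq$, possibly after passing to a line through a point of $pq$ that is parallel in Euclidean terms to $\ell'$, and using that the angle to $\ell'$ is then controlled.

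\textbf{Step 2: compare lengths via the Lambert-type quadrilateral.} Parametrise $pq$ by arclength and project it onto $\ell'$. For a point at distance $h\le \sqrt\eps+\bigO{1}\cdot\sqrt\eps$ from $\ell'$, a length element of the projection equals $\cosh h$ times the component of the length element along the hypercycle direction. Concretely,
\[ ds^2 = dh^2 + \cosh^2 h\, dt^2 \]
in Fermi coordinates $(t,h)$ about $\ell'$. Hence
\[ |pq| = \int \sqrt{h'(t)^2 + \cosh^2 h}\,dt \le |p'q'|\cdot \max\left(\sqrt{(h')^2+\cosh^2 h}\right). \]
The slope $h'(t)$ equals $\cosh h\cdot\cot\theta' $, where $\theta'$ is the hyperbolic angle between $pq$ and the hypercycle. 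By Step~1 this is $\bigO{\sqrt\eps}$, so $(h')^2=\bigO{\eps}$. Also, $h\le \sqrt\eps$ holds at the endpoints. Since distance to a line is convex along a geodesic, $h\le\sqrt\eps$ holds on all of $pq$, so $\cosh^2 h\le 1+2\eps$. Altogether, the integrand is at most $1+C\eps$, which gives $|pq|\le(1+C\eps)|p'q'|$ with a crude $C\le 10^5$ once constants are tracked.

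\textbf{Main obstacle.} I expect the hardest part to be Step~1. The difficulty is translating the Euclidean angle at a possibly distant or nonexistent intersection point into a hyperbolic angle bound along $pq$, because Lemma~\ref{lem:angles} needs the intersection point near the origin. I would handle this by comparing directions in Fermi coordinates, using Lemma~\ref{lem:curvelength} on small triangles located at points of $pq$ near the origin. The hypothesis that $p$ and $q$ are within $\sqrt\eps$ of $\ell'$ keeps every relevant point within distance $\frac12+\bigO{\sqrt\eps}$ of the origin, so the exponential distortion factors there are absolute constants.
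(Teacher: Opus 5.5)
\textbf{There is a genuine gap in Step~1.} Your Step~2 is sound and differs from the paper's route. In Fermi coordinates about $\ell'$ the projection of $pq$ onto $\ell'$ is monotone, and $|h|$ is convex along $pq$. So $|pq|=\int\sqrt{h'(t)^2+\cosh^2 h}\,dt$ gives $|pq|\le(1+\bigO{\eps})|p'q'|$, \emph{provided} $|h'(t)|=\bigO{\sqrt\eps}$ at every point of $pq$. (Small slip: with $\theta'$ the angle to the hypercycle, $h'=\cosh h\tan\theta'$.) The paper instead controls only the angle $\delta$ at $p$, then uses the Lambert quadrilateral $p'q'sp$ and the triangle $psq$.

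The problem is that Step~1 carries all the content of the lemma and is not proved. Lemma~\ref{lem:angles} only turns a \emph{small} Euclidean angle between two lines through a common point into a small hyperbolic angle. It says nothing about the angle between $pq$ and the perpendiculars to $\ell'$. In the Klein model those perpendiculars pass through the pole of $\ell'$, so near the origin they can miss Euclidean perpendicularity by a constant amount, not $\bigO{\sqrt\eps}$. Hence knowing that $pq$ is Euclidean-close in direction to $\ell'$ gives no direct information.

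Your fallback, the Euclidean parallel $m_y$ to $\ell'$ through $y\in pq$, does give hyperbolic angle $\bigO{\sqrt\eps}$ between $\ell$ and $m_y$ at $y$. You still need that $m_y$ meets the perpendicular from $y$ to $\ell'$ at angle $\pi/2-\bigO{\sqrt\eps}$. Since $m_y$ and $\ell'$ are ultraparallel, ``the angle to $\ell'$ is controlled'' has no meaning. This near-parallelism is a statement of the same type as the lemma and needs real hyperbolic trigonometry. For example, the Lambert quadrilateral on the common perpendicular of $m_y$ and $\ell'$ gives $\cos\beta_y=\sinh a\sinh b$ with $a=\bigO{\sqrt\eps}$ and $b=\bigO{1}$. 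Likewise, your claim that ultraparallel or far-meeting lines still give angle $\pi/2-\bigO{\sqrt\eps}$ is exactly what must be shown.

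\textbf{How the paper closes this.} It uses a case split.
\begin{itemize}
\item If $\ell$ meets $\ell'$ at $x$ with $|px|\le 1$, then Lemma~\ref{lem:angles} legitimately applies at $x$, which is within $\frac32$ of the origin. The right triangle $pp'x$ gives $\cos\delta=\cosh|p'x|\sin\alpha_\Hyp=\bigO{\sqrt\eps}$.
\item Otherwise no Euclidean angle is used at all. The point $r_1\in\ell$ at distance $1$ from $p$ stays on $p$'s side of $\ell'$. So its projection $r_2$ onto $pp'$ lies between $p$ and $p'$, and the right triangle $pr_2r_1$ gives $\cos\delta=\tanh|pr_2|/\tanh 1\le\tanh\sqrt\eps/\tanh 1$.
\end{itemize}
You can run this same dichotomy at every $y\in pq$. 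Each such $y$ lies within $\frac12$ of the origin and within $\sqrt\eps$ of $\ell'$; in the second case, work on the side where $\ell$ makes an acute angle with $yy'$. This yields the pointwise bound $|h'|=\bigO{\sqrt\eps}$ and completes your Fermi-coordinate proof.
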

\begin{proof}
    \begin{figure}
        \centering
        \includegraphics[page=1]{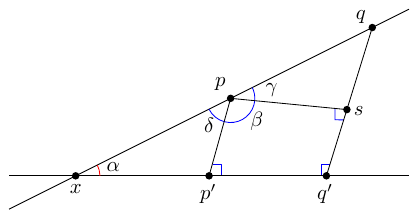}
        \includegraphics[page=2]{Figures/projectionfig.pdf}
        \caption{Construction in Lemma~\ref{lem:conerestrict}. Here, red denotes Euclidean angles while blue denotes hyperbolic (right) angles.}
        \label{fig:conerestrict}
    \end{figure}

%
    Assume without loss of generality that $|pp'| \leq |qq'|$.
    Let $\delta$ denote the (hyperbolic) angle between $pp'$ and $\ell$ on the side away from $q$; we will first need to prove that $\cos\delta \leq 100\sqrt{\eps}$.
    We handle this with two cases.
    \begin{itemize}
    \item
    This case is shown in Figure~\ref{fig:conerestrict} (left).
    Assume that there is an intersection $x$ between $\ell$ and $\ell'$, and that $|px| \leq 1$.
    In this case, we know that $p$ lies between $x$ and $q$.
    Let $\alpha_\Hyp$ be the hyperbolic angle at $x$ corresponding to $\alpha$.
    Hyperbolic trigonometry gives~\cite[page 81]{thurston97three}
    \[ \cos\delta = \cosh|p'x| \sin\alpha_\Hyp. \]
    Note that $\alpha_\Hyp \leq \pi e^3 \alpha \leq \pi e^3 \sqrt{\eps}$ by Lemma~\ref{lem:angles} and that $|p'x| \leq |px| \leq 1$, so we get
    \[ \cos\delta \leq \cosh 1 \cdot \sin(\pi e^3 \sqrt{\eps}) \leq \pi e^3\cosh 1 \cdot \sqrt{\eps} \leq 100\sqrt{\eps}. \]

    
    \item
    Otherwise, let $r_1$ be the point on $\ell$ at distance $1$ from $p$ such that $p$ is between $r_1$ and $q$, then let $r_2$ be the (hyperbolic) projection of $r_1$ onto $pp'$ (as shown in Figure~\ref{fig:conerestrict} (right)).
    If $\delta \geq \pi/2$ we are already done, so now the fact that $r_1$ is on the same side of $\ell'$ as $p$ implies that $r_2$ must lie between $p$ and $p'$.
    Using hyperbolic trigonometry,
    \[ \cos\delta = \tanh|pr_2| / \tanh|pr_1|. \]
    Note here that $|pr_2| \leq |pp'| \leq \sqrt\eps$ and $|pr_1| = 1$, so we get
    \[ \cos\delta \leq \tanh\sqrt{\eps} / \tanh 1 \leq \sqrt\eps / \tanh 1 \leq 100\sqrt{\eps}. \]
    \end{itemize}

    To continue, let $s$ be the hyperbolic projection of $p$ onto $qq'$.
    This makes $p'q'sp$ a \emph{Lambert quadrilateral}, a quadrilateral with three right angles and an acute angle at $p$ that we will call~$\beta$.
    This gives us the following formula \cite[Theorem~32.21]{MR428183}:
    \[ \cos\beta = \sinh|p'q'| \sinh|q's|. \]
    Note here that $|p'q'| \leq 1 + 2\sqrt\eps \leq 3$ and $|q's| \leq \sqrt\eps$, thus $\cos\beta \leq \sinh 3 \sinh 1 \cdot \sqrt\eps \leq 12\sqrt\eps$.
    Both for $\beta$ and for~$\delta$ we can use that $\cos t \geq 1 - 2t / \pi$ to get $\beta \geq \pi/2 - 6\pi\sqrt\eps$ and $\delta \geq \pi/2 - 50\pi \cdot \sqrt\eps$.
    Let $\gamma$ be the hyperbolic angle at $p$ in triangle $psq$; now $\gamma = \pi - \beta - \delta \leq 56\pi \cdot \sqrt{\eps}$.
    By hyperbolic trigonometry, we get
    \[ \tanh|pq| = \tanh|ps| / \cos\gamma. \]
    Here, note that $1 / \cos t \leq 1 + t^2$ for $t \in [0,1]$ to get
    \[ \tanh|pq| \leq (1 + 3136\pi^2 \eps) \tanh|ps|. \]
    By $p'q'sp$ being a Lambert quadrilateral, we also get
    \[ \tanh|ps| = \cosh|q's| \tanh|p'q'|, \]
    where we use $\cosh t \leq 1 + t^2$ for $t \in [0,2]$ and recall that $|q's| \leq |qq'| \leq \sqrt{\eps}$ to get
    \[ \tanh|ps| \leq (1 + \eps) \tanh|p'q'|. \]
    Thus,
    \begin{align*}
         \tanh|pq| \leq (1 + 3136\pi^2 \eps)(1 + \eps) \tanh|p'q'|
         &\leq \left(1 + (3136\pi^2 +1)\eps + 3136\pi^2\eps^2 \right) \tanh|p'q'| \\
         &< \left(1 + \eps \cdot 10^5 \right) \tanh|p'q'|.
    \end{align*}
    If we set $\eps' = \eps / 4 \cdot 10^5$, then Lemma~\ref{lem:tanhapprox} now implies $|pq| \leq (1+\eps') |p'q'|$ as required.
\end{proof}

\paragraph*{Construction.}
Given a point $o$, we can now set the origin of the Klein disk to $o$
and use the Euclidean non-crossing Steiner spanner construction from \cite{planesteinernew} on a set of points
within distance $\frac{1}{4}$ of $o$ to get a graph $G^o$ (where we discard Steiner points at distance further than $\frac{1}{2}$ from $o$ as they will never be used).
We will prove that $G^o$ is also a non-crossing Steiner spanner with (up to constants) the same sparsity $\bigO{\eps^{-3/2}}$.
For this, note that $G^o$ is formed by overlaying graphs $G^o_i$ for $1 \leq i \leq k = \bigO{\sqrt{1/\eps}}$, which correspond to $k$ different directions from
$o$.
Each such graph $G^o_i$ is plane and only has a linear number of edges, but to ensure the combined graph $G^o$ is still plane \cite{planesteinernew} adds Steiner points wherever two graphs $G^o_i$ and $G^o_j$ intersect.
The following lemma (analogous to Lemma~14 of \cite{planesteinernew}) bounds the number of intersections even between graphs
with a different origin
(a generalisation that we will need later).

\begin{lemma}\label{lem:coneintersect}
    Let $uv$ be an edge of some graph $G^o_i$.
    Then,
    any other graph $G^{o'}_j$ contains at most $\bigO{1/\sqrt{\eps}}$ edges that both intersect $uv$ and are at least as long as $uv$.
\end{lemma}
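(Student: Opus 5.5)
We count the edges $e$ of $G^{o'}_j$ that cross $uv$ and satisfy $|e| \geq |uv|$ by a packing argument, as in Lemma~14 of \cite{planesteinernew}. We transport the needed Euclidean facts into one common model using Lemmas~\ref{lem:curvelength} and~\ref{lem:angles}.

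\emph{Step 1: the Euclidean structure of one graph.} We recall what the construction of \cite{planesteinernew} gives for a single graph $G^{o'}_j$, drawn in the Klein model centred at $o'$. It is a plane graph whose edges are (essentially) of two kinds. The first kind are edges nearly parallel to the direction $\theta_j$, within Euclidean angle $\bigO{\sqrt\eps}$. The second kind are short connector edges. Lemma~14 there rests on two properties of this structure. First, any two long near-parallel edges of $G^{o'}_j$ that cross a common segment of length $L$ and have length at least $L$ must be separated, in the direction perpendicular to $\theta_j$, by $\Omega(\sqrt\eps L)$. Second, all such edges fit in a strip of width $\bigO{L}$ around the segment. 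Together these give $\bigO{1/\sqrt\eps}$ edges. I would restate this as a purely Euclidean claim for a single graph and a single segment $s$. The claim is that at most $\bigO{1/\sqrt\eps}$ edges of $G^{o'}_j$ cross $s$ with Euclidean length at least $c\,|s|_\Euc$, for a fixed constant $c>0$. This is a robustness version of Lemma~14 of \cite{planesteinernew}. It should follow from their proof with constants changed, because the separation argument scales linearly with the threshold length.

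\emph{Step 2: changing the centre.} All points involved lie within hyperbolic distance $\frac12$ of $o$ and of $o'$, since Steiner points farther away were discarded. So by Lemma~\ref{lem:curvelength}, Euclidean lengths in either Klein chart agree with hyperbolic lengths up to the constant factor $e^{1}$. The map between the Klein charts centred at $o$ and at $o'$ is a hyperbolic isometry. It is also a projective map of the disk, so it sends straight segments to straight segments and preserves crossings. Restricted to the region of hyperbolic radius $1$, it is bi-Lipschitz with constant $\bigO{1}$. We therefore draw everything in the chart of $o'$. There $uv$ is a segment $s$ with $|s|_\Euc \leq e\,|uv|_\Hyp$. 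Any edge $e$ of $G^{o'}_j$ with $|e|_\Hyp \geq |uv|_\Hyp$ has $|e|_\Euc \geq e^{-1}|e|_\Hyp \geq e^{-2}|s|_\Euc$. Step~1 with $c=e^{-2}$ then gives the $\bigO{1/\sqrt\eps}$ bound. Note that the bound does not depend on $i$, so crossings within a single graph are covered as well.

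\emph{Main obstacle.} The main difficulty is Step~1. The original Lemma~14 compares edges of the same length scale within a single chart. Here we must check that the constant factors introduced by changing charts do not break the $\sqrt\eps$ spacing argument. If the spacing argument of \cite{planesteinernew} uses exact parallelism or an exact length hierarchy, I would fall back on a different count. Among edges of length at least $c|s|$ crossing $s$, any two that are nearly parallel must be $\Omega(\sqrt\eps |s|)$ apart, because the graph is plane and its parallel edges come from a grid of spacing proportional to $\sqrt\eps$ times the length scale. Packing them into a window of width $\bigO{|s|}$ then gives the bound. Lemma~\ref{lem:angles} is needed only to confirm that near-parallelism, an angle of $\bigO{\sqrt\eps}$, survives the change of charts up to a constant factor.
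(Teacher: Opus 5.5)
Your proposal is essentially the paper's proof. Both pass to the Klein chart centred at $o'$, where $G^{o'}_j$ is literally the Euclidean construction, and use Lemma~\ref{lem:curvelength} to get $|uv|_\Euc \le |uv|_\Hyp$ and $|f|_\Euc \ge |f|_\Hyp/e$ for every edge $f$ of $G^{o'}_j$. Both then invoke the argument of Lemma~14 of \cite{planesteinernew} for an arbitrary segment. The first inequality holds in any Klein chart, so you do not need your claim that all points lie within $\frac12$ of $o'$, which is false in general. The one real difference is how the constant-factor length mismatch is handled: the paper cuts $uv$ into $\lceil e\rceil = 3$ sub-segments, each Euclidean-shorter than every relevant edge, and applies Lemma~14 verbatim to each piece. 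This gives your ``robust'' Step~1 in one line, so the obstacle you flag and the appeal to Lemma~\ref{lem:angles} are unnecessary.
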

\begin{proof}
    Consider the Klein disk with its origin set to $o'$.
    Let $E'$ be the edges of $G^{o'}_j$ that are at least as long as $uv$.
    Since any edge of $G^{o'}_j$ stays within distance $\frac{1}{2}$ from $o'$, by Lemma~\ref{lem:curvelength} their Euclidean length is at least $1/e$ times their hyperbolic length.
    Meanwhile, the Euclidean length of $uv$ is at most its hyperbolic length.
    Thus, we can split $uv$ into $\lceil e \rceil = 3$ sub-edges whose Euclidean length is shorter than the Euclidean length of any edge in $E'$.
    For each sub-edge~$\sigma$, we can now follow the argument of Lemma~14 from \cite{planesteinernew}.
    Note that all arguments are based on the length of $\sigma$ and on $G^{o'}_j$, which appears exactly as in the Euclidean construction due to $o'$ being the origin.
    Thus, $G^{o'}_j$ contains at most $\bigO{1/\sqrt{\eps}}$ edges that both intersect $\sigma$ and have Euclidean length at least as long as $\sigma$.
    Multiplying by $3$ now gives a bound on the number of edges from $G^{o'}_j$ that both intersect $uv$ and have hyperbolic length at least as long as $uv$.
\end{proof}

Putting Lemma~\ref{lem:conerestrict} and Lemma~\ref{lem:coneintersect} together now gives the following result.

\thmconstdiam*
\begin{proof}
    We set $o$ to be an arbitrary point from $P$ and then construct $G^o$ with $\eps' = \eps \cdot 10^{-5}$.
    First, take $a,b \in P$ arbitrarily so that we can show that $G^o$ is a Steiner $(1+\eps)$-spanner.
    We use that $G^o$ is a Euclidean cone-restricted Steiner $(1+\eps)$-spanner, meaning $G^o$ contains a path $a=p_0, p_1, \dots, p_m=b$ such that we have Euclidean angle $\angle(ab, v_{i-1}v_i) \leq \sqrt{\eps'}$ for all $i = 1, \dots, m$.
    Recall that $a$ and $b$ both have distance at most $\frac{1}{4}$ from $o$, making the maximum Euclidean distance of any point $p_i$ from $ab$ at most $\frac{1}{2}$ by Lemma~\ref{lem:curvelength}.
    Now, also, the maximum Euclidean distance for any point $p_i$ to $ab$ is $\frac{1}{4} \sin\sqrt{\eps'} \leq \frac{1}{4} \sqrt{\eps'}$, meaning that the hyperbolic distance is at most $e/4 \cdot \sqrt{\eps'} < \sqrt{\eps'}$ by Lemma~\ref{lem:curvelength} (using that $p_i$ lies within distance $\frac{1}{2}$ of $o$).
    Lemma~\ref{lem:conerestrict} now proves that $G^o$ is indeed a Steiner $(1+\eps)$-spanner.
    
    Next, for each $G_i^o$, each of its $\bigO{n}$ edges intersects at most $\bigO{1/\eps}$ longer edges by Lemma~\ref{lem:coneintersect}.
    Thus, when overlaying $\bigO{1/\sqrt{\eps}}$ graphs $G_i^o$ and introducing Steiner points on the intersections, each of the in total $\bigO{n / \sqrt{\eps}}$ edges is replaced by at most $\bigO{1/\eps}$ edges, giving $\bigO{n / \eps^{3/2}}$ edges in total.
    Since we use the same construction as in \cite{planesteinernew}, we match their running time.
    The statement for multiple plane Steiner spanners follows by the same reasoning.
    If we want a non-plane Steiner spanner, then we simply take the union of $\bigO{1/\sqrt{\eps}}$ graphs $G_i^o$ with $\bigO{n}$ edges each.
    The plane Steiner spanner did not rely on paths that use edges from multiple graphs $G_i^o$, so this is still a Steiner spanner by the same arguments.
\end{proof}

\section{Conclusion}
We have shown that sparse crossing and non-crossing Steiner spanners can be constructed efficiently on hyperbolic surfaces, and that sparse non-crossing Steiner spanners exist on the hyperbolic plane. This, however, leaves open other quality measures, such as lightness, which we did not try to optimize: the lightness obtained with black-box techniques~\cite{LeS23} does not yet reach the lower bound of $\bigOm{\frac1\eps}$~\cite{BhoreT22}, so this remains an open question.

There are several other fascinating questions that remain to be explored.
\begin{itemize}
    \item On hyperbolic surfaces, the thick-thin decomposition lets us find the systole (the shortest non-contractible loop) when its length is at most $2\mu_M$, but otherwise efficiently finding it remains an open question. Is there a polynomial algorithm?
    \item What is the fine-grained complexity of computing the shortest path between a pair of points on a hyperbolic surface; is it polynomial or exponential in $g$? What about the complexity of the thick-thin decomposition, or of detecting if there is a thin part in a surface?
    \item In our Steiner spanners on surfaces the dependence on $n$ and $\eps$ is optimal, but no lower bound exists for the dependence on $g$, and for the non-crossing Steiner spanner in the hyperbolic plane a construction that matches the Euclidean bounds seems reasonable to expect. Can those be achieved?
    \item Can we generalize some of these ideas to general smooth surfaces, as well as to polyhedral surfaces (sometimes called \emph{portalgons})? Can we construct sparse and light spanners on any surface efficiently?
\end{itemize}

\section*{Acknowledgements}

ChatGPT 5.5 was used to search for related literature and to lightly edit the introduction to improve its flow in a collaborative fashion.

\bibliography{bibliography}
\clearpage
\appendix

\end{document}